\newcommand\numberthis{\addtocounter{equation}{1}\tag{\theequation}}
\newcommand{\eqtext}[1]{\ \mathrm{#1} \ }
\newcommand{\myspace}{\\[0.2in]}
\newcommand{\di}[1]{\, \mathrm{d}#1}
\newcommand{\dd}{\mathrm{d}}
\newcommand{\ee}{\mathrm{e}}
\newcommand{\ii}{\mathrm{i}}
\newcommand{\ve}{\varepsilon}
\newcommand{\fhat}{\widehat{f}}
\newcommand{\Vhat}{\widehat{V}}
\DeclareMathOperator{\real}{Re}
\DeclareMathOperator{\imag}{Im}
\DeclareMathOperator*{\argmin}{argmin}
\DeclareMathOperator{\sinc}{sinc}
\newtheorem{theorem}{Theorem}
\newtheorem{lemma}{Lemma}
\newtheorem{remark}{Remark}
\newtheorem{conjecture}{Conjecture}
\numberwithin{equation}{section}
\title{Anomalous localized resonance phenomena in the nonmagnetic,
finite-frequency regime}
\author{Daniel Onofrei\footnotemark[1]%
\and Andrew E.\ Thaler\footnotemark[2]}
\begin{document}
\maketitle

\renewcommand{\thefootnote}{\fnsymbol{footnote}}

\footnotetext[1]{Department of Mathematics, University of Houston,
Houston, TX 77204}
\footnotetext[2]{Institute for Mathematics and its Applications,
    University of Minnesota, College of Science and Engineering,
    Minneapolis, MN 55455.  The research of AET was supported in part by
    the Institute for Mathematics and its Applications with funds
    provided by the National Science Foundation through NSF Award
0931945 and in part by AFOSR under the 2013 YIP Award FA9550-13-1-0078.}


\begin{abstract}

The phenomenon of anomalous localized resonance (ALR) is observed at the
interface between materials with positive and negative material
parameters and is characterized by the fact that, when a given source is
placed near the interface, the electric and magnetic fields start to
have very fast and large oscillations around the interface as the
absorption in the materials becomes very small while they remain smooth
and regular away from the interface. 

In this paper, we discuss the phenomenon of anomalous localized
resonance (ALR) in the context of an infinite slab of homogeneous,
nonmagnetic material ($\mu=1$) with permittivity $\epsilon_s=-1-\ii\delta$ for
some small loss $\delta \ll 1$ surrounded by positive, nonmagnetic,
homogeneous media. We explicitly characterize the limit value of the
product between frequency and the width of slab beyond which the ALR
phenomenon does not occur and analyze the situation when the phenomenon
is observed. In addition, we also construct sources for which the ALR
phenomenon never appears. 
\end{abstract}

\section{Introduction}

In the following, we discuss the anomalous localized resonance phenomenon
(ALR) appearing at the interface between materials with positive and
negative material parameters in the finite frequency regime.  We
consider the particular slab geometry described by (see
Figure~\ref{fig:slab})
\begin{equation}\label{eqn:CSM}
    \mathcal{C} \equiv \{(x,y) \in \mathbb{R}^2 : x < 0\}; \quad
    \mathcal{S} \equiv \{(x,y) \in \mathbb{R}^2 : 0 < x < a\}; \quad
    \mathcal{M} \equiv \{(x,y) \in \mathbb{R}^2 : x > a\};
\end{equation}
where $a>0$ denotes the width of the slab and the sets $\mathcal{C}$,
$\mathcal{S}$, $\mathcal{M}$ represent the regions to the left of the
slab, within the slab, and to the right of the slab, respectively.
We also define
\[
    d_0 \equiv \min \{x : (x,y) \in \mathrm{supp} f\} \eqtext{and}
    d_1 \equiv \max \{x : (x,y) \in \mathrm{supp} f\}.
\]
\begin{figure}[!hbpp]
    \centering
    \includegraphics[width=0.75\textwidth]{./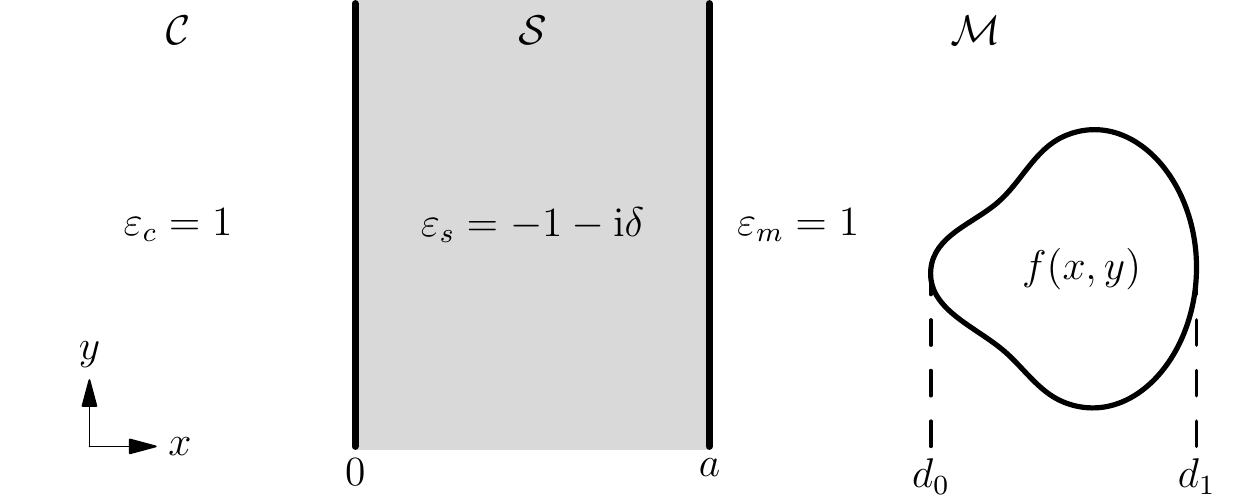}
    \caption{\emph{In this figure, we illustrate the geometry of the
    problem we consider in this paper.}}
    \label{fig:slab}
\end{figure}
In this geometry, we assume that all materials are homogeneous and
nonmagnetic (i.e., with magnetic permeability $\mu=1$); the electrical
permittivity is given by
\begin{equation}\label{1}
     \ve \equiv
     \begin{cases}
         1 &\text{for } x < 0, \\
         -1 - \ii\delta &\text{for } 0 < x < a, \\
         1 &\text{for } x > a 
     \end{cases}
 \end{equation}
for some $\delta \in (0,1)$. We consider the following partial
differential equation (PDE) in 2D:
\begin{equation}\label{eqn:finite_freq}
    \nabla\cdot\left(\frac{1}{\ve}\nabla V\right) + (k_0^2+i\zeta) V = 
    -f \quad \text{in } \mathbb{R}^2,
\end{equation}
where $\zeta\geq0$, $k_0>0$, $f \in L^2(\mathcal{M})$ with compact
support in $\mathcal{M}$, and $\ve$ is given in \eqref{1} (see
\S~\ref{subsec:Maxwell_to_Helmholtz} for a derivation of
\eqref{eqn:finite_freq} from the Maxwell equations).

For convenience, we define 
\begin{equation}\label{eqn:Vcsm_def}
    V_c \equiv V|_{\mathcal{C}}; \quad 
    V_s \equiv V|_{\mathcal{S}}; \quad
    V_m \equiv V|_{\mathcal{M}}.
\end{equation}
We assume the solution $V$ also satisfies the following continuity
conditions across the boundaries at $x = 0$ and $x = a$ for almost every
$y \in \mathbb{R}$:
\begin{equation}\label{eqn:continuity_conditions}
    \begin{cases}
        V_c(0,y) = V_s(0,y); \quad &
        \dfrac{\partial V_c}{\partial x}(0,y) = 
        \dfrac{1}{-1-\ii\delta}\dfrac{\partial V_s}{\partial x}(0,y), 
        \myspace{}
        V_s(a,y) = V_m(a,y); \quad &
        \dfrac{1}{-1-\ii\delta}\dfrac{\partial V_s}{\partial x}(a,y) = 
        \dfrac{\partial V_m}{\partial x}(a,y).  
     \end{cases}
\end{equation}
In what follows we assume that the parameters and data are such
that problem \eqref{eqn:finite_freq}, \eqref{eqn:continuity_conditions}
admits a unique solution $V\in L^2_{loc}(\mathbb{R}^2)$ with
$V(x,\cdot)\in H^{1}(\mathbb{R})$ and $\frac{\partial V}{\partial
x}(x,\cdot)\in  L^{2}(\mathbb{R})$ for almost every $x\in \mathbb{R}$. 
\begin{remark}
    Note that in the case when $\zeta>0$, the unique solution of the
    problem will have the property that $V(x,y)\rightarrow 0$ as
    $|x|\rightarrow\infty$ for almost every $y\in\mathbb{R}$; for
    $\zeta\ll 1$, this solution will be well approximated by the
    solution in the case $\zeta=0$. 
\end{remark}
We say anomalous localized resonance (ALR) occurs if the following two
properties hold as $\delta \rightarrow 0^+$ \cite{Milton:2005:PSQ}:
\begin{enumerate}
    \item $|V| \rightarrow \infty$ in certain localized regions with
          boundaries that are not defined by discontinuities in the
          relative permittivity and 
    \item $V$ approaches a smooth limit outside these localized regions. 
\end{enumerate}    
In \cite{Milton:2005:PSQ}, Milton, Nicorovici, McPhedran, and Podolskiy
showed that if $f$ is a dipole and $\varepsilon_c = \varepsilon_m = 1$,
then ALR occurs if $a < d_0 < 2a$, where $d_0$ is the location of the
dipole. In this case there are two locally resonant strips --- one
centered on each face of the slab.  As the loss parameter (represented
by $\delta$) tends to zero, the potential diverges and oscillates wildly
in these resonant regions.  Outside these regions the potential
converges to a smooth function.  Also, if the source is far enough away
from the slab, i.e., if $d_0 > 2a$, then there is no resonance and again
the potential converges to a smooth function.
    
Applications of ALR to superlensing were first discussed by Nicorovici,
McPhedran, and Milton in \cite{Nicorovici:1994:ODP} and were analyzed in
more depth in \cite{Milton:2005:PSQ} (see also the works by Yan, Yan,
and Qiu \cite{Yan:2008:CSC}, Bergman \cite{Bergman:2014:PIP}, Nguyen
\cite{Nguyen:2013:SUC}, Pendry \cite{Pendry:2000:NRM}, and Pendry and
Ramakrishna \cite{Pendry:2002:NFL} for a description of superlensing
phenomena).
    
Applications of ALR to cloaking in the quasistatic regime were first
analyzed Milton and Nicorovici \cite{Milton:2006:CEA}; they showed that
if $\varepsilon_c = \varepsilon_m = 1$ and a fixed field is applied to
the system (e.g., a uniform field at infinity), then a polarizable
dipole located in the region $a < d_0 < 3a/2$ causes anomalous localized
resonance and is cloaked in the limit $\delta \rightarrow 0^+$. Cloaking
due to anomalous localized resonance (CALR) in the quasistatic regime
was further discussed in \cite{Ammari:2013:ALR, Kohn:2012:VPC,
Ammari:2013:STN, Ammari:2013:STNII, Bouchitte:2010:CSO, Bruno:2007:SCS,
Nicorovici:2009:CPR, Xiao:2012:TEC, Nguyen:2015:CVA2}. CALR in the
long-time limit regime was discussed in \cite{Milton:2007:OPL,
Xiao:2012:TEC} (see also \cite{Yaghjian:2006:PWS}).
    
In \cite{Nicorovici:2008:FWC}, Nicorovici, McPhedran, Enoch, and Tayeb
studied CALR for the circular cylindrical superlens in the
finite-frequency case; they showed that for small values of $\delta$
the cloaking device (the superlens) can effectively cloak a tiny
cylindrical inclusion located within the cloaking region but that the
superlens does not necessarily cloak itself --- they deemed this
phenomenon the ``ostrich effect.'' The finite-frequency case was further
discussed by Kettunen, Lassas, and Ola \cite{Kettunen:2015:OAE} and
Nguyen \cite{Nguyen:2015:CVA}.

In the present report we prove, analytically and numerically, the
existence of a limit value $\gamma_*$, such that for $k_0$ with
$k_0a>\gamma_*$, ALR does not occur regardless of the position of the
source with respect to the slab interface. Under suitable conditions on
the source, we present numerical evidence for the occurrence of ALR in
the regime $k_0a<\gamma_*$ when the source is close enough to the
material interface, and we discuss some characteristics of the
phenomenon in this frequency regime as well. In the end we present two
examples of sources $f$ which do not generate ALR regardless of the
frequency regime and their relative position with respect to the
material interface.

The paper is organized as follows: in \S~\ref{Fourier} we present
highlights of the derivation of the unique solution in the Fourier
domain while in \S~\ref{Energy} we describe the energy around the right
interface of the slab. In \S~\ref{sec:large_gamma}, we show the absence
of ALR phenomena for large enough values of $k_0a$ while in
\S~\ref{subsec:shielding} we present an interesting side effect of the
nonmagnetic case, namely the shielding effect of the slab which behaves
as an almost perfect reflector. Next, for suitable conditions on the
source, in \S~\ref{subsec:pd_small_gamma} we present numerical evidence
for the ALR phenomenon in the case of small enough values of $k_0a$. In
\S~\ref{subsec:ALR_busting}, we construct two examples of possible
sources for which there is no ALR phenomenon regardless of the range of
$k_0a$ or the relative position of the source with respect to the slab
interface. The Appendix contains the technical proofs and derivations
which where not included in the main text. 
        
%
%
%



\subsection{Solution in Fourier domain}

\label{Fourier}

Due to our wellposedness assumption it follows that our problem will
admit a unique solution after applying the Fourier transform with
respect to the $ y $ variable. Recall that, for a given function
$h(x,\cdot) \in L^2(\mathbb{R})$ for some $x \in \mathbb{R}$, the
Fourier transform of $h$ with respect to $y$ is
\begin{equation}\label{eqn:Fourier_def}
    \widehat{h}(x,q) = \int_{-\infty}^{\infty} h(x,y)\ee^{-\ii q y}
    \di{y}.
\end{equation}

We will study the Fourier domain solution in each of the relevant sub-domains defined in \eqref{eqn:CSM}.


\subsubsection{The solution in $\mathcal{C}$}

In the region $\mathcal{C}$, the relevant equation is
\begin{equation}\label{eqn:core_equation}
    \frac{\partial^2 V_c}{\partial x^2} + 
    \frac{\partial^2 V_c}{\partial y^2} + k_0^2 V_c = 0.
\end{equation}
Taking the Fourier transform of \eqref{eqn:core_equation} with respect
to $y$, we find that $\Vhat_c(x,q)$ satisfies
\begin{equation}\label{eqn:core_equation_hat}
    \frac{\partial^2 \Vhat_c}{\partial x^2} - k_0^2\nu_c^2\Vhat_c = 0,
    \eqtext{where}
    \nu_c^2 \equiv \frac{q^2}{k_0^2}-1.
\end{equation}
\begin{remark}\label{rem:principal_root}
    Here and throughout the paper, we take the principal square root of
    complex numbers; that is, for a complex number $z = z' + \ii z'' =
    |z|\ee^{\ii\theta}$ where $\theta \in (-\pi,\pi]$, we take
    \[
        \sqrt{z} = |z|^{1/2}\ee^{\ii\theta/2},
    \]
    where $\theta/2 \in (-\pi/2,\pi/2]$.  In particular, this implies
    $\real\sqrt{z} \ge 0$.
\end{remark}
Remark~\ref{rem:principal_root} implies
\begin{equation}\label{eqn:nuc}
    \nu_c = 
    \begin{cases}
        \ii\sqrt{|q^2/k_0^2 - 1|} &\text{if } q^2/k_0^2 < 1, \\
        \sqrt{q^2/k_0^2 - 1} &\text{if } q^2/k_0^2 \ge 1.
    \end{cases}
\end{equation}
Then the general solution to \eqref{eqn:core_equation_hat} is
\begin{equation}\label{eqn:Vc_hat_gen}
    \Vhat_c(x,q) = A_q\ee^{k_0\nu_c x} + B_q\ee^{-k_0\nu_c x}
\end{equation}
for coefficients $A_q$ and $B_q$ that are independent of $x$.  

If $q^2/k_0^2 < 1$, then $\nu_c$ is purely imaginary.  Because $V_c$
should be outgoing (i.e., leftgoing) as $x\rightarrow -\infty$ and we
are considering $\ee^{\ii\omega t}$ time dependence (see
\S~\ref{subsec:Maxwell_to_Helmholtz}), we should have
\[
    \Vhat_c \sim \ee^{\ii k_0\sqrt{|q^2/k_0^2-1|}x} \eqtext{as} x
    \rightarrow -\infty.
\]
From \eqref{eqn:nuc} and \eqref{eqn:Vc_hat_gen}, we see that we can
ensure this by taking $B_q = 0$.

On the other hand, if $q^2/k_0^2 > 1$, then $\nu_c > 0$.  Thus we take
$B_q = 0$ in this case to ensure that $\Vhat_c(x,q) \rightarrow 0$ as $x
\rightarrow \infty$.  Finally, without loss of generality we may also
take $B_q = 0$ for $q^2/k_0^2 = 1$.  Therefore, 
\begin{equation}\label{eqn:Vc_hat}
    \Vhat_c(x,q) = A_q\ee^{k_0\nu_c x}.
\end{equation}


\subsubsection{The solution in $\mathcal{S}$}

In the region $\mathcal{S}$, the Fourier transform of $V_s$ satisfies
\begin{equation}\label{eqn:shell_equation_hat}
    \frac{\partial^2 \Vhat_s}{\partial x^2} - k_0^2\nu_s^2\Vhat_s = 0,
    \eqtext{where}
    \nu_s^2 \equiv \left(\frac{q^2}{k_0^2} + 1\right) + \ii\delta.
\end{equation}
The general solution is
\[
    \Vhat_s(x,q) = C_q\ee^{k_0\nu_s x} + D_q\ee^{-k_0\nu_s x};
\]
the coefficients $C_q$ and $D_q$ may be found by using the
continuity conditions across $x = 0$ from
\eqref{eqn:continuity_conditions}.  In particular, we find
\begin{equation}\label{eqn:Vs_hat}
    \Vhat_s(x,q) = A_q\left(\frac{\alpha+1}{2\alpha}\right)
        \ee^{k_0\nu_s x}\left(1 + R\ee^{-2k_0\nu_s x}\right),
\end{equation}
where
\begin{equation}\label{eqn:alpha_R}
    \alpha \equiv \frac{\nu_s}{(-1-\ii\delta)\nu_c}
    \eqtext{and}
    R \equiv \frac{\alpha-1}{\alpha+1} = 
        \frac{\nu_s + (1+\ii\delta)\nu_c}{\nu_s - (1+\ii\delta)\nu_c}.
\end{equation}

Although one can observe that $\alpha$ degenerates for $q^2=k_0^2$ we will see in \eqref{3}, \eqref{4} that 
$\displaystyle\frac{A_q}{\alpha}$ is well defined in the limit when $q^2=k_0^2$.

\subsubsection{The solution in $\mathcal{M}$}

In the region $\mathcal{M}$, the Fourier transform of $V_m$ satisfies
\begin{equation}\label{eqn:matrix_equation_hat}
    \frac{\partial^2 \Vhat_m}{\partial x^2} - k_0^2\nu_m^2\Vhat_m =
    -\fhat(x,q),
    \eqtext{where}
    \nu_m^2 \equiv \frac{q^2}{k_0^2} - 1.
\end{equation}
If $q^2/k_0^2 \ne 1$, then the general solution to
\eqref{eqn:matrix_equation_hat} can be found using the Laplace transform
and the continuity conditions across $x = a$ from
\eqref{eqn:continuity_conditions} \cite{Thaler:2014:BVI,
Meklachi:2016:SAL}; we have
\begin{equation}\label{eqn:matrix_hat_general}
    \begin{aligned}
        \Vhat_m(x,q) = 
        &\frac{\ee^{k_0\nu_m x}}{2}
            \left[A_q\ee^{-k_0\nu_m a}\left(\psi_q^+ +
            \frac{\psi_q^-}{\nu_m}\right) -
            \frac{1}{k_0\nu_m}\int_{d_0}^x \ee^{-k_0\nu_m s}\fhat(s,q) 
            \di{s}\right] \myspace{}
        + &\frac{\ee^{-k_0\nu_m x}}{2}
            \left[A_q\ee^{k_0\nu_m a}\left(\psi_q^+ -
            \frac{\psi_q^-}{\nu_m}\right) +
            \frac{1}{k_0\nu_m}\int_{d_0}^x \ee^{k_0\nu_m s}\fhat(s,q) 
            \di{s}\right],
    \end{aligned}
\end{equation}
where
\begin{equation}\label{eqn:psi}
    \begin{aligned}
        \psi_q^+ 
            &\equiv \frac{1}{A_q}\Vhat_s(a,q) 
            = \left(\frac{\alpha+1}{2\alpha}\right)
                \ee^{k_0\nu_sa}\left(1+R\ee^{-2k_0\nu_sa}\right); \\
        \psi_q^- 
            &\equiv \frac{1}{k_0A_q}\left(\frac{1}{(-1-\ii\delta)}
                \frac{\partial\Vhat_s}{\partial x}(a,q) \right)
            = \left(\frac{\nu_s}{-1-\ii\delta}\right) 
              \left(\frac{\alpha+1}{2\alpha}\right)
              \ee^{k_0\nu_sa}\left(1-R\ee^{-2k_0\nu_sa}\right).
    \end{aligned}
\end{equation}

If $q^2/k_0^2 < 1$, then $\nu_m$ is purely imaginary.  Because $V_m$
should be outgoing (i.e., rightgoing) as $x\rightarrow \infty$ and we
are considering $\ee^{\ii\omega t}$ time dependence, we should have 
\[
    \Vhat_m \sim \ee^{-\ii k_0\sqrt{|q^2/k_0^2-1|} x} \eqtext{as} x
    \rightarrow \infty.
\]
To ensure this, we take the first expression in brackets in
\eqref{eqn:matrix_hat_general} to be zero and find that
\begin{equation}\label{eqn:Aq}
    A_q \equiv \frac{I_q\ee^{k_0\nu_m a}}{k_0\left(\nu_m\psi_q^+ +
    \psi_q^-\right)},
\end{equation}
where
\begin{equation}\label{eqn:Iq}
    I_q \equiv \int_{d_0}^{d_1} \fhat(s,q)\ee^{-k_0\nu_m s} \di{s}.
\end{equation}

If $q^2/k_0^2 > 1$, then $\nu_m > 0$; to ensure that $\Vhat_m(x,q)
\rightarrow 0$ as $x\rightarrow \infty$, we again take $A_q$ as in
\eqref{eqn:Aq}.  

Finally, if $q^2/k_0^2 = 1$, then we can use the Laplace transform and
the continuity conditions across $x = a$ to find that
\begin{equation}
\label{3}
    \Vhat_m(x,\pm k_0)  = A_{\pm k_0}(\psi_{\pm k_0}^+ - a\psi_{\pm k_0}^-) + \int_{d_0}^x s\fhat(s,{\pm k_0})
    \di{s} + x\left[k_0A_{\pm k_0}\psi_{\pm k_0}^- - \int_{d_0}^x \fhat(s,{\pm k_0})
    \di{s}\right].
\end{equation}
where 

\begin{equation}\label{eqn:psi-k0}
    \begin{aligned}
        \psi_{\pm k_0}^+ 
            &
            = \left(1+\ee^{-2k_0\nu_sa}\right); \\
        \psi_{\pm k_0}^- 
            &
            = \left(\frac{\nu_s}{-1-\ii\delta}\right) 
              \left(1-\ee^{-2k_0\nu_sa}\right).
    \end{aligned}
\end{equation}
with $\nu_s$ defined at \eqref{eqn:shell_equation_hat} is computed for $q=\pm k_0$, and where again we take $A_{\pm k_0}$ so that we ensure $\Vhat_m$ is outgoing as $x\rightarrow \infty$; in this case
\begin{equation}
\label{4}
    A_{\pm k_0} = \frac{1}{k_0\psi_{\pm k_0}^-}\int_{d_0}^{d_1} \fhat(s,{\pm k_0})\di{s}.
    \end{equation}


\subsection{Energy discussion}

\label{Energy}

For $0 < \xi \le a$, we define the strip
\begin{equation}\label{eqn:S_xi}
    S_{\xi} \equiv \{(x,y) \in \mathbb{R}^2 : a-\xi < x < a\}.
\end{equation}
Then, due to the Plancherel
Theorem and properties of Fourier transforms, we have
\begin{align*}
    \|\nabla V\|^2_{L^2(S_{\xi})} 
    &= \int_{a-\xi}^a \int_{-\infty}^{\infty} |\nabla V_s(x,y)|^2
        \di{y}\di{x} \\
    &= \int_{a-\xi}^a \int_{-\infty}^{\infty} 
        \left|\frac{\partial^2 V_s}{\partial x^2}\right|^2 + 
        \left|\frac{\partial^2 V_s}{\partial y^2}\right|^2 \di{y} \di{x}
        \\
    &= \frac{1}{2\pi}\int_{a-\xi}^a \int_{-\infty}^{\infty}
        \left|\frac{\partial^2 \Vhat_s}{\partial x^2}\right|^2 +
        |q|^2\left|\Vhat_s\right|^2 \di{q} \di{x}.
\end{align*}
Using \eqref{eqn:Vs_hat}--\eqref{eqn:alpha_R} and
\eqref{eqn:psi}--\eqref{eqn:Iq} in this expression, switching the order
of integration, computing the integral with respect to $x$, using the
fact that $|\nabla V_s|^2$ is an even function of $q$ if $f$ is
real-valued, making the change of variables $p = q/k_0$, and simplifying
the resulting expression, we obtain
\begin{equation}\label{eqn:pd}
    \begin{aligned}
        E_{\delta}(\xi) 
        &\equiv \|\nabla V\|^2_{L^2(S_{\xi})} \\
        &= \frac{1+\delta^2}{\pi}\int_{0}^{\infty}
            \frac{|I_p|^2\ee^{2k_0\nu_m'a}|\nu_s-(1+\ii\delta)\nu_m|^2}
            {|g_{\delta}(p;\gamma)|^2}\cdot\\
        &\qquad\left\{
            \left(|\nu_s|^2 + |p|^2\right)
            \left[\left(\frac{1-\ee^{-2k_0\nu_s'\xi}}{2\nu_s'}\right)
            + |R|^2\ee^{-4k_0\nu_s'\left(a-\frac{\xi}{2}\right)}
            \left(\frac{1-\ee^{-2k_0\nu_s'\xi}}{2\nu_s'}\right)\right]
            \right.\\
        &\qquad\left.
            + 2\left(-|\nu_s|^2 + |p|^2\right)
            \ee^{-2k_0\nu_s'a}\imag
            \left[\overline{R}\ee^{2\ii k_0\nu_s''a}
            \left(\frac{1-\ee^{-2\ii k_0\nu_s''\xi}}{2\nu_s''}\right)
            \right]\right\}\di{p},
    \end{aligned}
\end{equation}
where
\begin{equation}\label{eqn:g}
    g_{\delta}(p;\gamma) \equiv
    \left[\nu_s-\left(1+\ii\delta\right)\nu_m\right]^2 - \left[\nu_s +
    \left(1+\ii\delta\right)\nu_m\right]^2 \ee^{-2\gamma\nu_s},
\end{equation}
we have used that fact that $\nu_c = \nu_m$ (see \eqref{eqn:nuc} and
\eqref{eqn:matrix_equation_hat}), and we have replaced $q$ by $k_0 p$
throughout the integrand (e.g., we have $\nu_m = \sqrt{p^2-1}$).

Similarly, we have
\begin{equation}\label{eqn:v_int}
    \begin{aligned}
        \|V\|^2_{L^2(S_{\xi})}
        &= \frac{1+\delta^2}{\pi}\int_{0}^{\infty}
            \frac{|I_p|^2\ee^{2k_0\nu_m'a}|\nu_s-(1+\ii\delta)\nu_m|^2}
            {|g_{\delta}(p;\gamma)|^2}\cdot\\
        &\qquad\left\{
            \left[\left(\frac{1-\ee^{-2k_0\nu_s'\xi}}{2\nu_s'}\right)
            + |R|^2\ee^{-4k_0\nu_s'\left(a-\frac{\xi}{2}\right)}
            \left(\frac{1-\ee^{-2k_0\nu_s'\xi}}{2\nu_s'}\right)\right]
            \right.\\
        &\qquad\left.
            + 2
            \ee^{-2k_0\nu_s'a}\imag
            \left[\overline{R}\ee^{2\ii k_0\nu_s''a}
            \left(\frac{1-\ee^{-2\ii k_0\nu_s''\xi}}{2\nu_s''}\right)
            \right]\right\}\di{p},
    \end{aligned}
\end{equation}
\begin{remark}\label{rem:V_and_grad_V}
    One of the quantities we are most interested in studying in this
    paper is
    \[
        \|V\|_{H^1(S_{\xi})}^2 = \|V\|_{L^2(S_{\xi})}^2 + 
            \|\nabla V\|_{L^2(S_{\xi})}^2.
    \]
    Due to the similarity between the expressions in \eqref{eqn:pd} and
    \eqref{eqn:v_int}, without loss of generality we focus on $\|\nabla
    V\|^2_{L^2(S_{\xi})}$. In particular, our arguments 
    depend heavily on the exponential terms in the integrands in
    \eqref{eqn:pd} and \eqref{eqn:v_int}, so the additional terms
    $|\nu_s|^2$ and $|q|^2$ in \eqref{eqn:pd} will have no bearing on
    our results.
\end{remark}


\section{Properties of $g_{\delta}(p;\gamma)$}

In this section, we collect some essential properties about the
denominator $|g_{\delta}|^2$ in \eqref{eqn:pd}.  As we will see, the
parameter 
\begin{equation}\label{eqn:gamma}
    \gamma \equiv k_0 a
\end{equation}
plays a crucial role in the behavior of the solution $V$ and
$E_{\delta}(a)$ in the limit $\delta \rightarrow 0^+$.
\begin{lemma}\label{lem:g_0}
    Suppose $g_{\delta}$ is defined as in \eqref{eqn:g}.  Then for $p
    \ge 0$ and $\gamma > 0$ we have
    \begin{equation}\label{eqn:g_0}
        \lim_{\delta\rightarrow 0^+} g_{\delta}(p;\gamma) =
        g_0(p;\gamma) 
        \equiv 
        \left(\sqrt{p^2+1}-\sqrt{p^2-1}\right)^2 -
        \left(\sqrt{p^2+1}+\sqrt{p^2-1}\right)^2
        \ee^{-2\gamma\sqrt{p^2+1}}.
    \end{equation}
\end{lemma}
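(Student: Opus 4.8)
The claim is a termwise continuity statement, so the plan is to identify the factors in \eqref{eqn:g} that depend on $\delta$, check that each of them converges as $\delta \to 0^+$, and then invoke continuity of the algebraic and exponential operations to conclude; everything is done with $p \ge 0$ and $\gamma > 0$ fixed.

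The only $\delta$-dependent quantities in \eqref{eqn:g} are the factor $1 + \ii\delta$ and $\nu_s = \sqrt{(p^2+1)+\ii\delta}$ (recall from \eqref{eqn:shell_equation_hat} that $\nu_s^2 = (p^2+1)+\ii\delta$), whereas $\nu_m = \sqrt{p^2-1}$ does not depend on $\delta$. We have $1 + \ii\delta \to 1$ trivially. For $\nu_s$, the key point --- and essentially the only thing in the argument that requires attention --- is that $p^2 + 1 \ge 1$, so the argument $(p^2+1)+\ii\delta$ of the square root lies, for all small $\delta \ge 0$, in a fixed neighborhood of the positive real number $p^2+1$ that is disjoint from the branch cut $(-\infty,0]$ of the principal square root (Remark~\ref{rem:principal_root}). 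Since the principal square root is continuous on $\mathbb{C}\setminus(-\infty,0]$, this gives $\nu_s \to \sqrt{p^2+1}$ as $\delta \to 0^+$, and hence $\ee^{-2\gamma\nu_s} \to \ee^{-2\gamma\sqrt{p^2+1}}$ by continuity of the exponential.

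It then remains only to combine these limits using the continuity of addition, subtraction, multiplication, and squaring on $\mathbb{C}$, passing to the limit inside \eqref{eqn:g} to obtain
\[
    \lim_{\delta\to 0^+} g_\delta(p;\gamma) =
    \left(\sqrt{p^2+1}-\sqrt{p^2-1}\right)^2 -
    \left(\sqrt{p^2+1}+\sqrt{p^2-1}\right)^2 \ee^{-2\gamma\sqrt{p^2+1}},
\]
which is exactly $g_0(p;\gamma)$ as defined in \eqref{eqn:g_0}. I do not anticipate any genuine obstacle here: the argument is purely a continuity computation, and the only place where one could slip is in forgetting to verify that $\nu_s$ stays away from the branch cut, which is settled by the inequality $p^2+1 \ge 1$. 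If it turns out to be convenient later, the same reasoning shows that the convergence is in fact uniform for $p$ and $\gamma$ in bounded sets, but the lemma as stated only needs the pointwise limit.
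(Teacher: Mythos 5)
Your proof is correct and matches the paper's approach, which is stated there in a single line ("the result follows from direct calculations since $g_\delta$ is a continuous function of $\delta$"); you have simply spelled out that continuity argument, correctly noting that the only point needing care is that $(p^2+1)+\ii\delta$ stays away from the branch cut of the principal square root because $p^2+1\ge 1$.
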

\begin{proof}
    The result follows from direct calculations since $g_{\delta}$ is a
    continuous function of $\delta$.
\end{proof}
The next lemma  plays an essential role in the following discussion.
\begin{lemma}\label{lem:g_0_roots}
    Suppose $g_0(p;\gamma)$ is defined as in \eqref{eqn:g_0} for $p\ge
    0$ and $\gamma > 0$.  Then
    there is a $\gamma_* \approx 0.9373$ such that
    \begin{enumerate}
        \item if $0 < \gamma < \gamma_*$, then $g_0(p;\gamma)$ has two
            distinct real roots of order $1$, namely $1 < p^1_{\gamma} 
            < p^2_{\gamma}$;
        \item if $\gamma > \gamma_*$, then $g_0(p;\gamma)$ has no
            real roots.
    \end{enumerate}
\end{lemma}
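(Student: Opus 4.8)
The plan is to reduce the equation $g_0(p;\gamma)=0$ to a single transcendental equation in one real variable and then exploit concavity. As a preliminary localization, note that for $0\le p<1$ one has $\sqrt{p^2-1}=\ii\sqrt{1-p^2}$, so $\bigl(\sqrt{p^2+1}\mp\sqrt{p^2-1}\bigr)^2=2p^2\mp 2\ii\sqrt{1-p^4}$; since $\ee^{-2\gamma\sqrt{p^2+1}}$ is real and positive, $\real g_0(p;\gamma)=2p^2\bigl(1-\ee^{-2\gamma\sqrt{p^2+1}}\bigr)>0$ for $0<p<1$, while $g_0(0;\gamma)=-2\ii\bigl(1+\ee^{-2\gamma}\bigr)\ne0$ and $g_0(1;\gamma)=2\bigl(1-\ee^{-2\sqrt2\gamma}\bigr)>0$. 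Hence every real root of $g_0(\cdot;\gamma)$ lies in $(1,\infty)$, where $g_0(\cdot;\gamma)$ is real-valued.

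Next I would change variables: for $p>1$ set $u\equiv\sqrt{p^2+1}\in(\sqrt2,\infty)$ and $v\equiv\sqrt{p^2-1}=\sqrt{u^2-2}$. Using $(u-v)(u+v)=2$ one rewrites
\[
 g_0(p;\gamma)=\frac{4}{(u+v)^2}-(u+v)^2\ee^{-2\gamma u}=K(u;\gamma)\,F(u;\gamma),\qquad F(u;\gamma)\equiv u+\sqrt{u^2-2}-\sqrt2\,\ee^{\gamma u/2},
\]
where $K(u;\gamma)$ is an explicit, smooth, strictly negative factor on $(\sqrt2,\infty)$, obtained from the factorization $4-(u+v)^4\ee^{-2\gamma u}=\bigl(2-(u+v)^2\ee^{-\gamma u}\bigr)\bigl(2+(u+v)^2\ee^{-\gamma u}\bigr)$ together with $2-(u+v)^2\ee^{-\gamma u}=-\ee^{-\gamma u}\bigl((u+v)+\sqrt2\,\ee^{\gamma u/2}\bigr)F(u;\gamma)$. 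Since $p\mapsto u$ is a diffeomorphism of $(1,\infty)$ onto $(\sqrt2,\infty)$ and $K$ never vanishes, the real roots of $g_0(\cdot;\gamma)$ in $(1,\infty)$, together with their orders, coincide with those of $F(\cdot;\gamma)$ in $(\sqrt2,\infty)$.

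The heart of the proof is then the analysis of $F$. A direct computation gives $F''(u;\gamma)=-2(u^2-2)^{-3/2}-\tfrac{\sqrt2\gamma^2}{4}\ee^{\gamma u/2}<0$, so $F(\cdot;\gamma)$ is strictly concave on $(\sqrt2,\infty)$; moreover $F(\sqrt2;\gamma)=\sqrt2\bigl(1-\ee^{\gamma/\sqrt2}\bigr)<0$, $F'(u;\gamma)\to+\infty$ as $u\to\sqrt2^+$, and $F(u;\gamma)\to-\infty$ as $u\to\infty$. Consequently $F(\cdot;\gamma)$ attains a unique global maximum at an interior point $u^*(\gamma)$, has at most two roots, and obeys the trichotomy: two (distinct, simple) roots if $M(\gamma)\equiv\max_u F(u;\gamma)>0$, one double root if $M(\gamma)=0$, and no root if $M(\gamma)<0$ — simplicity in the first case because strict concavity together with $F>0$ strictly between the roots forces $F'\ne0$ at each root. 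Then $M$ is continuous and \emph{strictly decreasing} in $\gamma$ (since $F(u;\gamma)$ is strictly decreasing in $\gamma$ for each fixed $u>0$ and the maximum is attained), with $M(\gamma)\to+\infty$ as $\gamma\to0^+$ (evaluate $F$ at $u=1/\gamma$) and $M(\gamma)<0$ once $\gamma>2\sqrt2$ (from $\ee^{\gamma u/2}\ge\gamma^2u^2/8$ one gets $F(u;\gamma)\le 2u-\sqrt2\gamma^2u^2/8\le u\bigl(2-\gamma^2/4\bigr)<0$ on $[\sqrt2,\infty)$). Hence there is a unique $\gamma_*>0$ with $M(\gamma_*)=0$: for $0<\gamma<\gamma_*$ the function $g_0(\cdot;\gamma)$ has exactly two distinct roots $1<p^1_{\gamma}<p^2_{\gamma}$, of order $1$ because $\partial_p g_0=K\,F'\cdot\dd u/\dd p$ with $\dd u/\dd p>0$ and $F'\ne0$ at both roots; for $\gamma>\gamma_*$ it has none. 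To locate $\gamma_*\approx0.9373$, note that $M(\gamma_*)=0$ is equivalent to $F(u;\gamma_*)=F'(u;\gamma_*)=0$ holding simultaneously; eliminating the exponential from these two relations yields $v=2/\gamma$, hence $u=\sqrt{4/\gamma^2+2}$, and substituting back into $F=0$ leaves the single equation $\sqrt{4/\gamma^2+2}+2/\gamma=\sqrt2\,\ee^{(\gamma/2)\sqrt{4/\gamma^2+2}}$, whose unique positive solution is $\gamma_*\approx0.9373$.

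The step I expect to be the main obstacle is making this trichotomy sharp and showing the threshold is a \emph{single} value. Strict concavity of $F$ immediately caps the number of roots at two, but the boundary behaviour must be used carefully: the blow-up $F'(u;\gamma)\to+\infty$ at $u=\sqrt2$ is exactly what guarantees the maximum is interior, so that the two-root and no-root regimes are separated only by a tangency, and the continuity of $M(\gamma)$ has to be argued — which follows once one notes that for $\gamma$ in a compact subset of $(0,\infty)$ the maximizers $u^*(\gamma)$ stay in a fixed compact interval, because $F(u;\gamma)\to-\infty$ uniformly in such $\gamma$. Everything else — the localization in the first paragraph, the algebra reducing $g_0$ to $K\cdot F$, and the passage from simple roots of $F$ to order-$1$ roots of $g_0$ — is routine.
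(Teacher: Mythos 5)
Your proof is correct, and it takes a genuinely different and cleaner route than the paper's. The paper substitutes $s=p^2$ and reduces $g_0=0$ to the equation $G_0(s;\gamma)\equiv s+\sqrt{s^2-1}-\ee^{\gamma\sqrt{s+1}}=0$; crucially, $G_0$ is concave in $s$ on $(1,\infty)$ only when $\gamma\ge\sqrt{\ee}/(\ee+1)\approx0.4434$, so the paper is forced to split the analysis into two regimes, handling the small-$\gamma$ case by a separate comparison of $G_1(s)=s+\sqrt{s^2-1}$ against $G_2(s;\gamma)=\ee^{\gamma\sqrt{s+1}}$ and their derivatives (a concavity-vs.-convexity crossing argument). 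Your substitution $u=\sqrt{p^2+1}$ combined with the algebraic ``square-root'' factorization $g_0=K\cdot F$, where $F(u;\gamma)=u+\sqrt{u^2-2}-\sqrt2\,\ee^{\gamma u/2}$ (one checks that $F=0$ squares to the paper's $G_0=0$ after the change of variables), yields a function that is strictly concave for \emph{all} $\gamma>0$, so the root-counting argument is uniform and the case split disappears. This buys several things at once: the at-most-two-roots bound, simplicity of the roots, the monotone threshold $M(\gamma)=\max_u F$, and the closed-form tangency system $F=F'=0$ for $\gamma_*$ (giving $v=2/\gamma$ and the one-variable equation you write, with solution $\approx0.9373$) all follow directly from strict concavity plus the stated boundary behaviour. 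The paper instead characterizes $\gamma_*$ rather obliquely (its displayed $\argmin$ formula is in fact not quite right, since $\max_sG_0$ is strictly decreasing in $\gamma$ and has no minimizer; the correct characterization is the zero of the decreasing function $\gamma\mapsto\max_sG_0(s;\gamma)$, which is exactly what your $M(\gamma_*)=0$ says). The only place in your write-up I would tighten is the continuity of $M$: you are right that it needs the maximizers $u^*(\gamma)$ to stay in a compact interval, and a clean way to see this is that the first-order condition $1+u/\sqrt{u^2-2}=\tfrac{\sqrt2\gamma}{2}\ee^{\gamma u/2}$ has a left side bounded by a constant for $u\ge\sqrt2+1$ while the right side, for $\gamma\ge\gamma_0>0$, grows without bound in $u$, so $u^*(\gamma)$ is confined to a fixed compact set for $\gamma$ in any $[\gamma_0,\infty)$.
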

We note that $\gamma_*$ can be computed as the solution of an
optimization problem; more importantly, we emphasize that
Lemmas~\ref{lem:g_0}--\ref{lem:g_0_roots} are \emph{independent} of the
source term $f$ in \eqref{eqn:finite_freq}.  We will see later that the
roots of $g_0(p;\gamma)$ are indicative of anomalous localized
resonance.  For brevity, we defer the proof of Lemma~\ref{lem:g_0_roots}
to the appendix.


\section{Short wavelength/high frequency regime ($\gamma > \gamma_*$)}

\label{sec:large_gamma}

In this section, we prove that, for $\gamma > \gamma_*$ (where $\gamma$ was introduced at \eqref{eqn:gamma}), $E_{a}(\delta)$ remains bounded as
$\delta \rightarrow 0^+$ for all sources $f\in L^2(\mathcal{M})$ with
bounded support in $\mathcal{M}$, regardless of how close the source is
to the slab. In addition, we also prove that the slab lens behaves as a
``shield'' in the sense that the solution to the left of the lens, i.e.,
$V_c$, is vanishingly small in the limit $\delta \rightarrow 0^+$.


\subsection{$E_{\delta}(a)$ for $\gamma > \gamma_*$}
\label{subsec:pd_large_gamma}

From \eqref{eqn:pd}, we have
\begin{equation}\label{eqn:pd_a}
    E_{\delta}(a) = \int_0^{\infty} L_{\delta}(p;\gamma) \di{p},
\end{equation}
where, for $\delta > 0$, $p \ge 0$, and $\gamma > 0$, 
\begin{equation}\label{eqn:L}
    L_{\delta}(p;\gamma) \equiv 
    \frac{|I_p|^2\ee^{2\gamma\nu_m'}}{|g_{\delta}(p;\gamma)|^2}
    M_{\delta}(p;\gamma)
\end{equation}
and
\begin{multline}\label{eqn:M}
    M_{\delta}(p;\gamma) \equiv 
    \frac{1+\delta^2}{\pi}|\nu_s-(1+\ii\delta)\nu_m|^2\left\{
        \left(|\nu_s|^2 + p^2\right)
        \left[\left(\frac{1-\ee^{-2\gamma\nu_s'}}{2\nu_s'}\right)
        + |R|^2\ee^{-2\gamma\nu_s'}
        \left(\frac{1-\ee^{-2\gamma\nu_s'}}{2\nu_s'}\right)\right]
        \right.\\
    \left.
        + 2\left(-|\nu_s|^2 + p^2\right)
        \ee^{-2\gamma\nu_s'}\imag
        \left[\overline{R}\ee^{2\ii \gamma\nu_s''}
        \left(\frac{1-\ee^{-2\ii \gamma\nu_s''}}{2\nu_s''}\right)
        \right]\right\}.
\end{multline}
We now state the main theorem from this section.
\begin{theorem}\label{thm:bounded}
    Suppose $\gamma > \gamma_*$ (where $\gamma_*$ is introduced in
    Lemma~\ref{lem:g_0_roots}).  If there is a constant $C > 0$ such
    that
    \begin{equation}\label{eqn:I_p_bounds}
        |I_p| \le
        \begin{cases}
            C &\text{for } 0 \le p \le 1,\\
            C\ee^{-\gamma\frac{d_0}{a}\sqrt{p^2-1}} &\text{for } 1 \le p
            < \infty,
        \end{cases}
    \end{equation}
    then there is a constant $C_{\gamma} > 0$ and a $\delta_{\gamma} >
    0$ such that $\|V\|_{H^1(S_a)} \le C_{\gamma}$ as for all $\delta
    \le \delta_{\gamma}$.
\end{theorem}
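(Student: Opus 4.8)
The plan is to show that $E_\delta(a)=\int_0^\infty L_\delta(p;\gamma)\,\di{p}$ from \eqref{eqn:pd_a} stays bounded as $\delta\to0^+$; by Remark~\ref{rem:V_and_grad_V} the integrand in \eqref{eqn:v_int} is dominated by the one in \eqref{eqn:pd} (it is what remains after deleting the nonnegative weights $|\nu_s|^2$ and $p^2$), so the same estimate bounds $\|V\|_{L^2(S_a)}^2$ and hence $\|V\|_{H^1(S_a)}$. Everything reduces to the three factors of $L_\delta=|I_p|^2\ee^{2\gamma\nu_m'}\,|g_\delta(p;\gamma)|^{-2}M_\delta(p;\gamma)$: an exponential \emph{gain} from $|I_p|$ via \eqref{eqn:I_p_bounds}, a uniform-in-$\delta$ \emph{lower} bound on $|g_\delta|$, and an at-most-polynomial \emph{upper} bound on $\ee^{2\gamma\nu_m'}M_\delta$. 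The structural input is Lemma~\ref{lem:g_0_roots}: since $\gamma>\gamma_*$ the limit $g_0(\cdot;\gamma)$ has no real zeros, so $g_\delta$ cannot degenerate.

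I would split the integral as $\int_0^{P_\gamma}+\int_{P_\gamma}^\infty$ for a large cutoff $P_\gamma$ chosen below. On $[0,P_\gamma]$: by Lemma~\ref{lem:g_0}, $g_\delta\to g_0$ uniformly on this compact set, and $\min_{[0,P_\gamma]}|g_0(\cdot;\gamma)|=:c_\gamma>0$ by Lemma~\ref{lem:g_0_roots}, so there is $\delta_\gamma\in(0,1]$ with $|g_\delta(p;\gamma)|\ge c_\gamma/2$ whenever $p\le P_\gamma$ and $\delta\le\delta_\gamma$. On the same set $M_\delta$ is jointly continuous in $(p,\delta)$ with no singularities --- $\nu_s'\ge1$ tames $(1-\ee^{-2\gamma\nu_s'})/(2\nu_s')$, the apparent $1/\nu_s''$ factor is a removable singularity with $|(1-\ee^{-2\ii\gamma\nu_s''})/(2\nu_s'')|\le\gamma$, and $\nu_s-(1+\ii\delta)\nu_m$ stays bounded away from $0$ there so $R$ is bounded --- hence $M_\delta\le M_\gamma$; moreover $|I_p|\le C$ and $\ee^{2\gamma\nu_m'}\le\ee^{2\gamma\sqrt{P_\gamma^2-1}}$ on $[0,P_\gamma]$ by \eqref{eqn:I_p_bounds}. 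So $L_\delta$ is uniformly bounded there and $\int_0^{P_\gamma}L_\delta\,\di{p}\le C_\gamma^{(1)}$.

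For the tail I would use explicit far-field estimates that are uniform in $\delta\in[0,1]$. From \eqref{eqn:g}, $g_\delta=[\nu_s-(1+\ii\delta)\nu_m]^2-[\nu_s+(1+\ii\delta)\nu_m]^2\ee^{-2\gamma\nu_s}$; for $p>1$ the quantity $\nu_m=\sqrt{p^2-1}$ is real, so $\real(\nu_s-(1+\ii\delta)\nu_m)=\nu_s'-\nu_m=\big((\nu_s')^2-\nu_m^2\big)/(\nu_s'+\nu_m)\ge 2/(\nu_s'+\nu_m)\ge c/p$ (using $(\nu_s')^2\ge p^2+1$), whence $|[\nu_s-(1+\ii\delta)\nu_m]^2|\ge c^2/p^2$, while $|[\nu_s+(1+\ii\delta)\nu_m]^2\ee^{-2\gamma\nu_s}|\le Cp^2\ee^{-2\gamma\sqrt{p^2+1}}$, which is below $\tfrac12 c^2 p^{-2}$ once $p\ge P_\gamma$. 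Thus $|g_\delta(p;\gamma)|\ge \tfrac12 c^2 p^{-2}$ on $[P_\gamma,\infty)$ for every $\delta\le1$, with $c,P_\gamma$ depending only on $\gamma$. In the same range each bracketed term of \eqref{eqn:M} is at most a power of $p$ --- the pieces carrying $\ee^{-2\gamma\nu_s'}$ are exponentially small, and $|R|\le|\nu_s+(1+\ii\delta)\nu_m|/|\nu_s-(1+\ii\delta)\nu_m|\le Cp^2$ controls the rest --- so $M_\delta(p;\gamma)\le Cp^4$. Finally \eqref{eqn:I_p_bounds} gives $|I_p|^2\ee^{2\gamma\nu_m'}\le C^2\ee^{2\gamma(1-d_0/a)\sqrt{p^2-1}}$, and $\mathrm{supp}\,f\subset\mathcal{M}=\{x>a\}$ forces $d_0>a$, so that exponent is negative. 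Hence $L_\delta(p;\gamma)\le Cp^{8}\ee^{2\gamma(1-d_0/a)\sqrt{p^2-1}}$ on $[P_\gamma,\infty)$, an integrable bound, giving $\int_{P_\gamma}^\infty L_\delta\,\di{p}\le C_\gamma^{(2)}$ uniformly in $\delta\le\delta_\gamma$.

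Summing, $E_\delta(a)\le C_\gamma^{(1)}+C_\gamma^{(2)}=:C_\gamma$ for all $\delta\le\delta_\gamma$, and $\|V\|_{H^1(S_a)}\le\sqrt{2C_\gamma}$ by the first paragraph. The main obstacle is the uniform lower bound on $|g_\delta(p;\gamma)|$: away from $p=\infty$ it rests on Lemma~\ref{lem:g_0_roots} plus continuity and is uniform only for $\delta\le\delta_\gamma$, while near $p=\infty$ it requires the asymptotics above, whose saving grace is that $\nu_s-(1+\ii\delta)\nu_m$ has real part $\nu_s'-\nu_m$ which never collapses to $0$ for any $\delta\ge0$. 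A secondary point is to confirm that the polynomial growth of $M_\delta$ is genuinely beaten by the exponential decay that \eqref{eqn:I_p_bounds} supplies through $d_0>a$.
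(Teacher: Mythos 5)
Your proof is correct, and it reaches the same conclusion by essentially the same mechanism — a uniform-in-$\delta$ lower bound on $|g_\delta|$ coming from the absence of real zeros of $g_0$ when $\gamma>\gamma_*$ (Lemma~\ref{lem:g_0_roots}), plus the observation that the exponential gain in $|I_p|^2\ee^{2\gamma\nu_m'}$, driven by $d_0/a>1$, overwhelms the at-most-polynomial growth of $M_\delta/|g_\delta|^2$ — but your packaging of the $|g_\delta|$ lower bound is different. The paper splits at $p=1$ and proves, via the reverse triangle inequality and careful term-by-term estimates (Lemmas~\ref{lem:g_first_term_lower_bound}--\ref{lem:abs_g_lower_bound}, involving bounds such as $(x+y)^{1/2}\le x^{1/2}+y^{1/2}$ and an explicit correction $Q_\delta(p)$), the sharper bound $|g_\delta(p;\gamma)|\ge\tfrac12|g_0(p;\gamma)|$ valid on all of $[1,\infty)$; it then bounds $M_\delta$ explicitly (Lemma~\ref{lem:M_upper_bound}). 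You instead split at a large $P_\gamma$, use uniform convergence $g_\delta\to g_0$ on the compact set $[0,P_\gamma]$ together with $\min_{[0,P_\gamma]}|g_0|>0$ to get $|g_\delta|\ge c_\gamma/2$ there, and on the tail use the direct far-field inequality $\real\bigl(\nu_s-(1+\ii\delta)\nu_m\bigr)=\nu_s'-\nu_m\ge 2/(\nu_s'+\nu_m)$ to get $|g_\delta|\ge c\,p^{-2}$. Your tail estimate is cruder than the paper's $|g_0|/2$, but it suffices here and sidesteps the more delicate algebra; the paper's sharper lower bound is what gets reused in the proof of Theorem~\ref{thm:shielding}, which is likely why the authors chose that route. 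Two points worth making explicit if you write this up: the uniform convergence on $[0,P_\gamma]$ should be justified (joint continuity of $g_\delta$ on the compact set $[0,\delta_0]\times[0,P_\gamma]$ gives uniform continuity, hence uniform convergence — this mirrors the argument the paper uses in Lemma~\ref{lem:blow_up_small_p}), and the lower bound $\nu_s'+\nu_m\le 3p$ used to get $\nu_s'-\nu_m\ge c/p$ should be recorded with its $\delta$-uniformity.
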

The proof of this theorem is somewhat tedious and may be found in the
appendix --- although we only prove the theorem for $\|\nabla
V\|^2_{L^2(S_a)}$, Remark~\ref{rem:V_and_grad_V} implies that it holds
for $\|V\|_{L^2(S_a)}$ as well. In the next lemma, we show that the bound \eqref{eqn:I_p_bounds} holds
for very general sources $f$.
\begin{lemma}\label{lem:I_p_upper_bound}
    Suppose $f\in L^2(\mathcal{M})$ with compact support; then
    \eqref{eqn:I_p_bounds} holds.
\end{lemma}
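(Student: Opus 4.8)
The plan is to estimate $|I_p|$ directly from its definition
\[
    I_p = \int_{d_0}^{d_1}\fhat(s,k_0p)\,\ee^{-k_0\nu_m s}\di{s}, \qquad \nu_m = \sqrt{p^2-1},
\]
splitting into the two regimes $0\le p\le 1$ and $p\ge 1$ according to whether $\nu_m$ is purely imaginary or nonnegative real (cf.\ \eqref{eqn:nuc}). Since $f$ has compact support in $\mathcal{M}$, I would first fix $Y>0$ with $\mathrm{supp}\,f\subseteq[d_0,d_1]\times[-Y,Y]$; by Fubini, $f(s,\cdot)\in L^2(\mathbb{R})$ for a.e.\ $s$, and since this slice is supported in $[-Y,Y]$, Cauchy--Schwarz gives the uniform-in-$q$ bound $|\fhat(s,q)|\le\int_{-Y}^{Y}|f(s,y)|\di{y}\le\sqrt{2Y}\,\|f(s,\cdot)\|_{L^2(\mathbb{R})}$.

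For $0\le p\le 1$, $\nu_m$ is purely imaginary, so $|\ee^{-k_0\nu_m s}|=1$; applying the slice bound and then Cauchy--Schwarz in $s$ gives $|I_p|\le\sqrt{2Y(d_1-d_0)}\,\|f\|_{L^2(\mathcal{M})}=:C$. For $p\ge 1$, $\nu_m=\sqrt{p^2-1}\ge0$, so $|\ee^{-k_0\nu_m s}|=\ee^{-k_0\nu_m s}\le\ee^{-k_0 d_0\nu_m}$ because $s\ge d_0>a>0$ on the integration domain; the same two Cauchy--Schwarz steps then yield $|I_p|\le C\,\ee^{-k_0 d_0\sqrt{p^2-1}}$. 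Finally, since $\gamma=k_0a$ by \eqref{eqn:gamma}, the exponent is exactly $-\gamma\tfrac{d_0}{a}\sqrt{p^2-1}$, which is precisely \eqref{eqn:I_p_bounds} with the same constant $C$ in both regimes.

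There is no substantial obstacle here; the only two points worth stating carefully are that compactness of $\mathrm{supp}\,f$ also bounds it in the $y$-direction (so the $L^1_y$ norm is controlled by the $L^2_y$ norm, making $\fhat(s,\cdot)$ bounded uniformly in $q$), and the elementary identity $k_0 d_0=\gamma\,d_0/a$ which matches the claimed exponential decay rate in the propagating regime $p>1$. If one wished, the hypothesis could be relaxed from compact support to, say, $f\in L^1(\mathcal{M})\cap L^2(\mathcal{M})$ with $\mathrm{supp}\,f\subseteq\{x\ge d_0\}$, but the compactly supported case is what is needed in Theorem~\ref{thm:bounded}.
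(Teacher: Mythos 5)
Your proof is correct and follows essentially the same route as the paper: bound $|\fhat(s,k_0 p)|$ using the compact support in $y$, then apply Cauchy--Schwarz in $s$ and pull out the worst-case value $\ee^{-k_0 d_0\sqrt{p^2-1}}$ of the exponential for $p\ge 1$. The one small difference is that you explicitly carry the factor $\sqrt{2Y}$ coming from the width of the $y$-support, whereas the paper invokes ``Jensen's inequality'' to pass from $|\fhat(s,k_0p)|^2$ to $\int |f(s,y)|^2\,\di{y}$ without recording that constant; since $C$ is an unspecified constant in \eqref{eqn:I_p_bounds} this is immaterial, but your version is the cleaner of the two.
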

\begin{proof}
    For $0 \le p \le 1$, recall from \eqref{eqn:Iq} that 
    \[
        I_p = \int_{d_0}^{d_1} \fhat(s,k_0p)
            \ee^{-\ii k_0\sqrt{1-p^2}s} \di{s}.
    \]
    Then the triangle, Cauchy--Schwarz, and Jensen inequalities
    imply 
    \begin{align*}
        |I_p| &\le \int_{d_0}^{d_1} \left|\fhat(s,k_0p)\right|\di{s}
        \\
        &\le (d_1-d_0)^{1/2} \left[\int_{d_0}^{d_1}
            \left|\fhat(s,k_0p)\right|^2\di{s}\right]^{1/2} \\
        &= (d_1-d_0)^{1/2} \left[\int_{d_0}^{d_1}
            \left|\int_{-\infty}^{\infty} f(s,y) \ee^{-\ii k_0py}
            \di{y}\right|^2\di{s}\right]^{1/2} \\
        &\le(d_1-d_0)^{1/2} \left[\int_{d_0}^{d_1}
            \int_{-\infty}^{\infty} \left|f(s,y)\right|^2 \di{y}
            \di{s}\right]^{1/2} \\
        &= (d_1-d_0)^{1/2}\|f\|_{L^2(\mathcal{M})}.
    \end{align*}
    
    Similarly, for $p \ge 1$, recall from \eqref{eqn:Iq} that
    \[
        I_p = \int_{d_0}^{d_1} \fhat(s,k_0p)
        \ee^{-k_0\sqrt{p^2-1}s}\di{s}.
    \]
    Then 
    \begin{align*}
        |I_p| &\le \int_{d_0}^{d_1} \left|\fhat(s,k_0p)\right|
        \ee^{-k_0\sqrt{p^2-1}s} \di{s} \\
        &\le (d_1-d_0)^{1/2}\left[\int_{d_0}^{d_1}
            \left|\fhat(s,k_0p)\right|^2\ee^{-2k_0\sqrt{p^2-1}s}
        \di{s}\right]^{1/2} \\
        &\le (d_1-d_0)^{1/2}\|f\|_{L^2(\mathcal{M})}
            \ee^{-k_0d_0\sqrt{p^2-1}} \\
        &= (d_1-d_0)^{1/2}\|f\|_{L^2(\mathcal{M})}
            \ee^{-\gamma \frac{d_0}{a}\sqrt{p^2-1}}.
    \end{align*}
    
    To complete the proof, we define $C \equiv
    (d_1-d_0)^{1/2}\|f\|_{L^2(\mathcal{M})}$.
\end{proof}


\subsection{Shielding effect for large $\gamma$}
\label{subsec:shielding}

It turns out that the slab lens behaves as a shield  and acts as an almost perfect reflector. This fact was also observed in \cite{Kettunen:2015:OAE} where it was explained based on the fact that, at least in the lossless non-magnetic case $\epsilon=-1, \mu=1$ will give a purely imaginary wave number inside the slab and thus no propagation beyond the slab in region $C$. We have,
\begin{theorem}\label{thm:shielding}
    Suppose $\gamma \ge 2\gamma_*$, $|I_p|$ satisfies
    \eqref{eqn:I_p_bounds}, and choose $0 < \eta < 1$; then there
    is a constant $C_{\eta} > 0$ such that
    \begin{equation}\label{eqn:V_c_exp_bound}
        |V_c(x,y)| \le C_\eta\ee^{-\eta\gamma}
        \eqtext{for\ all} (x,y) \in \mathcal{C}.
    \end{equation}
    In particular,
    \[
        |V_c(x,y)| \rightarrow 0 \eqtext{as} k_0 \rightarrow \infty
        \eqtext{for\ all} (x,y) \in \mathcal{C}.  
    \]
\end{theorem}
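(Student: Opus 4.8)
The plan is to reduce the bound to a single integral in the Fourier variable that is manifestly independent of $(x,y)$, and then to pull the factor $\ee^{-\eta\gamma}$ out of the exponential terms in the integrand. Since $\Vhat_c(x,q)=A_q\ee^{k_0\nu_c x}$ by \eqref{eqn:Vc_hat}, and since $x<0$ on $\mathcal{C}$ while $\real\nu_c\ge0$ by Remark~\ref{rem:principal_root}, we have $|\ee^{k_0\nu_c x}|\le1$, so Fourier inversion gives $|V_c(x,y)|\le\frac{1}{2\pi}\int_{-\infty}^{\infty}|A_q|\di q$ for every $(x,y)\in\mathcal{C}$; the entire $(x,y)$-dependence thereby drops out. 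I would then simplify $A_q$: substituting \eqref{eqn:psi} and \eqref{eqn:alpha_R} into \eqref{eqn:Aq}, using $\nu_c=\nu_m$, and recognizing the structure of \eqref{eqn:g}, one gets the identity $\nu_m\psi_q^++\psi_q^-=-\ee^{\gamma\nu_s}g_\delta(p;\gamma)/\bigl(2(1+\ii\delta)\nu_s\bigr)$ (with $q=k_0p$), whence
\[
    |A_q|=\frac{2\sqrt{1+\delta^2}\,|\nu_s|\,|I_q|\,\ee^{\gamma\nu_m'}}{k_0\,\ee^{\gamma\nu_s'}\,|g_\delta(p;\gamma)|}.
\]
(The $1/\alpha$ degeneracy at $q=\pm k_0$ cancels, as anticipated after \eqref{eqn:alpha_R}, and $\{q=\pm k_0\}$ is null anyway.) After the substitution $q=k_0p$ the prefactor $k_0$ cancels, and by evenness of the integrand it suffices to bound
\[
    \int_{0}^{\infty}\frac{|\nu_s|\,|I_p|\,\ee^{\gamma\nu_m'}}{\ee^{\gamma\nu_s'}\,|g_\delta(p;\gamma)|}\di p
\]
by a constant (depending only on the data) times $\ee^{-\eta\gamma}$.

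The central ingredient is the uniform lower bound
\[
    |g_\delta(p;\gamma)|\ \gtrsim\ \frac{1}{1+p^2}\qquad\text{for all }p\ge0,\ \gamma\ge2\gamma_*,\ \delta\in(0,1),
\]
with an absolute implied constant. I would obtain it directly from \eqref{eqn:g} by the reverse triangle inequality $|g_\delta|\ge\bigl|\nu_s-(1+\ii\delta)\nu_m\bigr|^2-\bigl|\nu_s+(1+\ii\delta)\nu_m\bigr|^2\ee^{-2\gamma\nu_s'}$: the first term is bounded below using $\nu_s'=\real\nu_s\ge\sqrt{p^2+1}$ together with $\real[\nu_s-(1+\ii\delta)\nu_m]\ge\sqrt{p^2+1}-\sqrt{p^2-1}$ when $p\ge1$ and $\real[\nu_s-(1+\ii\delta)\nu_m]=\nu_s'+\delta\sqrt{1-p^2}\ge\sqrt{p^2+1}$ when $0\le p<1$, and the second term is bounded above using $|\nu_s|,|\nu_m|\lesssim\sqrt{p^2+1}$ and $\ee^{-2\gamma\nu_s'}\le\ee^{-2\gamma\sqrt{p^2+1}}$. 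A short computation — in which the hypothesis $\gamma\ge2\gamma_*$, rather than merely $\gamma>\gamma_*$, is exploited to make the exponential correction term small enough — shows the second term is at most half the first, yielding the claim. (Alternatively one could invoke Lemma~\ref{lem:g_0_roots} to get $g_0(p;\gamma)\ne0$ on $[1,\infty)$, then use joint continuity in $\delta$ on compacta plus a large-$p$ tail estimate; the direct bound is cleaner and automatically uniform in $\delta$.)

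With $|g_\delta|^{-1}\lesssim1+p^2$ in hand, split the integral at $p=1$. On $[0,1)$ we have $\nu_m'=0$, $|\nu_s|$ bounded, $|I_p|\le C$ by \eqref{eqn:I_p_bounds}, and $\ee^{-\gamma\nu_s'}\le\ee^{-\gamma\sqrt{p^2+1}}\le\ee^{-\gamma}$, so this piece is $\lesssim\ee^{-\gamma}\le\ee^{-\eta\gamma}$. On $[1,\infty)$, \eqref{eqn:I_p_bounds} gives $|I_p|\,\ee^{\gamma\nu_m'}\le C\ee^{-\gamma(d_0/a-1)\sqrt{p^2-1}}$ with $d_0/a-1>0$ (the source has compact support inside the open region $\mathcal{M}$, so $d_0>a$), and $|\nu_s|(1+p^2)\ee^{-\gamma\nu_s'}\lesssim(1+p^2)^{3/2}\ee^{-\gamma\sqrt{p^2+1}}$, so the integrand is $\lesssim(1+p^2)^{3/2}\ee^{-\gamma\phi(p)}$ with $\phi(p):=(d_0/a-1)\sqrt{p^2-1}+\sqrt{p^2+1}$, which satisfies $\phi(p)\ge\sqrt2>1\ge\eta$ for $p\ge1$ and grows linearly. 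Writing $\ee^{-\gamma\phi(p)}=\ee^{-\eta\gamma}\ee^{-\gamma(\phi(p)-\eta)}$ and using $\gamma\ge2\gamma_*$ together with $\phi(p)-\eta\ge\sqrt2-\eta>0$, the residual factor $\int_1^\infty(1+p^2)^{3/2}\ee^{-\gamma(\phi(p)-\eta)}\di p\le\int_1^\infty(1+p^2)^{3/2}\ee^{-2\gamma_*(\phi(p)-\eta)}\di p=:K<\infty$ is finite and depends only on $d_0/a$, $\gamma_*$, $\eta$. Collecting the two pieces would yield $|V_c(x,y)|\le C_\eta\ee^{-\eta\gamma}$, with $C_\eta$ depending on $\eta$, $a$, $d_0$, $d_1$, $\|f\|_{L^2(\mathcal{M})}$, $\gamma_*$ but not on $\gamma=k_0a$, on $\delta$, or on $(x,y)$; since $a$ is fixed, $\gamma\to\infty$ as $k_0\to\infty$, giving the ``in particular'' statement.

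The main obstacle is the polynomial-in-$p$, uniform-in-$(\delta,\gamma)$ lower bound on $|g_\delta(p;\gamma)|$, and — just as importantly — the bookkeeping that keeps \emph{every} constant independent of $\gamma$; this independence is exactly what the limit ``$|V_c|\to0$ as $k_0\to\infty$'' relies on. The rest is routine juggling of exponential factors.
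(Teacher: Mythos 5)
Your proposal is correct, and it diverges from the paper's proof in one genuinely interesting place: the lower bound on $|g_\delta(p;\gamma)|$. The paper handles this via Lemma~\ref{lem:abs_g_lower_bound}, which gives $|g_\delta|\ge|g_0|/2$ on $p\ge1$ only for $\delta\le\delta_\gamma$, and then has to argue separately (with recourse to Figure~\ref{fig:large_p_int}) that the threshold $\delta_\gamma$ can be taken independent of $\gamma$ once $\gamma\ge2\gamma_*$, finally appealing to monotonicity of $g_0$ in $\gamma$ to remove the $\gamma$-dependence from the resulting denominator $|g_0(p;\gamma)|$. You instead bypass Lemma~\ref{lem:abs_g_lower_bound} entirely and prove the self-contained estimate $|g_\delta(p;\gamma)|\gtrsim(1+p^2)^{-1}$ with an absolute constant, valid for every $p\ge0$, $\delta\in(0,1)$, and $\gamma\ge2\gamma_*$, by the same reverse-triangle-inequality decomposition but with elementary bounds on each piece (I checked: the first term is $\ge(\sqrt{p^2+1}-\sqrt{p^2-1})^2\ge1/(p^2+1)$ for $p\ge1$ and $\ge1$ for $p<1$, while the correction $\lesssim(p^2+1)e^{-2\gamma\sqrt{p^2+1}}$ is indeed at most half the first term once $\gamma\ge2\gamma_*$). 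This buys you a fully rigorous argument where the paper relies on a numerical plot, and it makes the $\delta$-uniformity automatic rather than needing the ad hoc $\delta_0\le0.4$ threshold. Your simplification of $A_q$ via $\nu_m\psi_q^+ + \psi_q^- = -\ee^{\gamma\nu_s}g_\delta/\bigl(2(1+\ii\delta)\nu_s\bigr)$ is exactly the algebraic identity the paper uses implicitly to pass from \eqref{eqn:Aq} to the integral bound in its proof, and the remaining exponential bookkeeping (splitting at $p=1$, factoring out $\ee^{-\eta\gamma}$, bounding the residual integral using $\gamma\ge2\gamma_*$ to freeze the exponent) matches the paper's. The one cosmetic difference is that you discard $\ee^{k_0\nu_c'x}\le1$ immediately whereas the paper carries it along; this does not affect anything. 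Overall a cleaner treatment of the key lemma, same global structure.
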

\begin{remark}\label{rem:dipole_bounded}
    Lemma~\ref{lem:I_p_upper_bound} implies that
    Theorems~\ref{thm:bounded} and \ref{thm:shielding} hold for all
    sources $f \in L^2(\mathcal{M})$ with compact support.  However, the
    bound in \eqref{eqn:I_p_bounds} is stronger than we need.  For
    example, suppose there is a positive, real-valued function
    $B(p;\gamma)$ that is continuous for $0 \le p < \infty$ and
    $\gamma_* \le \gamma < \infty$.  In addition, for every $\epsilon
    >0$, suppose that
    \begin{equation}\label{eqn:B_p_lim}
        \lim_{p\rightarrow\infty}
        B(p;\gamma)\ee^{-\epsilon\gamma\sqrt{p^2-1}} = 0 \eqtext{for\ all}
        \gamma \ge \gamma_*
    \end{equation}
    and
    \begin{equation}\label{eqn:B_gamma_lim}
        \lim_{\gamma\rightarrow\infty}
        B(p;\gamma)\ee^{-\epsilon\gamma\sqrt{p^2-1}}  = 0 \eqtext{for\ all}
        p\ge 1.
    \end{equation}
    For example, if $B(p;\gamma)$ is a continuous function of $p$ and
    $\gamma$ that is of polynomial order for $p \rightarrow \infty$ and
    $\gamma \rightarrow \infty$, it will satisfy \eqref{eqn:B_p_lim} and
    \eqref{eqn:B_gamma_lim}.
    Finally, suppose
    \begin{equation}\label{eqn:I_p_stricter_bounds}
        |I_p| \le
        \begin{cases}
            B(p;\gamma) &\text{for } 0 \le p \le 1,\\
            B(p;\gamma)\ee^{-\gamma\frac{d_0}{a}\sqrt{p^2-1}} 
                &\text{for } 1 \le p < \infty.
        \end{cases}
    \end{equation}
    Then, by appropriately modifying
    \eqref{eqn:L_small_p_bound}--\eqref{eqn:L_int_large_p_bound}, one
    can prove that the result of Theorem~\ref{thm:bounded} will hold for
    sources satisfying \eqref{eqn:I_p_stricter_bounds}.  Similarly, 
    by appropriately modifying
    \eqref{eqn:shielding_small_p}--\eqref{eqn:shielding_large_p}, one can
    show that Theorem~\ref{thm:shielding} also holds for sources
    satisfying \eqref{eqn:I_p_stricter_bounds} as long as we replace
    \eqref{eqn:V_c_exp_bound} by 
    \[
        |V_c(x,y)| \le C_{\eta}\ee^{-(\eta-\epsilon)\gamma}
    \]
    where $0 < \epsilon < \eta$.

    In particular, certain distributional sources such as dipoles,
    quadrupoles, etc.\ satisfy \eqref{eqn:I_p_stricter_bounds} --- see
    \S~\ref{subsec:dipole} for more details.
\end{remark}

In Figure~\ref{fig:dipole_large_gamma}, we plot the solution $V$ to
\eqref{eqn:finite_freq} in the case where $f$ is a dipole with dipole
moment $[1,0]^T$, $\delta = 10^{-12}$, and $\gamma = 4\gamma_*$ (we take
$a = 1$ in all figures throughout the paper).  In
Figures~\ref{fig:dipole_large_gamma}(a) and (b), the dipole is located
at the point $(d_0,0) = (4a,0)$; in
Figures~\ref{fig:dipole_large_gamma}(c) and (d), the dipole is located
closer to the slab at the point $(d_0,0) = (1.2a,0)$.  The solution $V$
is smooth throughout the domain; in addition, we observe the ``shielding
effect'' from Theorem~\ref{thm:shielding} in the region to the left of
the lens.
\begin{figure}[!hbpp]
    \begin{center}
        \begin{tabular}{c c}
            \includegraphics[width=0.45\textwidth]{./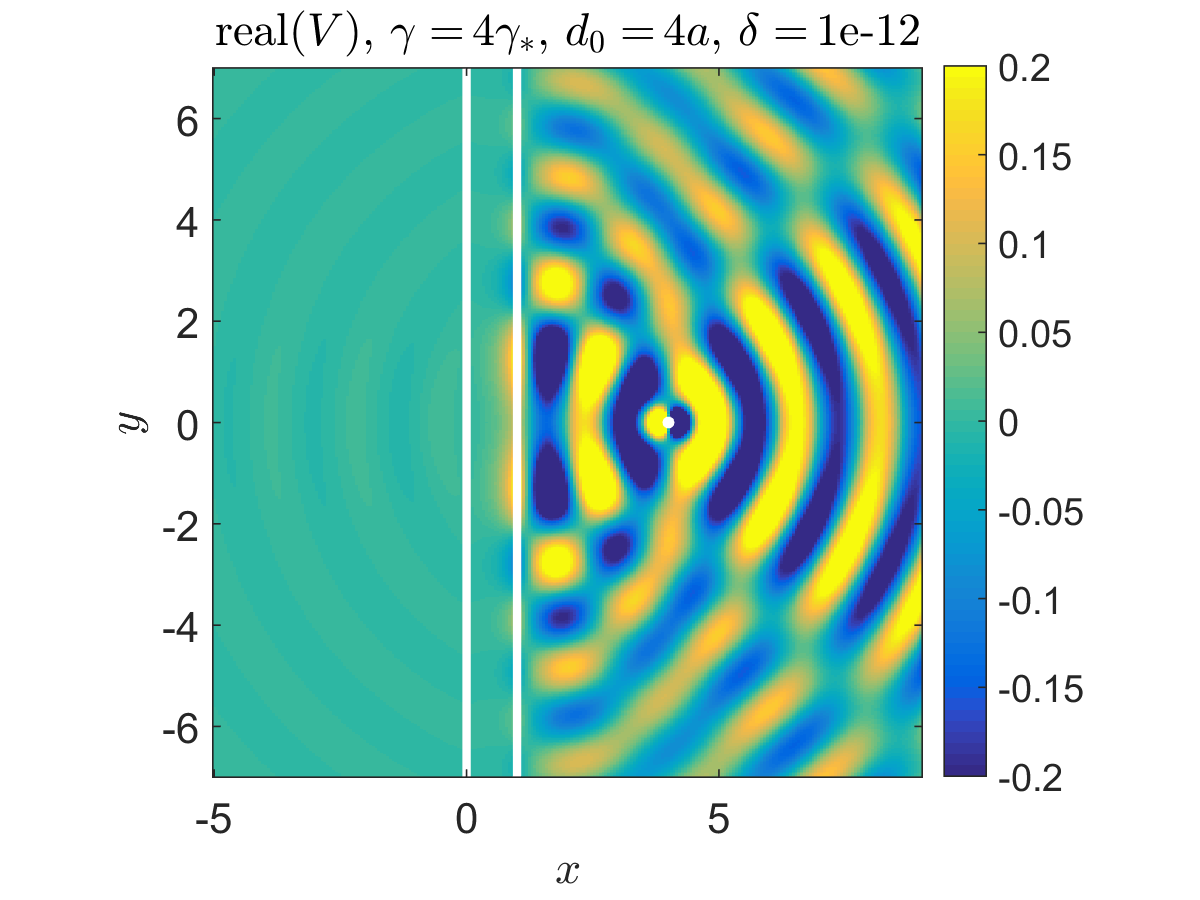}
            & \includegraphics[width=0.45\textwidth]{./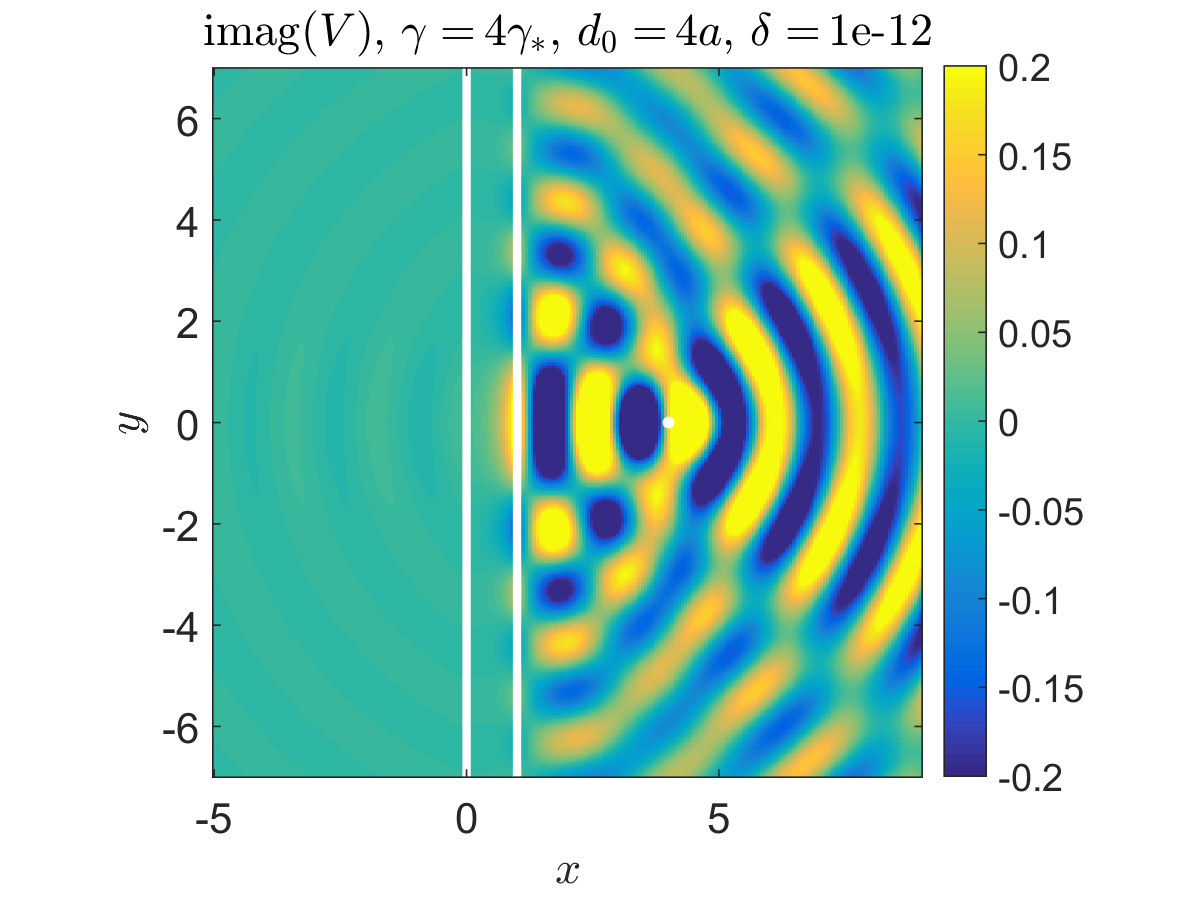}
            \\
            (a) & (b) \myspace{}
            \includegraphics[width=0.45\textwidth]{./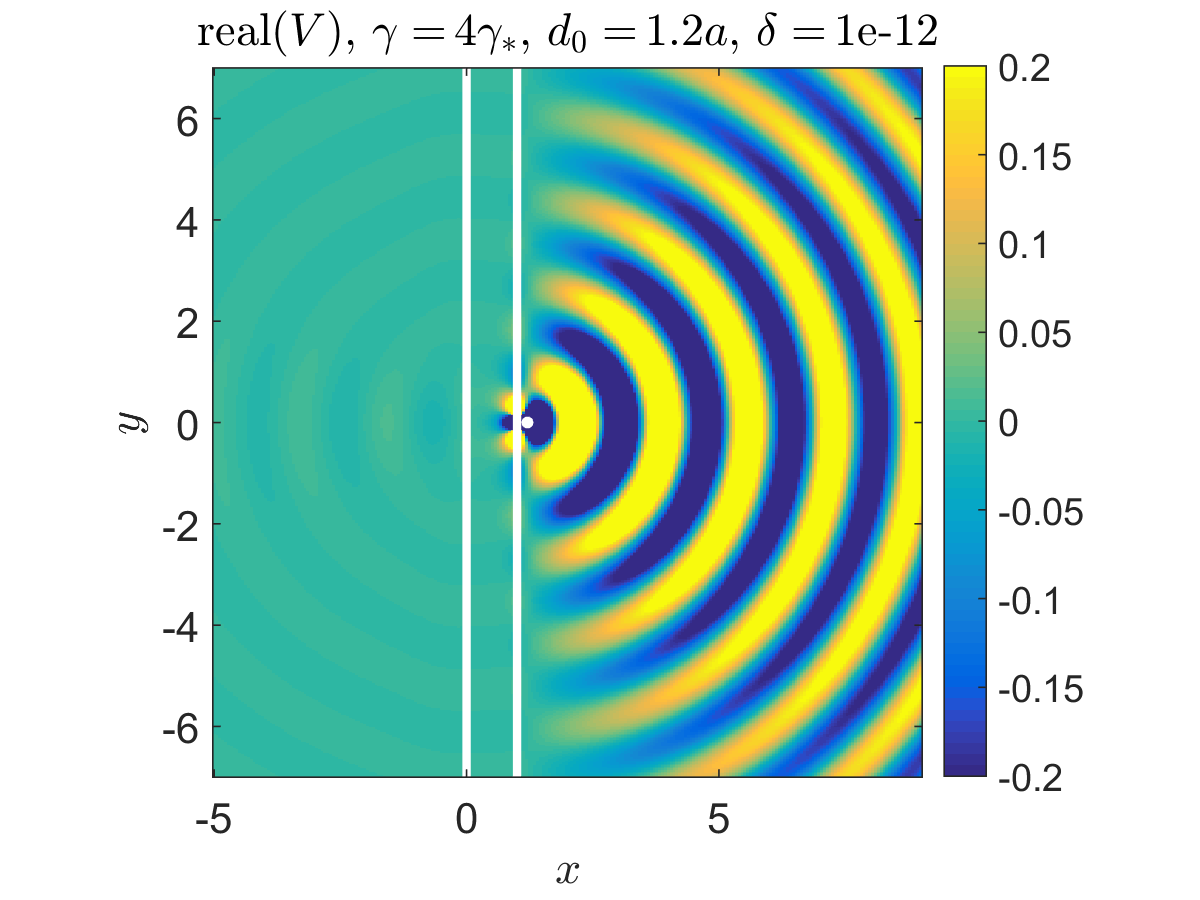}
            & \includegraphics[width=0.45\textwidth]{./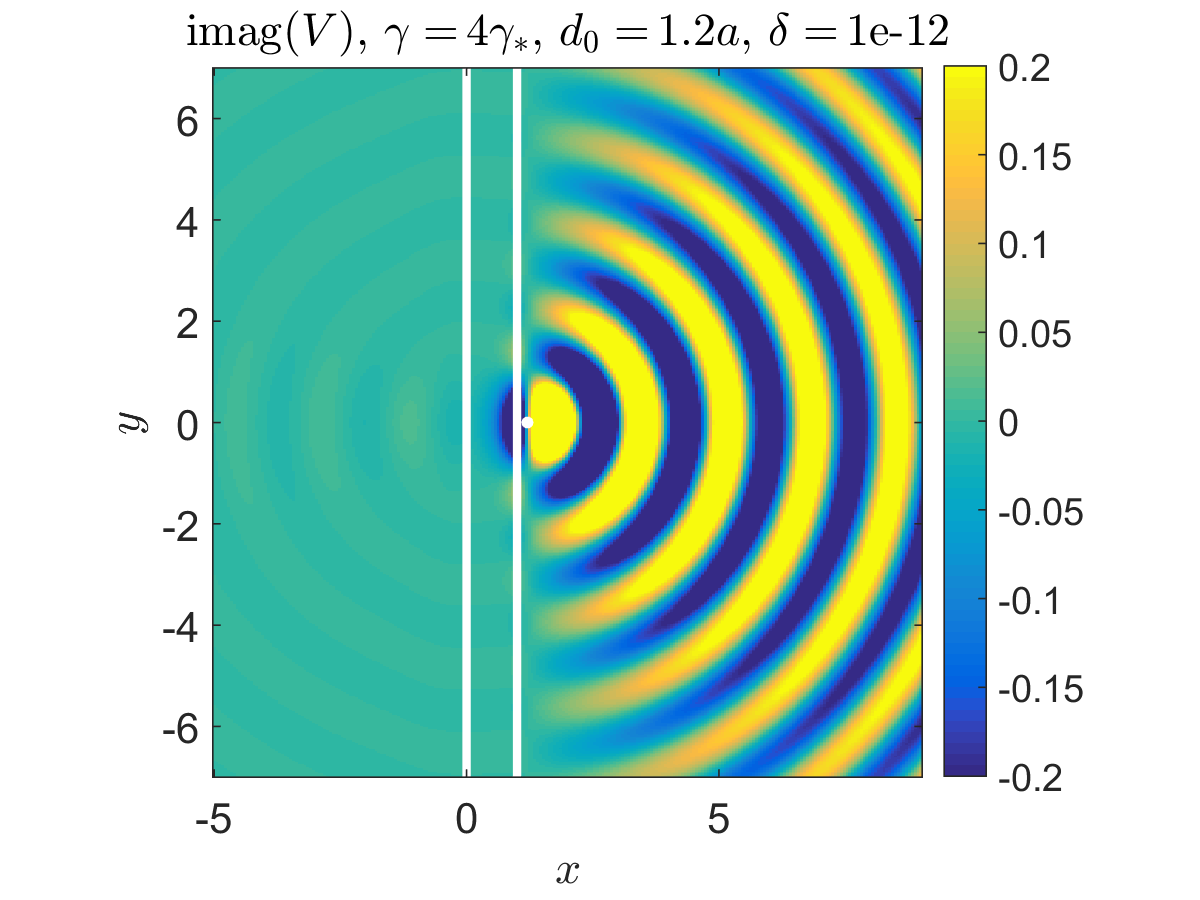}
            \\
            (c) & (d)
        \end{tabular}
        \caption{\emph{This is a plot of $V$, the solution to
                \eqref{eqn:finite_freq}, when $f$ is a dipole and
                $\gamma = 2\gamma_*$: (a) $\real(V)$ and (b) $\imag(V)$
                for $d_0 = 4a$; (c) $\real(V)$ and (d) $\imag(V)$ for
                $d_0 = 1.2a$.  To make the behavior of $V$ more clear,
                we clipped the maximum and minimum values in each plot 
                to $0.2$ (yellow) and $-0.2$ (blue) respectively.}}
        \label{fig:dipole_large_gamma}
    \end{center}
\end{figure}

In Figure~\ref{fig:dipole_bounded_pd}, we plot $E_{\delta}(a)$ as a
function of various parameters for a dipole source $f$.  The parameters
we used are in the ranges $10^{-12} \le \delta \le 10^{-10}$,
$1.01\gamma_* \le \gamma \le 2\gamma_*$, and $1.2a \le d_0 \le 2a$.  We
note that $E_{\delta}(a)$ depends strongly on $\delta$, $\gamma$, and
$d_0$, but, because $\gamma > \gamma_*$, $E_{\delta}(a)$ is quite small.
\begin{figure}[!hbp]
    \begin{center}
        \begin{tabular}{c c}
            \includegraphics[width=0.45\textwidth]{./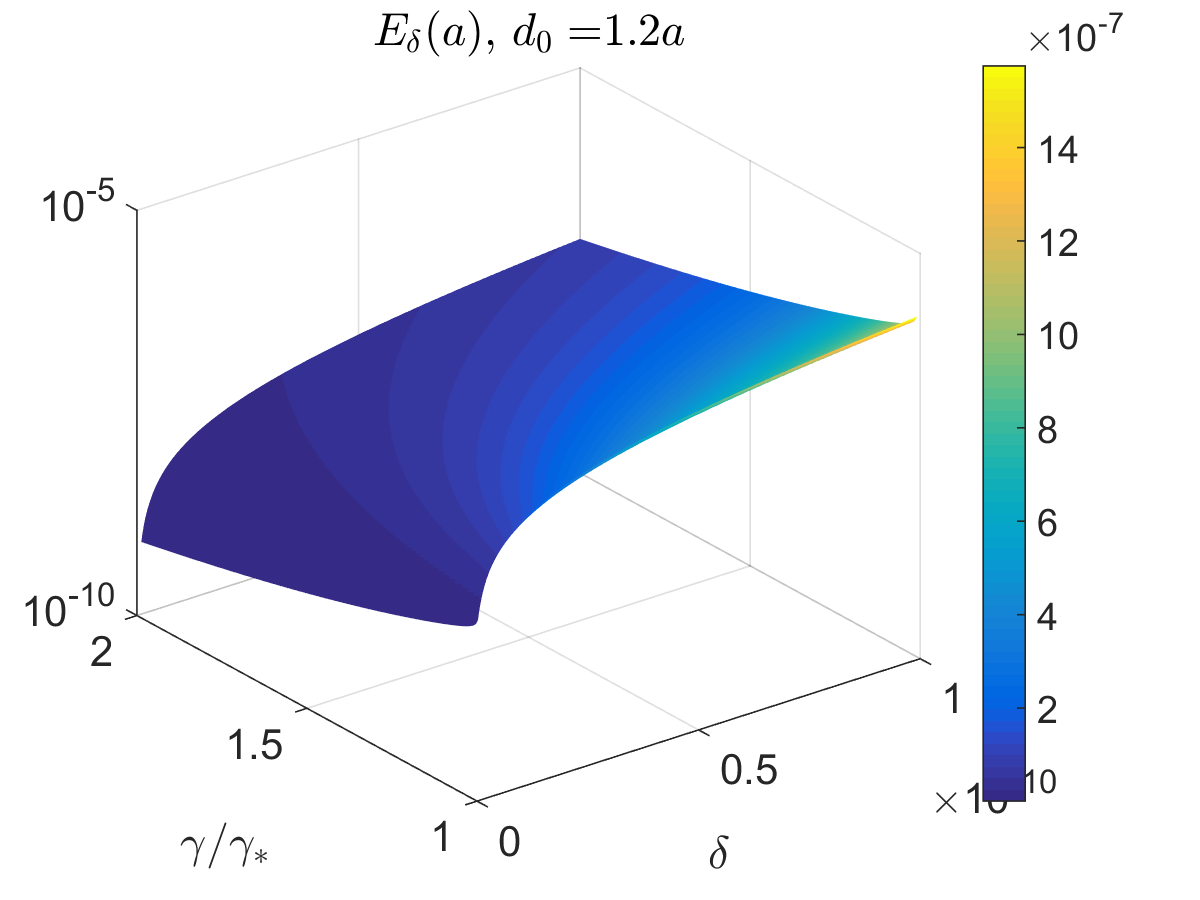}
            &
            \includegraphics[width=0.45\textwidth]{./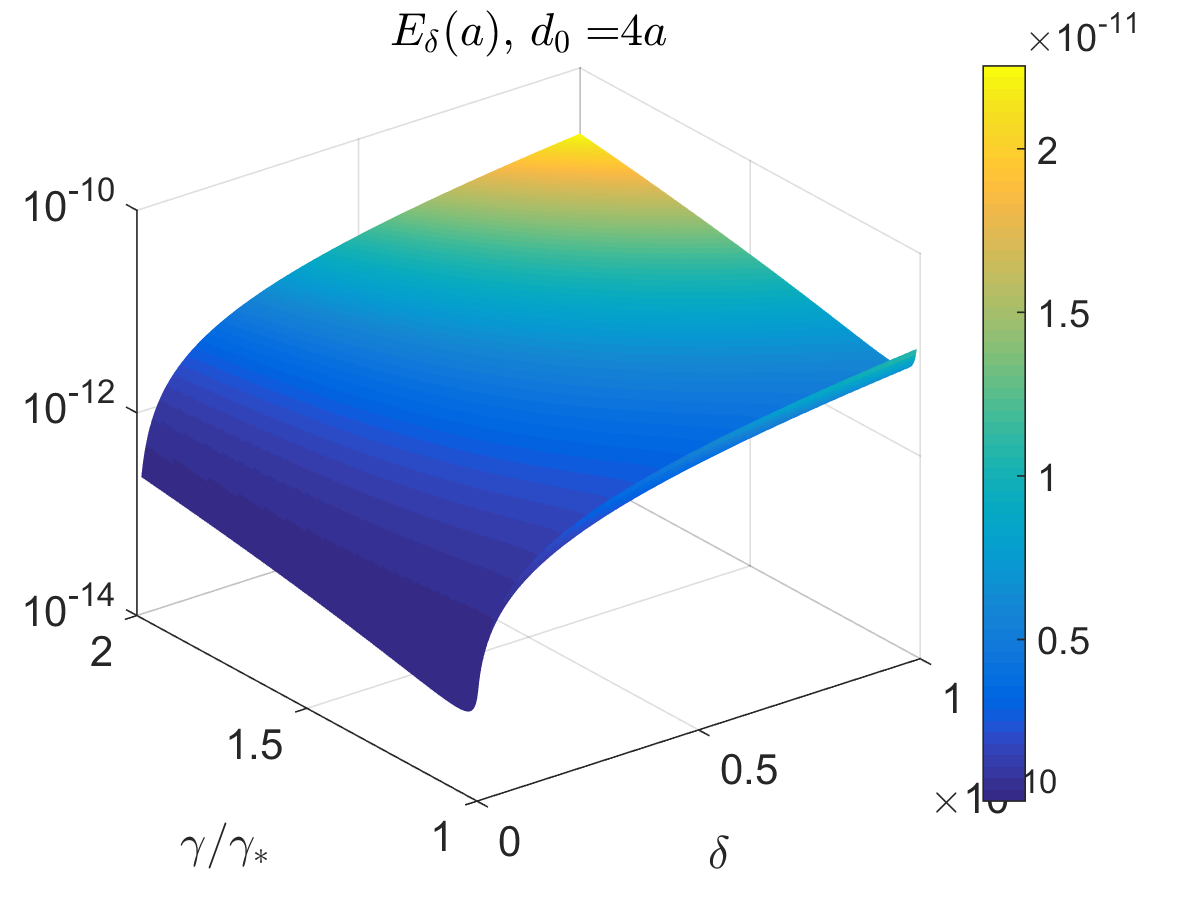}
            \myspace{}
            (a) & (b)\myspace{}
            \includegraphics[width=0.45\textwidth]{./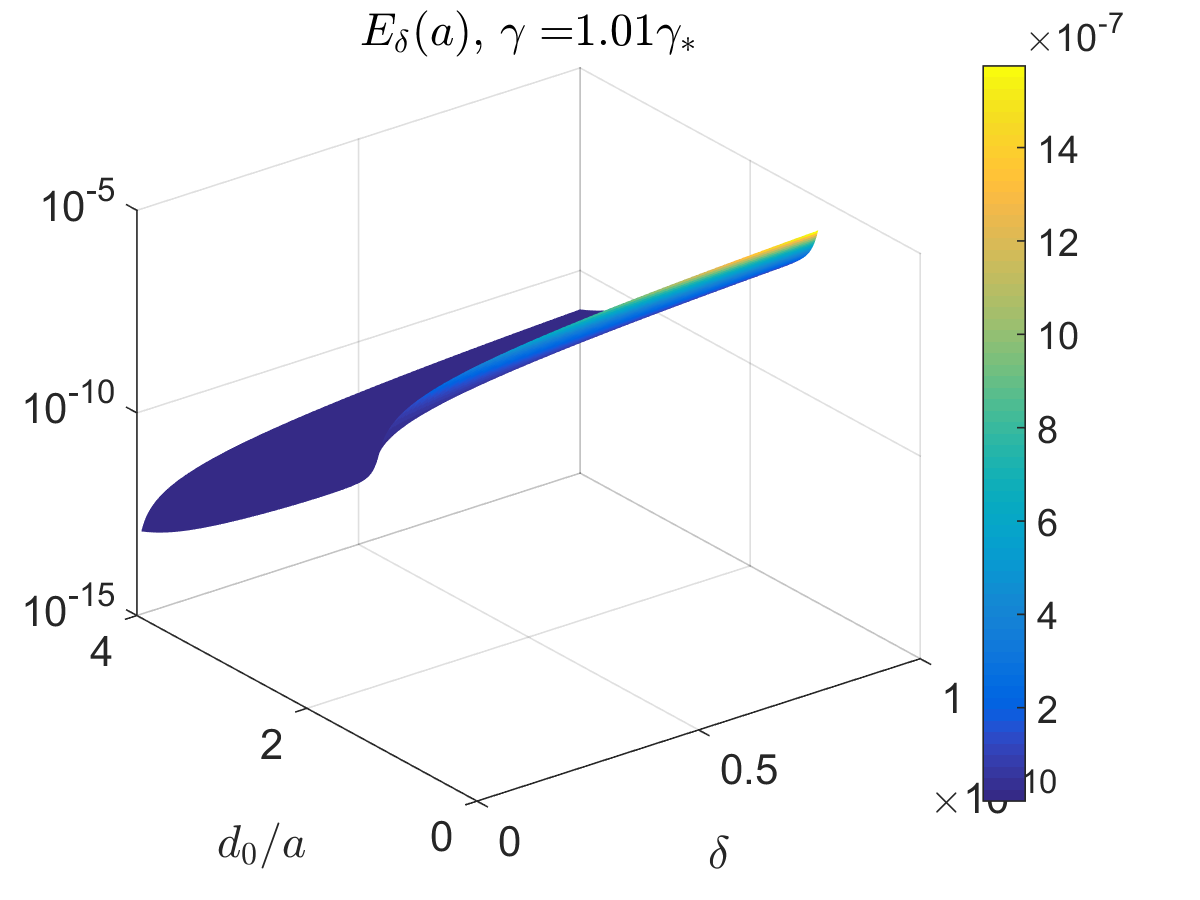}
            &
            \includegraphics[width=0.45\textwidth]{./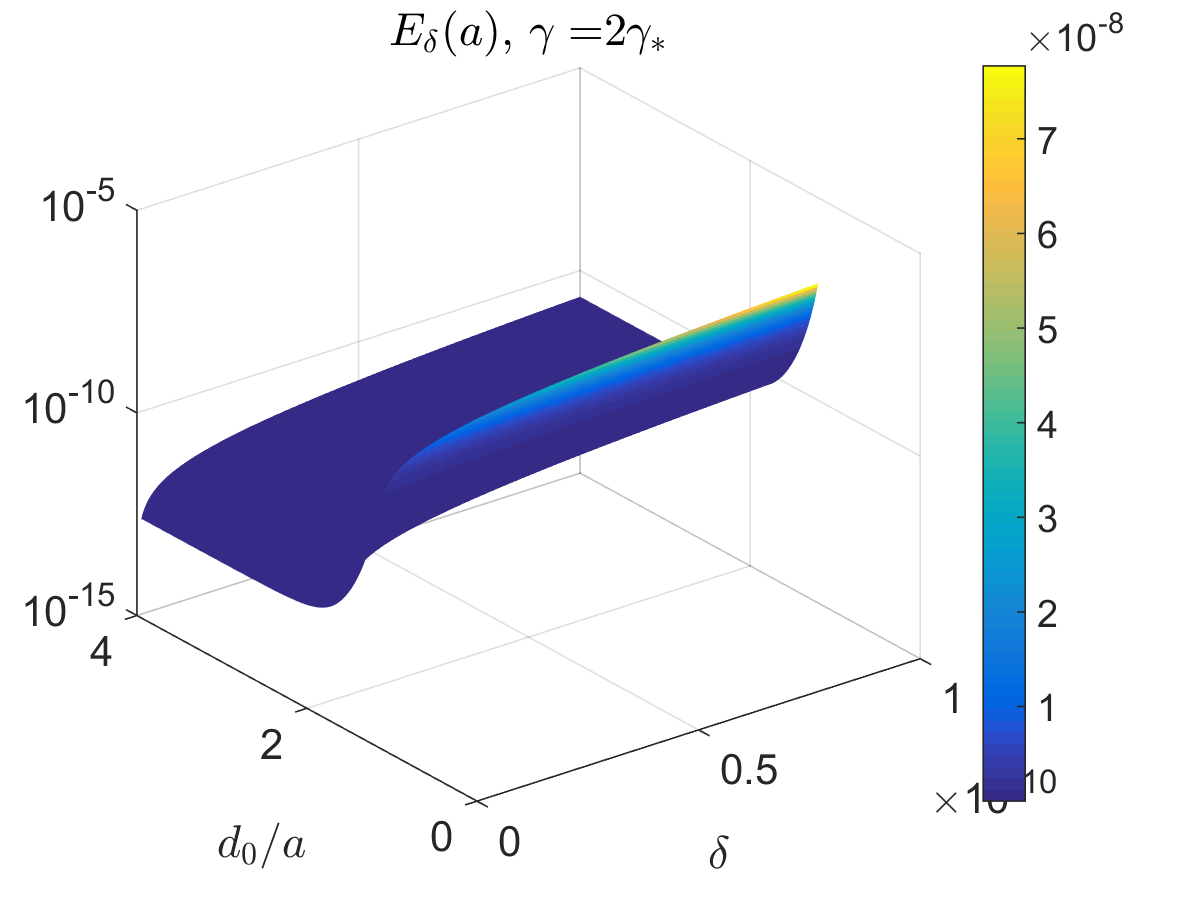}
            \myspace{}
            (c) & (d)\myspace{}
            \includegraphics[width=0.45\textwidth]{./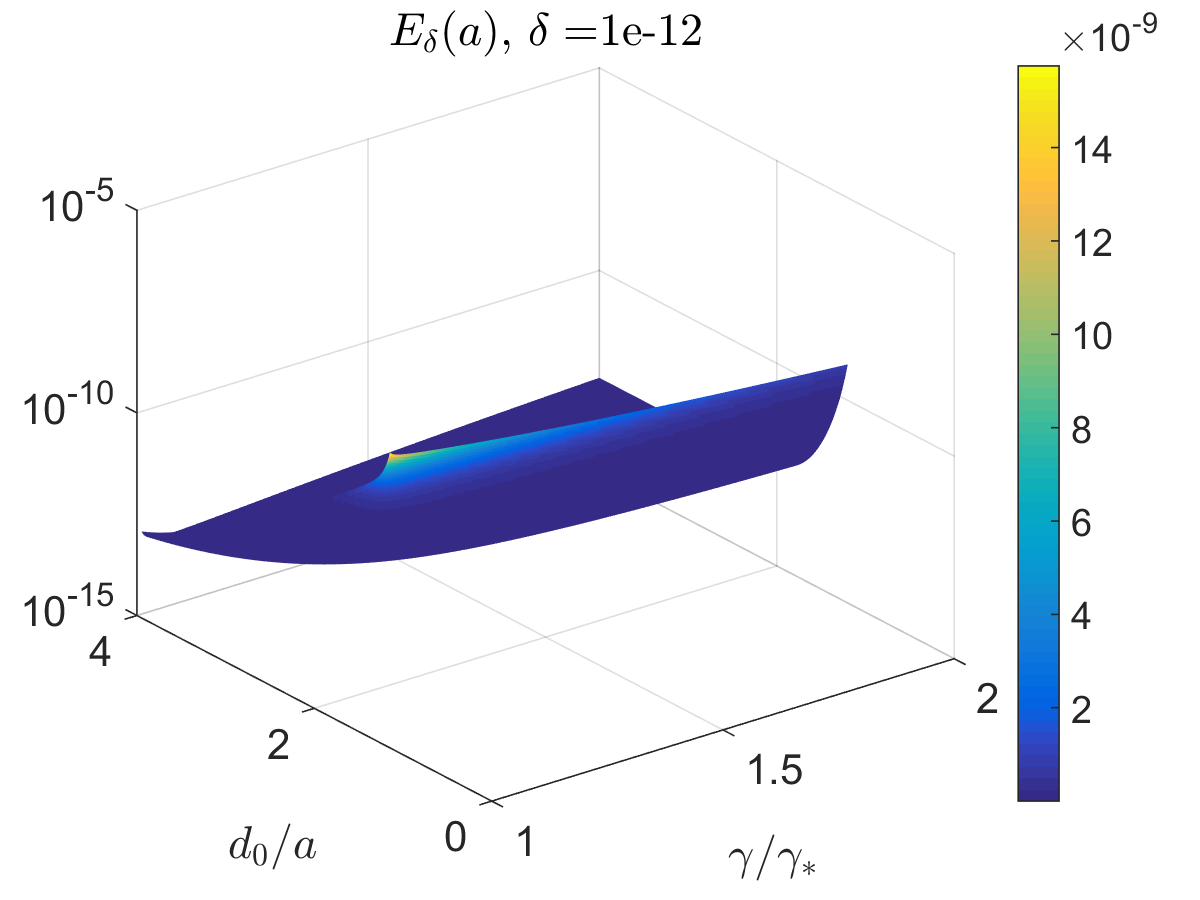}
            &
            \includegraphics[width=0.45\textwidth]{./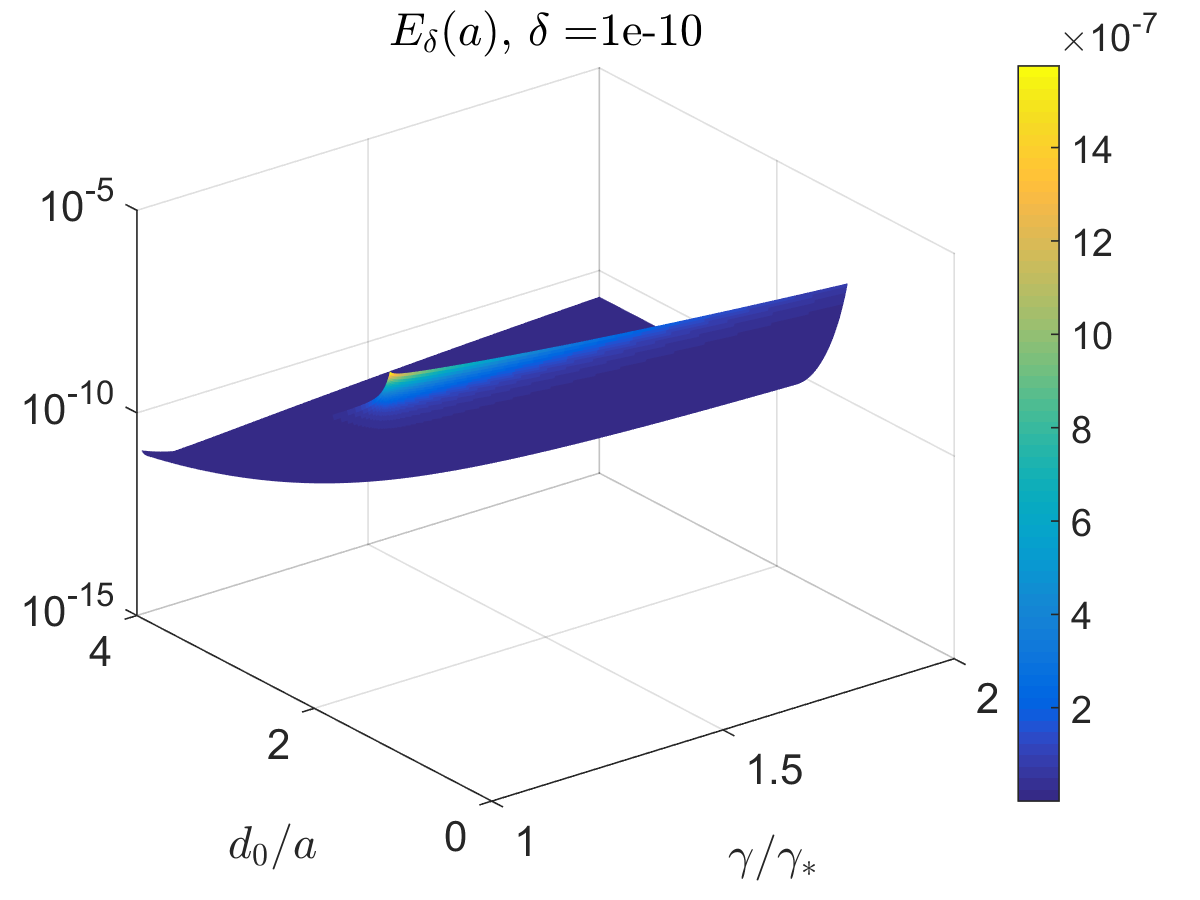}
            \myspace{}
            (e) & (f)
        \end{tabular}
        \caption{\emph{These are plots of $E_{\delta}(a)$ as a function
                of (a) $\delta$ and $\gamma$ ($d_0 = 1.2a$); (b)
                $\delta$ and $\gamma$ ($d_0 = 4a$); (c) $\delta$ and
                $d_0$ ($\gamma = 1.01\gamma_*$); (d) $\delta$ and $d_0$
                ($\gamma = 2\gamma_*$); (e) $\gamma$ and $d_0$ ($\delta 
                = 10^{-10}$); (f) $\gamma$ and $d_0$ 
               ($\delta = 10^{-12}$).}}
        \label{fig:dipole_bounded_pd}
    \end{center}
\end{figure}

Figure~\ref{fig:general_large_gamma} is similar to
Figure~\ref{fig:dipole_large_gamma}, except in
Figure~\ref{fig:general_large_gamma} we take
\begin{equation}\label{eqn:general}
    f(x,y) = 
    \begin{cases}
        \begin{aligned}
            &C\left[\dfrac{2}{d_1-d_0}\left(x-d_0\right)-1\right]^3
        \left[\left|\frac{2}{d_1-d_0}\left(x-d_0\right)-1\right|
        -1\right]^3\\
        &\qquad
        \cdot\left[\dfrac{2}{h_1-h_0}\left(y-h_0\right)-1\right]^3
        \left[\left|\frac{2}{h_1-h_0}\left(y-d_0\right)-1\right|
        -1\right]^3
    \end{aligned}
        &\text{for } d_0 \le x \le d_1, h_0 \le y \le h_1 \\
        0 &\text{otherwise}.
    \end{cases}
\end{equation}
To construct the plots, we have taken $C = 10^4$, $h_1 = -h_0 = 1$, and
$d_1 = d_0 + 2$.  The solution $V$ is smooth throughout the domain and
very small in the region to the left of the slab.
\begin{figure}[!hbp]
    \begin{center}
        \begin{tabular}{c c}
            \includegraphics[width=0.45\textwidth]{./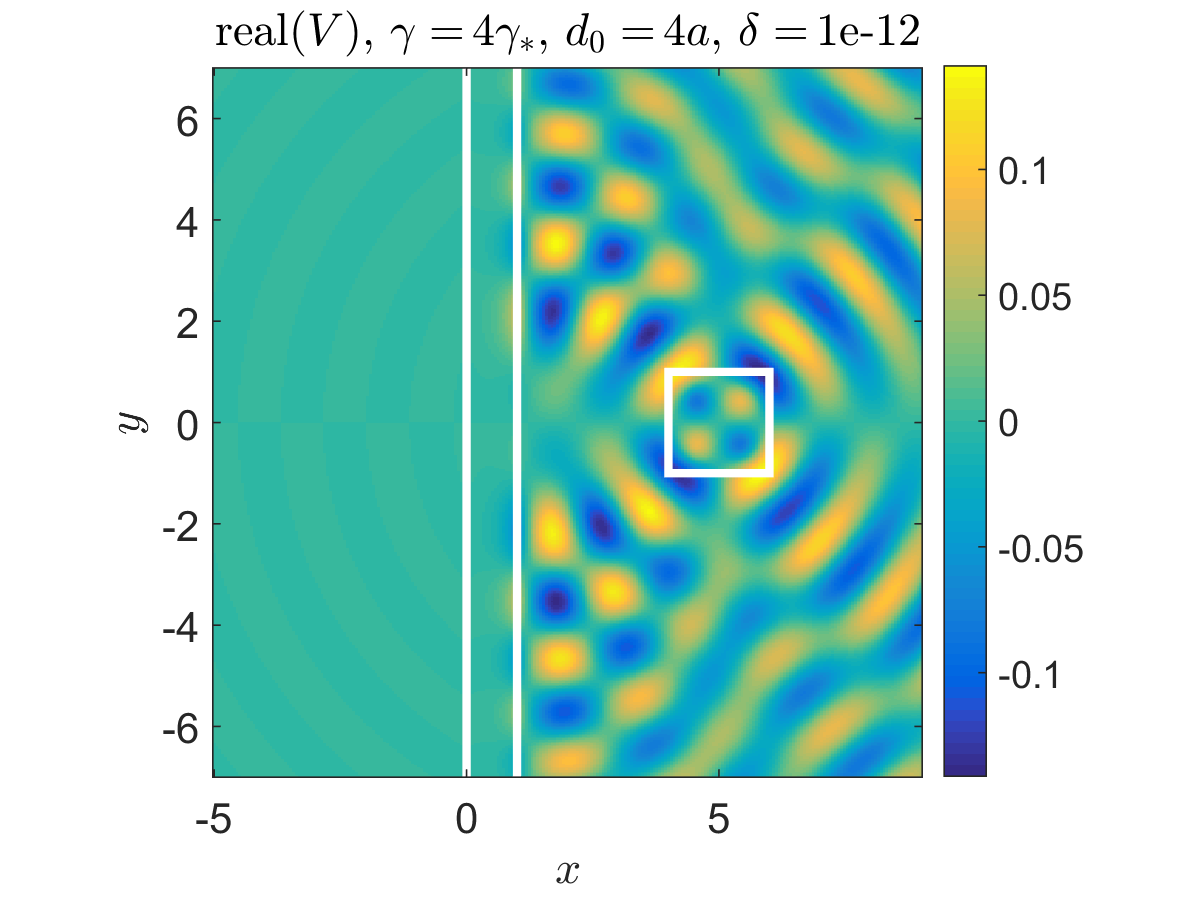}
            & \includegraphics[width=0.45\textwidth]{./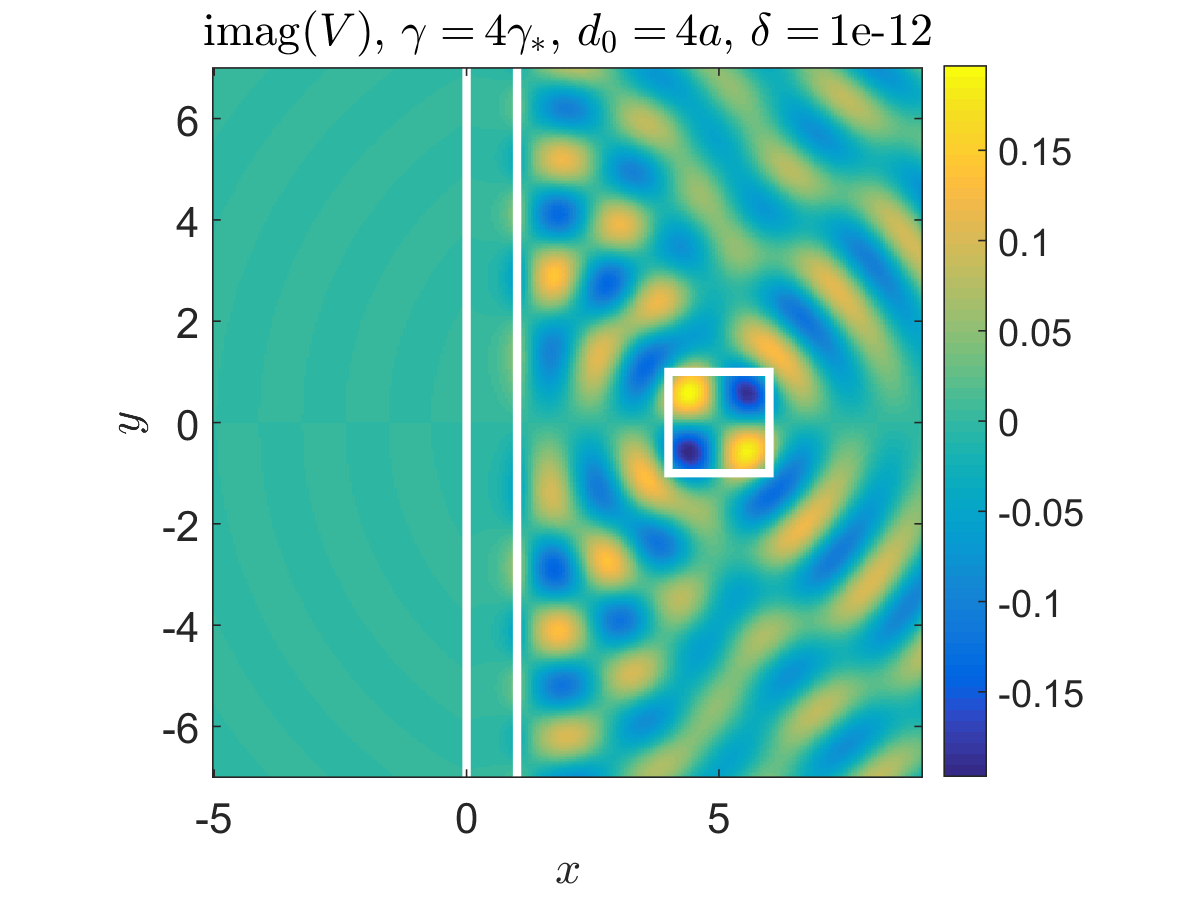}
            \\
            (a) & (b) \myspace{}
            \includegraphics[width=0.45\textwidth]{./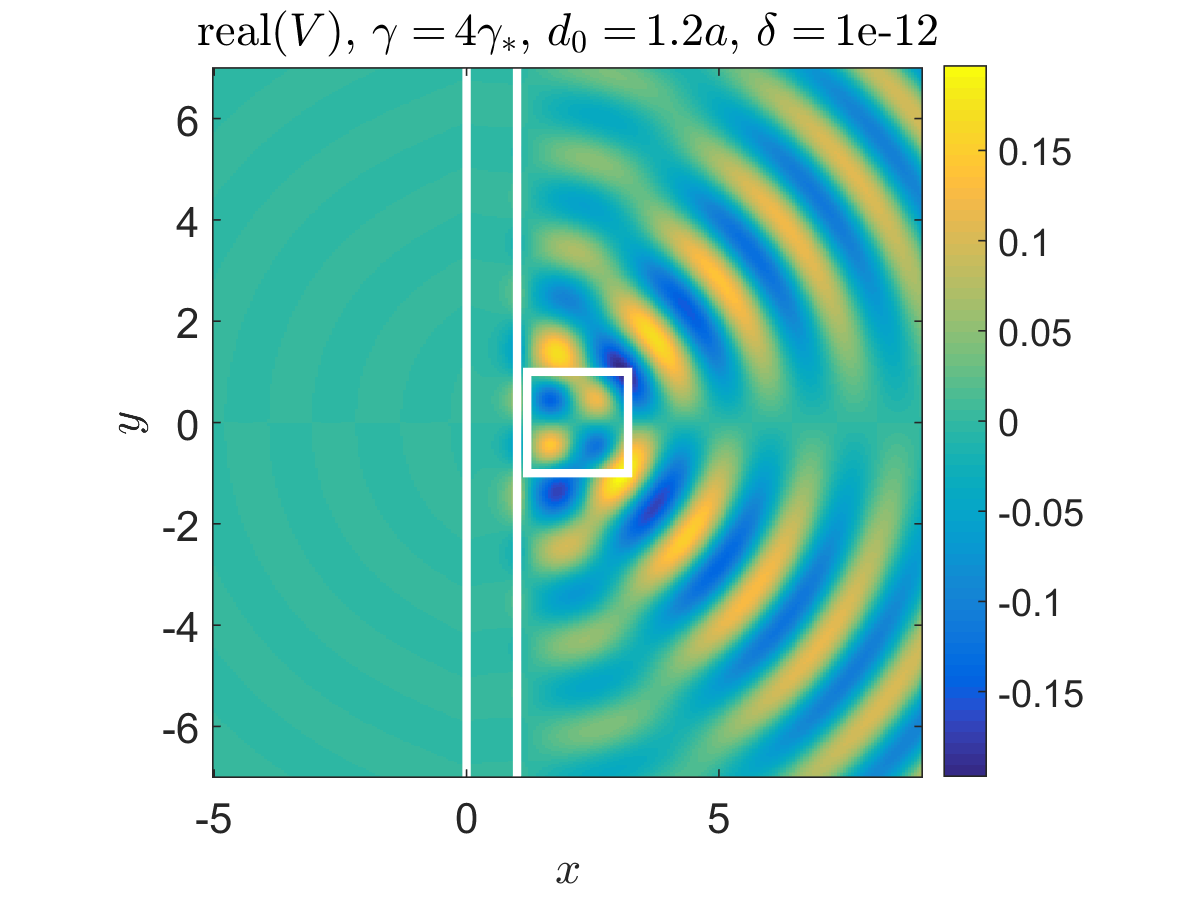}
            & \includegraphics[width=0.45\textwidth]{./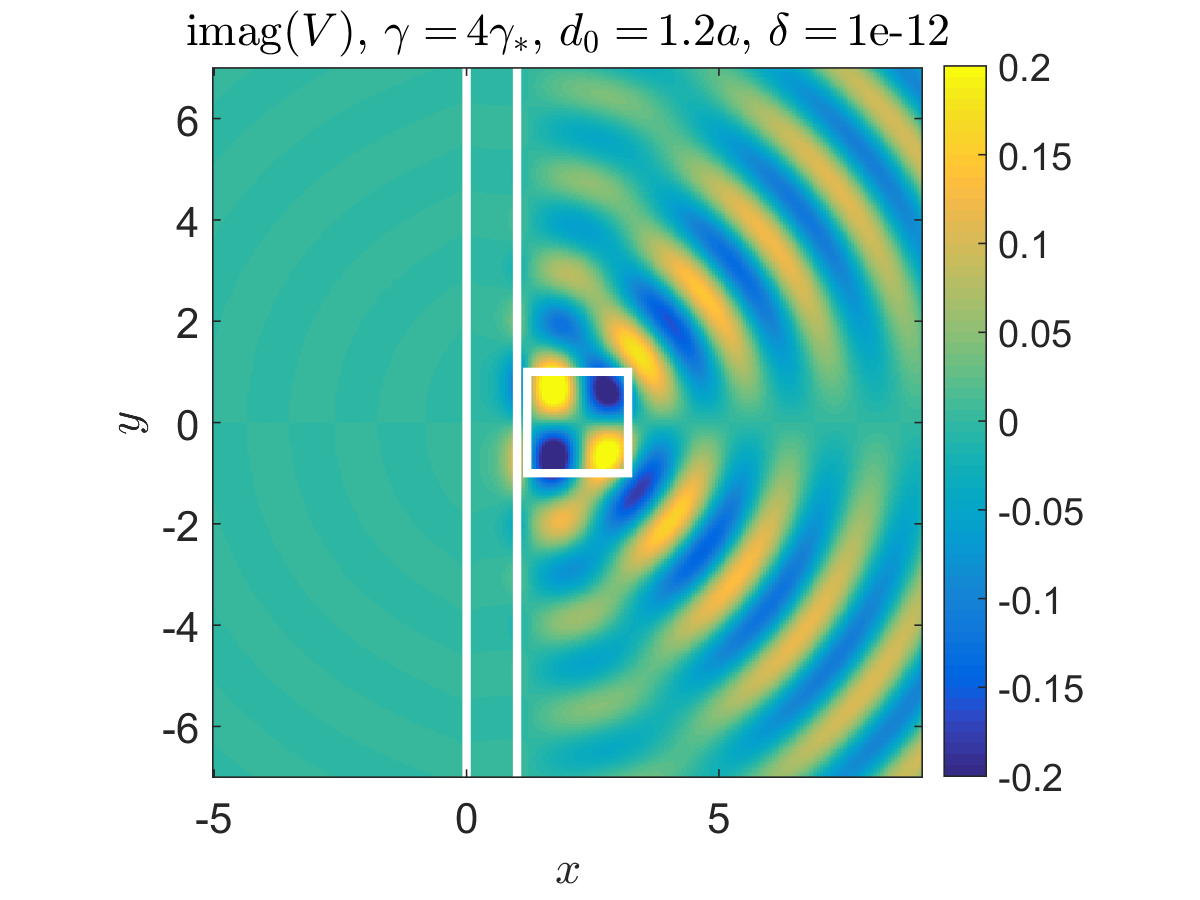}
            \\
            (c) & (d)
        \end{tabular}
        \caption{\emph{This is a plot of $V$, the solution to
                \eqref{eqn:finite_freq}, when $f$ is the function in
                \eqref{eqn:general} and
                $\gamma = 2\gamma_*$: (a) $\real(V)$ and (b) $\imag(V)$
                for $d_0 = 4a$; (c) $\real(V)$ and (d) $\imag(V)$ for
                $d_0 = 1.2a$.  To make the behavior of $V$ more clear,
                we clipped the maximum and minimum values in each plot 
                to $0.2$ (yellow) and $-0.2$ (blue) respectively.}}
        \label{fig:general_large_gamma}
    \end{center}
\end{figure}


\section{Long wavelength/low frequency regime ($\gamma < \gamma_*$)}

Unfortunately, the complicated nature of the expression \eqref{eqn:pd}
has thus far prevented us from deriving lower bounds on
$E_{\delta}(a)$ that would allow us to prove that $E_{\delta}(a)
\rightarrow \infty$ as $\delta\rightarrow 0^+$.  Undaunted, in this
section we present an heuristic argument, coupled with numerical
experiments, to illustrate why we believe the slab lens under
consideration exhibits ALR in the long-wavelength regime.


\subsection{Blow-up of $E_{\delta}(a)$}\label{subsec:pd_small_gamma}

The key result of this section is Lemma~\ref{lem:g_0_roots}:
$|g_0(p;\gamma)|$ has two real roots when $\gamma < \gamma_*$, namely $1
< p^1_{\gamma} < p^2_{\gamma}$.  Because both roots are larger than $1$,
the main contribution to the blow-up of $E_{\delta}(a)$ comes from the
integral over the interval $1 \le p < \infty$.  Indeed, the following
lemma shows that we do not need to worry about the integral over the
interval $0 \le p \le 1$.
\begin{lemma}\label{lem:blow_up_small_p}
    Suppose $0 < \gamma \le \gamma_*$ and $f \in L^2(\mathcal{M})$ with
    compact support.  Then there is a positive
    constant $C_{\gamma}$ and a $\delta_{\gamma} > 0$ such that 
    \[
        \int_0^1 L_{\delta}(p;\gamma) \di{p} = 
        \int_0^1 \frac{|I_p|^2}{|g_{\delta}|^2}M_{\delta}(p;\gamma)
        \di{p} \le C_{\gamma}
    \]
    for all $0 < \delta \le \delta_{\gamma}$.
\end{lemma}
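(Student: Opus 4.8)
The plan is a compactness argument on the interval $[0,1]$. By Lemma~\ref{lem:g_0_roots}, for $0<\gamma<\gamma_*$ the only real roots of $g_0(\cdot;\gamma)$ are $p^1_\gamma,p^2_\gamma$, both strictly larger than $1$, so $g_0(\cdot;\gamma)$ is nonvanishing on the compact set $[0,1]$; by continuity in $\delta$ the denominator $|g_\delta(p;\gamma)|^2$ therefore stays bounded away from zero on $[0,1]$ once $\delta$ is small, while every remaining factor of $L_\delta$ is easily bounded there, uniformly in small $\delta$. Multiplying these bounds gives the constant $C_\gamma$.

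First I would record that $g_0(\cdot;\gamma)$ has no zero on $[0,1]$ \emph{for every} $\gamma>0$ --- this also covers the boundary case $\gamma=\gamma_*$, which Lemma~\ref{lem:g_0_roots} does not address directly. For $0\le p<1$ we have $\nu_m=\sqrt{p^2-1}=\ii\sqrt{1-p^2}$, so, writing $u\equiv\sqrt{p^2+1}$ and using $u^2-(1-p^2)=2p^2$ and $u\sqrt{1-p^2}=\sqrt{1-p^4}$, formula \eqref{eqn:g_0} becomes
\[
    g_0(p;\gamma)=2p^2\left(1-\ee^{-2\gamma u}\right)-2\ii\sqrt{1-p^4}\left(1+\ee^{-2\gamma u}\right),
\]
whose imaginary part is strictly negative for $p\in[0,1)$; and $g_0(1;\gamma)=2\left(1-\ee^{-2\gamma\sqrt2}\right)>0$. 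Hence $c_\gamma\equiv\min_{p\in[0,1]}|g_0(p;\gamma)|>0$.

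Next, on $[0,1]$ the factor $\ee^{2\gamma\nu_m'}$ in \eqref{eqn:L} equals $1$, since $\nu_m$ is purely imaginary there (so $\nu_m'=0$), and $|I_p|\le C$ by Lemma~\ref{lem:I_p_upper_bound}. It remains to bound $M_\delta(p;\gamma)$ on $[0,1]$ and to transfer the lower bound on $|g_0|$ to $|g_\delta|$. Both $g_\delta(p;\gamma)$ and $M_\delta(p;\gamma)$ extend continuously to $[0,1]\times[0,\delta_0]$ for some $\delta_0>0$: on $[0,1]$ one has $\nu_s\to\sqrt{p^2+1}$ (real), $(1+\ii\delta)\nu_m\to\ii\sqrt{1-p^2}$ (purely imaginary), and $|R|\to1$; the only apparent singularity, the factor $1/\nu_s''$ in the last line of \eqref{eqn:M}, is multiplied by $1-\ee^{-2\ii\gamma\nu_s''}$, which vanishes to the same order as $\nu_s''\to0^+$, so the product extends continuously (explicitly $\bigl|\tfrac{1-\ee^{-x}}{x}\bigr|\le1$ whenever $\real x\ge0$, which applies to $x=2\ii\gamma\nu_s''$ and to $x=2\gamma\nu_s'$ alike). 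Uniform continuity on the compact set $[0,1]\times[0,\delta_0]$ then yields $\sup_{p\in[0,1]}|g_\delta(p;\gamma)-g_0(p;\gamma)|\to0$ and $\sup_{p\in[0,1]}M_\delta(p;\gamma)\le C_\gamma'$ for some $C_\gamma'>0$, as $\delta\to0^+$. Choosing $\delta_\gamma\in(0,\delta_0]$ so that $|g_\delta(p;\gamma)|\ge c_\gamma/2$ on $[0,1]$ for $0<\delta\le\delta_\gamma$, we get, for such $\delta$,
\[
    \int_0^1 L_\delta(p;\gamma)\,\di{p}=\int_0^1\frac{|I_p|^2\,\ee^{2\gamma\nu_m'}}{|g_\delta(p;\gamma)|^2}\,M_\delta(p;\gamma)\,\di{p}\le\frac{4C^2C_\gamma'}{c_\gamma^2}\equiv C_\gamma,
\]
which is the claim.

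The only delicate point --- and it is a mild one --- is the continuous extension of $M_\delta$ across $\delta=0$: one must check that the factors $1/\nu_s'$ and $1/\nu_s''$ in \eqref{eqn:M} do not cause blow-up as $\delta\to0^+$, the first because $\nu_s'$ stays bounded below by a positive constant on $[0,1]$, the second because it is paired with a numerator vanishing at the same rate. If one prefers to avoid the compactness/continuity packaging, one can instead bound each bracket in \eqref{eqn:M} directly via $\bigl|\tfrac{1-\ee^{-x}}{x}\bigr|\le1$ for $\real x\ge0$, together with the boundedness of $|\nu_s|$, $|R|$, and $p$ on $[0,1]$ and the non-vanishing of $g_0$ established in the second paragraph; everything else is routine.
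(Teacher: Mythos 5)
Your proof is correct and follows essentially the same compactness-and-continuity strategy as the paper: bound $|I_p|$ and $M_\delta$ above uniformly on $[0,1]$, show $|g_\delta|$ converges uniformly to $|g_0|$ there, and use the fact that $g_0$ has no zero on $[0,1]$ to bound the denominator away from $0$ for small $\delta$. Two small refinements you make are worth keeping in mind: your direct computation ($\imag g_0<0$ on $[0,1)$, $g_0(1;\gamma)>0$) shows $g_0$ is nonvanishing on $[0,1]$ for \emph{every} $\gamma>0$, which cleanly covers the boundary case $\gamma=\gamma_*$ that Lemma~\ref{lem:g_0_roots} does not explicitly address and on which the paper's proof implicitly leans; and you explicitly justify the continuous extension of $M_\delta$ across $\delta=0$ by pairing $1/\nu_s''$ with the vanishing numerator via $\bigl|(1-\ee^{-x})/x\bigr|\le 1$ for $\real x\ge 0$, a point the paper states without comment.
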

\begin{remark}\label{rem:blow_up}
    We emphasize that Lemma~\ref{lem:blow_up_small_p} also holds for
    those sources for which the bound in \eqref{eqn:I_p_stricter_bounds}
    holds (e.g., dipole sources) --- see
    Remark~\ref{rem:dipole_bounded}.
\end{remark}
\begin{proof}
    First, we note that $M_{\delta}(p;\gamma)$ is continuous for
    $\delta\in[0,1]$, $p \in [0,1]$, and $\gamma \in [0,\gamma_*]$, so
    it is bounded by a constant independent of $\delta$, $p$, and
    $\gamma$.  Additionally, $|I_p|^2$ is also bounded by a constant,
    thanks to Lemma~\ref{lem:I_p_upper_bound}.  All that remains for us
    to show is that $|g_{\delta}(p;\gamma)|$ is bounded away from $0$.  

    We define the function 
    \begin{equation}\label{eqn:Xi_def}
        \Xi_{\delta}(\gamma) \equiv \max_{p\in [0,1]}
        \left||g_{\delta}(p;\gamma)|-|g_0(p;\gamma)|\right|.
    \end{equation}
    Because $|g_{\delta}(p;\gamma)|$ and $|g_0(p;\gamma)|$ are both
    continuous for $0 \le p \le 1$, the above maximum is attained, say
    at $p = p^*_{\delta}(\gamma)$.  This means that
    \[
        \Xi_{\delta}(\gamma) = ||g(p^*_{\delta}(\gamma);\gamma)| -
        |g_0(p^*_{\delta}(\gamma);\gamma)||.
    \]

    Now let $\{\delta_n\}_{n=1}^{\infty}$ be a sequence converging to
    $0$ as $n \rightarrow \infty$.  Because $p^*_{\delta_n}(\gamma)$
    is a bounded sequence, it has a convergent subsequence
    $p^*_{\delta_{n_k}}(\gamma)$.  Along this subsequence, 
    \[
        \Xi_{\delta_{n_k}}(\gamma) =
        ||g_{\delta_{n_k}}(p^*_{\delta_{n_k}}(\gamma);\gamma)| -
        |g_0(p^*_{\delta_{n_k}}(\gamma);\gamma)|| 
        \rightarrow 0 \eqtext{as } k \rightarrow
        \infty
    \]
    by Lemma~\ref{lem:g_0}.  In other words, every sequence
    $\Xi_{\delta_n}(\gamma)$ has a subsequence that converges to $0$,
    which implies that every sequence $\Xi_{\delta_n}$ converges to $0$.
    Because the original sequence $\delta_n$ was arbitrary, this implies
    that
    \[
        \lim_{\delta\rightarrow 0^+}\Xi_{\delta}(\gamma) = 0.
    \]
    In combination with \eqref{eqn:Xi_def}, this implies
    that $|g_{\delta}(p;\gamma)|$ converges to
    $|g_0(p;\gamma)|$ \emph{uniformly} in $p$ for $0 \le p \le 1$.
    Thus for every $\epsilon > 0$ there is a $\delta_{\gamma} > 0$ 
    such that 
     \[
         |g_{\delta}(p;\gamma)| \ge |g_0(p;\gamma)| - \epsilon
     \]
     for all $p \in [0,1]$ and all $0 < \delta \le \delta_{\gamma}$.   
     If we take 
     \[
         \epsilon = \frac{1}{2}\min_{0 \le p \le 1} |g_0(p;\gamma)|,
     \]
     then
     \[
         |g_{\delta}(p;\gamma)| \ge \frac{1}{2}|g_0(p;\gamma)| \ge
         C_{\gamma} > 0
     \]
     for all $p \in [0,1]$ (the last two inequalities hold because the
     roots of $|g_0|$ are larger than $1$ by Lemma~\ref{lem:g_0_roots}).
     Combining this result with the first paragraph of the proof gives
     us the bound
     \[
         \int_0^1 \frac{|I_p|^2}{|g_{\delta}(p;\gamma)|^2}
             M_{\delta}(p;\gamma) \di{p}
         \le 
         C\int_0^1 \frac{1}{|g_0(p;\gamma)|^2} \di{p} \le C_{\gamma}
     \]
     for some constant $C_{\gamma} > 0$.
\end{proof}

The preceding lemma proves that we only need to study the integral in
\eqref{eqn:pd} over the interval $1 \le p < \infty$.  Because
$|g_{\delta}(p;\gamma)| \rightarrow |g_0(p;\gamma)|$ as $\delta
\rightarrow 0^+$, it should be the case that $|g_{\delta}(p;\gamma)|
\approx 0$ near the roots of $|g_0(p;\gamma)|$.  Inspired by our earlier
work in the quasistatic regime, we conjecture that
$|g_{\delta}(p^1_{\gamma};\gamma)|$ and
$|g_{\delta}(p^2_{\gamma};\gamma)|$ are on the order of $\delta$ as
$\delta \rightarrow 0^+$. 
\begin{conjecture}\label{con:g_order_delta}
    Suppose $0 < \gamma < \gamma_*$, and let $1 < p^1_{\gamma} <
    p^2_{\gamma}$ be the
    roots of $g_0(p;\gamma)$.  Then there is a $\delta_{\gamma} > 0$
    such that $|g_{\delta}(p;\gamma)| \ne 0$ for all $1 \le p < \infty$
    and all $0 < \delta \le \delta_{\gamma}$; however,
    $|g_{\delta}(p^1_{\gamma};\gamma)| = \mathcal{O}(\delta)$ and
    $|g_{\delta}(p^2_{\gamma};\gamma)| = \mathcal{O}(\delta)$ as $\delta
    \rightarrow 0^+$.
\end{conjecture}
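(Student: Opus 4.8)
The plan is to study $g_\delta(p;\gamma)$ via its first-order Taylor expansion in $\delta$ about $\delta=0$, exploiting that, for real $p>1$, the $\delta$-derivative of $g_\delta$ at $\delta=0$ is \emph{purely imaginary}. Since $\nu_m=\sqrt{p^2-1}$ is independent of $\delta$ and $\nu_s=\sqrt{p^2+1+\ii\delta}$ is analytic in $\delta$ near $\delta=0$ (the radicand stays far from the branch cut), $g_\delta(p;\gamma)$ from \eqref{eqn:g} is jointly analytic in $(p,\delta)$ near $(p^j_\gamma,0)$; differentiating with $\partial_\delta\nu_s|_{\delta=0}=\ii/(2\sqrt{p^2+1})$ gives
\[
    \partial_\delta g_\delta(p;\gamma)\big|_{\delta=0}=\ii\,\Phi(p;\gamma),
    \qquad
    \Phi(p;\gamma)\equiv 2(u-v)\Bigl(\tfrac1{2u}-v\Bigr)
    -2(u+v)\Bigl(\tfrac1{2u}+v\Bigr)\ee^{-2\gamma u}
    +\tfrac{\gamma(u+v)^2}{u}\ee^{-2\gamma u},
\]
where $u\equiv\sqrt{p^2+1}$, $v\equiv\sqrt{p^2-1}$, and $\Phi$ is real for $p>1$. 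Hence, using also that $g_0(p;\gamma)$ is real for $p\ge1$, for real $p$ near $p^j_\gamma$ one has $g_\delta(p;\gamma)=g_0(p;\gamma)+\ii\delta\,\Phi(p;\gamma)+\mathcal{O}(\delta^2)$ and $\imag g_\delta(p;\gamma)=\delta\,\Phi(p;\gamma)+\mathcal{O}(\delta^2)$, with remainders uniform on a fixed compact neighborhood.

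The order-$\delta$ bound is then nearly immediate: since $g_0(p^j_\gamma;\gamma)=0$ (Lemma~\ref{lem:g_0_roots}), the expansion gives $g_\delta(p^j_\gamma;\gamma)=\ii\delta\,\Phi(p^j_\gamma;\gamma)+\mathcal{O}(\delta^2)$, whence $|g_\delta(p^j_\gamma;\gamma)|=\mathcal{O}(\delta)$ for $j=1,2$; and if $\Phi(p^j_\gamma;\gamma)\ne0$ this is sharp, i.e.\ $|g_\delta(p^j_\gamma;\gamma)|$ is \emph{exactly} of order $\delta$. To check $\Phi(p^j_\gamma;\gamma)\ne0$ I would substitute into $\Phi$ the root relation $\ee^{-2\gamma u}=\bigl((u-v)/(u+v)\bigr)^2$ coming from $g_0(p^j_\gamma;\gamma)=0$ (equivalently $\gamma u=\ln(p^2+\sqrt{p^4-1})$), factor out $u-v>0$, and set $T\equiv p^2+\sqrt{p^4-1}$ (so $p^2=(T^2+1)/(2T)$, $\sqrt{p^4-1}=(T^2-1)/(2T)$, with $T$ sweeping $(1,\infty)$ as $p$ does). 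A direct computation then collapses the expression to a positive multiple of $\dfrac{1-T^3}{T^2}+\dfrac{2\ln T}{T+1}=-\dfrac{(T^3-1)(T+1)-2T^2\ln T}{T^2(T+1)}$, so it suffices to prove $(T^3-1)(T+1)>2T^2\ln T$ for $T>1$. This follows from a short calculus argument: $\phi(T)\equiv(T^3-1)(T+1)-2T^2\ln T$ satisfies $\phi(1)=0$, $\phi'(1)=4>0$, and $\phi''(T)=12T^2+6T-6-4\ln T\ge 12T^2+2T-2>0$ on $[1,\infty)$ (using $\ln T\le T-1$), so $\phi$ is strictly increasing past $T=1$.

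For the non-vanishing claim I would split $[1,\infty)$ into three regimes. In a $\delta$-independent tail $p\ge P_\gamma$, the bound $|\nu_s-(1+\ii\delta)\nu_m|\ge\real\nu_s-\nu_m\ge\sqrt{p^2+1}-\sqrt{p^2-1}\ge(\sqrt2\,p)^{-1}$ (using $\real\sqrt{p^2+1+\ii\delta}\ge\sqrt{p^2+1}$), together with the estimate $\bigl|[\nu_s+(1+\ii\delta)\nu_m]^2\ee^{-2\gamma\nu_s}\bigr|\le(\text{polynomial in }p)\cdot\ee^{-2\gamma p}$, gives $|g_\delta(p;\gamma)|\ge\tfrac12 p^{-2}-(\text{poly})\,\ee^{-2\gamma p}>0$ for all $p\ge P_\gamma$ and all $\delta\in(0,1)$. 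On the compact set $[1,P_\gamma]$ with small fixed balls around $p^1_\gamma$ and $p^2_\gamma$ removed, $|g_0(\cdot;\gamma)|$ is bounded below by a positive constant --- its only zeros on $[1,\infty)$ are the $p^j_\gamma$, and $g_0(1;\gamma)=2\bigl(1-\ee^{-2\sqrt2\gamma}\bigr)>0$ --- and $g_\delta\to g_0$ uniformly on $[1,P_\gamma]$ by Lemma~\ref{lem:g_0} (in fact $|g_\delta-g_0|\le C_\gamma\delta$ there, since $\partial_\delta g_\delta$ is bounded on $[1,P_\gamma]\times[0,1]$), so $|g_\delta(\cdot;\gamma)|$ remains bounded away from $0$ for $\delta$ small. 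Finally, on a small ball about each $p^j_\gamma$, the expansion $\imag g_\delta(p;\gamma)=\delta\,\Phi(p;\gamma)+\mathcal{O}(\delta^2)$, continuity of $\Phi$, and $\Phi(p^j_\gamma;\gamma)\ne0$ force $\imag g_\delta(p;\gamma)\ne0$, hence $g_\delta(p;\gamma)\ne0$, once $\delta$ is small. Taking $\delta_\gamma$ below the thresholds from all three regimes yields the stated non-vanishing.

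The main obstacle is the nondegeneracy $\Phi(p^j_\gamma;\gamma)\ne0$. The purely-imaginary structure of $\partial_\delta g_\delta|_{\delta=0}$ is what makes the whole scheme run, but one still has to carry the algebra through and, crucially, use the root relation --- equivalently the substitution $T=p^2+\sqrt{p^4-1}$ --- to reduce matters to the sign-definite inequality $(T^3-1)(T+1)>2T^2\ln T$; away from the root set $\Phi$ need not have a fixed sign, so this step genuinely exploits the structure of $g_0$. A secondary, more routine difficulty is bookkeeping the uniformity in $\delta$ across the three regimes, especially the polynomial-versus-exponential comparison in the tail.
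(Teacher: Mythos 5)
The statement is a \emph{conjecture} in the paper, not a theorem: the authors say explicitly that they could not prove it, sketch a Taylor-expansion strategy for $|g_\delta|$ in $\delta$, reject that strategy on the grounds that their numerics suggest the relevant derivative is badly behaved for large $p$, and ultimately offer only the numerical plots in Figure~\ref{fig:g_delta}. So there is no paper proof to compare against; you are proposing to settle something the authors left open.

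Having checked the key steps, your proposal appears to be a genuine proof, and the crucial idea is exactly what the paper's own sketch was missing: you Taylor-expand the complex-valued function $g_\delta$ rather than its modulus $|g_\delta|$. This matters because $\partial_\delta g_\delta|_{\delta=0}=\ii\Phi(p;\gamma)$ is purely imaginary while $g_0$ is real on $[1,\infty)$, so $\partial_\delta|g_\delta|^2|_{\delta=0}=2\real\bigl(\overline{g_0}\cdot\ii\Phi\bigr)=0$ everywhere and $\partial_\delta|g_\delta|$ is a $0/0$ expression precisely at the roots $p^j_\gamma$ --- this is why the paper's route stalls, and yours does not. I verified your formula for $\Phi$; at the roots, using $\ee^{-2\gamma u}=\bigl((u-v)/(u+v)\bigr)^2$, $u+v=\sqrt{2T}$, $u-v=\sqrt{2/T}$, $u=(T+1)/\sqrt{2T}$, $v=(T-1)/\sqrt{2T}$, and $\gamma u=\ln T$ with $T=p^2+\sqrt{p^4-1}$, one gets
\[
\Phi(p^j_\gamma;\gamma)
=\frac{2}{T+1}\left[\frac{1-T^3}{T^2}+\frac{2\ln T}{T+1}\right]
=-\frac{2\bigl[(T^3-1)(T+1)-2T^2\ln T\bigr]}{T^2(T+1)^2},
\]
confirming your claim that $\Phi$ at a root is a positive multiple of $(1-T^3)/T^2+2\ln T/(T+1)$. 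The sign analysis of $\phi(T)\equiv(T^3-1)(T+1)-2T^2\ln T$ also checks out: $\phi(1)=0$, $\phi'(1)=4$, and $\phi''(T)=12T^2+6T-6-4\ln T\ge 12T^2+2T-2>0$ for $T\ge1$ (via $\ln T\le T-1$), so $\phi>0$ and hence $\Phi(p^j_\gamma;\gamma)<0$, giving the nondegeneracy. The three-regime argument for nonvanishing on $[1,\infty)$ is likewise sound: the algebraic lower bound $|\nu_s-(1+\ii\delta)\nu_m|^2\ge 1/(2p^2)$ beats the exponentially small competitor in the tail uniformly in $\delta\in(0,1]$; the Lipschitz-in-$\delta$ estimate $|g_\delta-g_0|\le C_\gamma\delta$ works on the compact middle range because $\partial_\delta g_\delta$ is continuous (hence bounded) on $[1,P_\gamma]\times[0,1]$; and near each $p^j_\gamma$ the joint analyticity of $(p,\delta)\mapsto g_\delta(p;\gamma)$ gives a uniform Taylor remainder, so $\imag g_\delta(p;\gamma)=\delta\,\Phi(p;\gamma)+\mathcal{O}(\delta^2)\ne 0$ once $\delta$ is small.

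In short: where the paper has only numerical evidence and a sketch it argues against, you have supplied what appears to be a complete proof, and the decisive idea is to work with $g_\delta$ (not $|g_\delta|$) so that the purely-imaginary first $\delta$-derivative can be exploited cleanly.
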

One way to prove this conjecture would be to expand
$|g_{\delta}(p^j_{\gamma};\gamma)|$ (for $j = 1$, $2$) in Taylor series
around $\delta = 0$ and then prove that
$\partial|g_{\delta}(p^j_{\gamma};\gamma)|/\partial \delta$ is uniformly
bounded for $p \in [1,\infty)$ and $\delta$ small enough.
Unfortunately, these derivatives are quite complicated; moreover,
numerical experiments indicate that they become unbounded as $p
\rightarrow \infty$, so it is unlikely that this technique would
work even if the expressions were suitable for analytic study.
To provide some justification for Conjecture~\ref{con:g_order_delta}, in Figures~\ref{fig:g_delta}(a) and (b) we plot
\begin{equation}\label{eqn:g1g2}
    \frac{|g_{\delta}(p^1_{\gamma}; \gamma)|}{\delta} \eqtext{and}
    \frac{|g_{\delta}(p^1_{\gamma}; \gamma)|}{\delta}
\end{equation}
as functions of $\delta$ and $\gamma$ over the ranges $10^{-12}\le
\delta \le 10^{-10}$ and $0.1\gamma_* \le \gamma \le
0.99\gamma_*$\footnote{We believe the functions in \eqref{eqn:g1g2}
    remain bounded as $\delta \rightarrow 0$ for all $0 < \gamma <
    \gamma_*$; however, $p^2_{\gamma} \rightarrow \infty$ as $\gamma
    \rightarrow 0$, so the numerical computation of the roots becomes
    more difficult as $\gamma$ gets closer to $0$.  Similarly,
$p^1_{\gamma_*} = p^2_{\gamma_*}$, so as $\gamma$ gets close to
$\gamma_*$ it becomes difficult to distinguish the roots}.  For
each $\gamma$, we see that the functions in \eqref{eqn:g1g2} remain
bounded as $\delta$ gets close to $0$, which seems to indicate that
$|g_{\delta}(p^1_{\gamma}; \gamma)| = \mathcal{O}(\delta)$ and
$|g_{\delta}(p^2_{\gamma}; \gamma)| = \mathcal{O}(\delta)$ as $\delta
\rightarrow 0$.  Curiously, both functions in \eqref{eqn:g1g2} seem to
depend very weakly on $\delta$.
\begin{figure}[!hbp]
    \begin{center}
        \begin{tabular}{c c}
            \includegraphics[width=0.45\textwidth]{./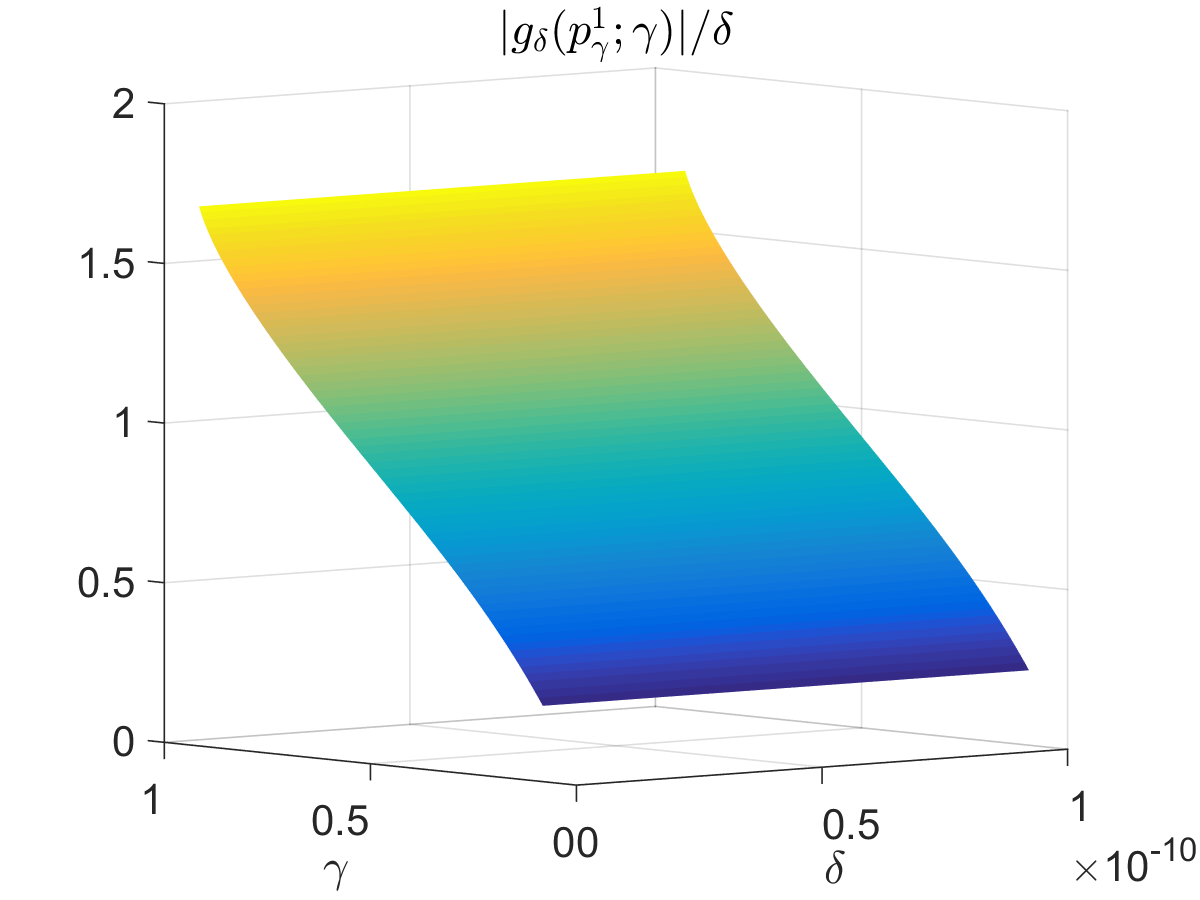}
            &
            \includegraphics[width=0.45\textwidth]{./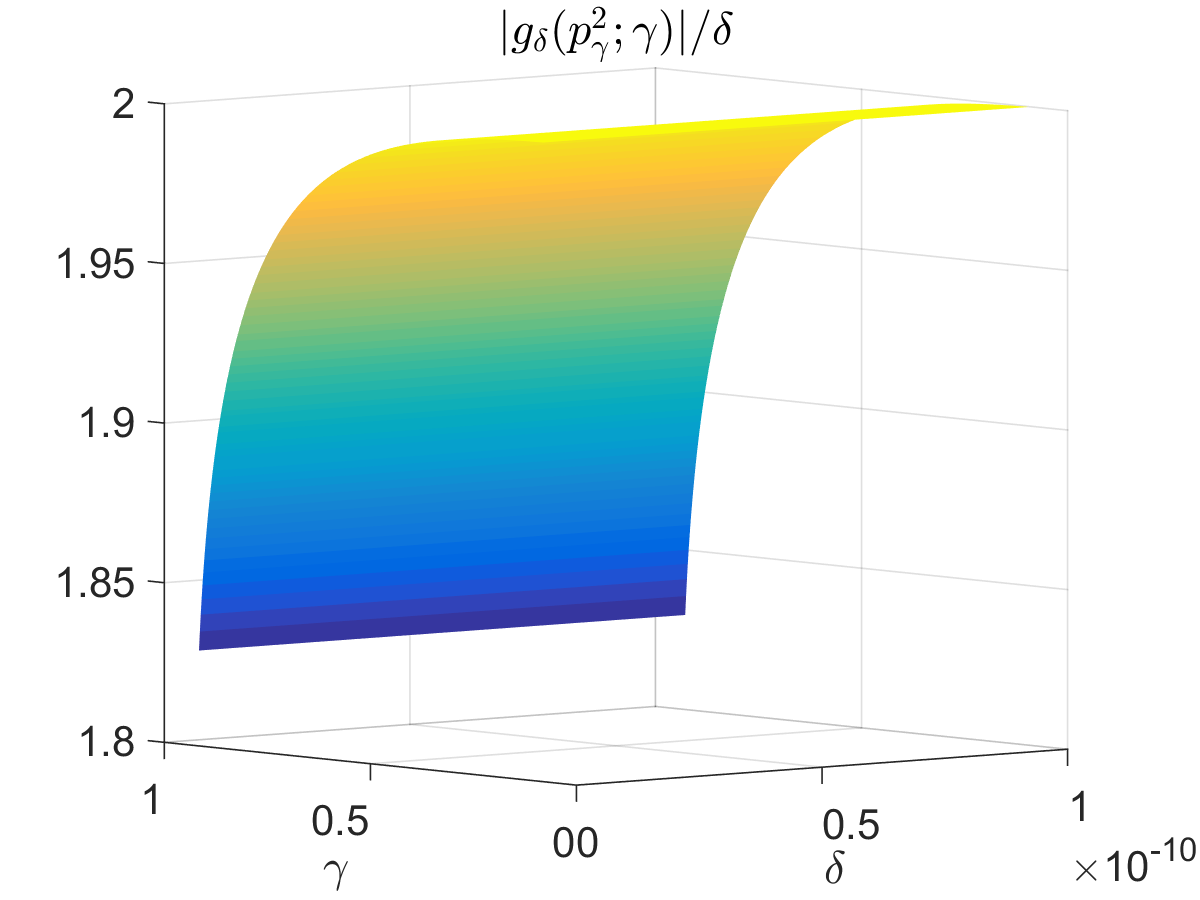}
            \\
            (a) & (b)
        \end{tabular}
        \caption{\emph{In this figure we plot (a)
                $|g_{\delta}(p^1_{\gamma};\gamma)|/\delta$ and (b)
                $|g_{\delta}(p^2_{\gamma};\gamma)|/\delta$ over the
                range $10^{-12} \le \delta \le 10^{-10}$ and
        $0.1\gamma_* \le \gamma \le 0.99\gamma_*$.}}
        \label{fig:g_delta}
    \end{center}
\end{figure}

Next, we conjecture that the $\mathcal{O}(\delta)$ behavior of
$|g_{\delta}(p;\gamma)|$ near $p^1_{\gamma}$ and $p^2_{\gamma}$ is not
canceled by the term $M_{\delta}(p;\gamma)$ in the numerator.
\begin{conjecture}\label{con:M_not_zero}
    Suppose $0 < \gamma < \gamma_*$, and define $M_{\delta}(p;\gamma)$
    as in \eqref{eqn:M}.  Then there exist positive constants
    $\delta_{\gamma}$ and $C_{\gamma}$ such that $M_{\delta}(p;\gamma)
    \ge C_{\gamma}$ near $p^1_{\gamma}$ and $p^2_{\gamma}$ for all $0 <
    \delta \le \delta_{\gamma}$.
\end{conjecture}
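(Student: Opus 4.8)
The plan to settle Conjecture~\ref{con:M_not_zero} is to pass to the limit $\delta\to 0^+$ in \eqref{eqn:M}, evaluate the resulting function at the two roots of $g_0$, and reduce the required positivity to an elementary hyperbolic inequality by means of the relation that characterizes those roots. Concretely, I would first show that $M_0(p;\gamma)\equiv\lim_{\delta\to 0^+}M_\delta(p;\gamma)$ exists for every $p>1$, with the convergence uniform on a compact neighborhood of $\{p^1_{\gamma},p^2_{\gamma}\}$. From $\nu_s^2=(p^2+1)+\ii\delta$ and $\nu_m=\sqrt{p^2-1}$ one has, as $\delta\to 0^+$, $\nu_s'\to\sqrt{p^2+1}$, $\nu_s''\to 0$ (uniformly for $p$ in a bounded set), $|\nu_s|^2\to p^2+1$, and $R\to R_0\equiv\bigl(\sqrt{p^2+1}+\sqrt{p^2-1}\bigr)/\bigl(\sqrt{p^2+1}-\sqrt{p^2-1}\bigr)>0$. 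The only factor in \eqref{eqn:M} that is not obviously jointly continuous in $(\delta,p)$ near $\delta=0$ is $\bigl(1-\ee^{-2\ii\gamma\nu_s''}\bigr)/(2\nu_s'')$; but $w\mapsto\bigl(1-\ee^{-2\ii\gamma w}\bigr)/(2w)$, with value $\ii\gamma$ at $w=0$, is entire, so that factor tends to $\ii\gamma$ uniformly. Apart from this removable singularity, no denominator in \eqref{eqn:M} approaches $0$ near $p^j_{\gamma}$ (recall $p^j_{\gamma}>1$ by Lemma~\ref{lem:g_0_roots}), so one obtains
\[
    M_0(p;\gamma)=\frac{1}{\pi}\bigl(\sqrt{p^2+1}-\sqrt{p^2-1}\bigr)^2
    \left\{(2p^2+1)\,\frac{1-E_0}{2\sqrt{p^2+1}}\,\bigl(1+R_0^2E_0\bigr)-2\gamma R_0E_0\right\},\qquad E_0\equiv\ee^{-2\gamma\sqrt{p^2+1}},
\]
uniformly near the roots. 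It then suffices to prove $M_0(p^j_{\gamma};\gamma)>0$ for $j=1,2$: continuity of $M_0$ gives a neighborhood on which $M_0\ge c_\gamma>0$, and the uniform convergence gives $\delta_\gamma>0$ with $M_\delta\ge c_\gamma/2=:C_\gamma$ there for all $0<\delta\le\delta_\gamma$, which is the assertion.

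Next I would simplify $M_0$ at a root. By Lemma~\ref{lem:g_0}, $g_0(p;\gamma)=0$ with $p>1$ is equivalent to $\bigl(\sqrt{p^2+1}-\sqrt{p^2-1}\bigr)^2=\bigl(\sqrt{p^2+1}+\sqrt{p^2-1}\bigr)^2E_0$, that is, to $R_0^2E_0=1$. Substituting $E_0=R_0^{-2}$ into the formula above and using $\bigl(\sqrt{p^2+1}+\sqrt{p^2-1}\bigr)\bigl(\sqrt{p^2+1}-\sqrt{p^2-1}\bigr)=2$ — which yields $\bigl(\sqrt{p^2+1}-\sqrt{p^2-1}\bigr)^2=2/R_0$ and $1-R_0^{-2}=(R_0-R_0^{-1})/R_0=2\sqrt{p^4-1}/R_0$ — one gets, after routine algebra,
\[
    M_0(p;\gamma)=\frac{4}{\pi R_0^2}\Bigl[(2p^2+1)\sqrt{p^2-1}-\gamma\Bigr]\qquad\text{whenever } g_0(p;\gamma)=0 \text{ and } p>1 .
\]
Thus everything reduces to the inequality $(2p^2+1)\sqrt{p^2-1}>\gamma$ at $p=p^1_{\gamma}$ and $p=p^2_{\gamma}$.

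Finally I would prove this inequality. Taking logarithms in $R_0^2E_0=1$ shows that at a root $\gamma=\bigl(\sqrt{p^2+1}\bigr)^{-1}\ln\bigl[(\sqrt{p^2+1}+\sqrt{p^2-1})/(\sqrt{p^2+1}-\sqrt{p^2-1})\bigr]$. Setting $\sqrt{p^2-1}=\sqrt{2}\sinh\theta$ (so $\sqrt{p^2+1}=\sqrt{2}\cosh\theta$, $p^2=1+2\sinh^2\theta$, and $\theta>0$ since $p>1$), the logarithm equals $2\theta$, hence $\gamma=\sqrt{2}\,\theta/\cosh\theta$ while $(2p^2+1)\sqrt{p^2-1}=\sqrt{2}\,(3+4\sinh^2\theta)\sinh\theta$. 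Multiplying the target inequality by $\cosh\theta/\sqrt{2}>0$ turns it into
\[
    (3+4\sinh^2\theta)\,\sinh\theta\cosh\theta>\theta ,
\]
which holds for every $\theta>0$ because $\sinh\theta\cosh\theta=\tfrac{1}{2}\sinh(2\theta)\ge\theta$ (as $\sinh x\ge x$ for $x\ge 0$) and $3+4\sinh^2\theta\ge 3$, so the left side is at least $3\theta$. This establishes $M_0(p^j_{\gamma};\gamma)>0$ for $j=1,2$ and completes the argument.

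The main obstacle is not conceptual but a matter of care: one must verify that the passage $\delta\to 0^+$ in \eqref{eqn:M} is legitimate and uniform on a neighborhood of the roots — the delicate point being the removable singularity of $\bigl(1-\ee^{-2\ii\gamma\nu_s''}\bigr)/(2\nu_s'')$ at $\nu_s''=0$, which is exactly what produces the $\ii\gamma$ (hence the $-2\gamma R_0E_0$ term) in $M_0$ — and then push through the algebraic reduction using $R_0^2E_0=1$. The hyperbolic substitution in the last step is what makes the final inequality transparent; seen directly along the root curve, $(2p^2+1)\sqrt{p^2-1}>\gamma$ is awkward to check. One should also note that the constants $C_\gamma,\delta_\gamma$ are genuinely $\gamma$-dependent ($p^1_\gamma\to 1^+$ and $p^2_\gamma\to\infty$ as $\gamma\to 0$), but that is all the conjecture asks for.
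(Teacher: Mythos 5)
The paper does not prove Conjecture~\ref{con:M_not_zero}: the statement is deliberately left as a conjecture, supported only by the numerical plots in Figure~\ref{fig:M_delta} of $M_{\delta}(p^j_{\gamma};\gamma)$ versus $\delta$ and $\gamma$. There is therefore no proof in the paper to compare against; what you have written, if correct, actually settles something the authors left open. Having checked your argument line by line, I did not find a gap. The limit $M_0(p;\gamma)$ is computed correctly from \eqref{eqn:M} via $|\nu_s|^2\to p^2+1$, $\nu_s'\to\sqrt{p^2+1}$, $\nu_s''\to 0$, $R\to R_0$, $(1+\delta^2)\to 1$, and the removable-singularity evaluation $(1-\ee^{-2\ii\gamma w})/(2w)\to\ii\gamma$ at $w=0$; since $\nu_s''=\delta/(2\nu_s')$ and $p^j_{\gamma}>1$, the convergence is indeed uniform on a compact neighborhood of the roots, so passing from pointwise positivity of $M_0$ to the conjecture's uniform bound for small $\delta$ is legitimate. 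The algebraic reduction at a root uses exactly the characterization $R_0^2 E_0=1$ from Lemma~\ref{lem:g_0}, together with $\bigl(\sqrt{p^2+1}-\sqrt{p^2-1}\bigr)^2=2/R_0$ and $1-R_0^{-2}=2\sqrt{p^4-1}/R_0$, and it does yield
\[
M_0(p;\gamma)=\frac{4}{\pi R_0^2}\Bigl[(2p^2+1)\sqrt{p^2-1}-\gamma\Bigr]
\]
at each root; I verified this. The hyperbolic substitution is the right move: with $\sqrt{p^2-1}=\sqrt 2\sinh\theta$, $\sqrt{p^2+1}=\sqrt 2\cosh\theta$ one gets $R_0=\ee^{2\theta}$ and $\gamma=\sqrt 2\,\theta/\cosh\theta$, and the target inequality collapses to $(3+4\sinh^2\theta)\sinh\theta\cosh\theta>\theta$, which holds for all $\theta>0$ since $\sinh\theta\cosh\theta=\tfrac12\sinh 2\theta\ge\theta$ and $3+4\sinh^2\theta\ge 3$. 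A pleasant consistency check: the map $\theta\mapsto\sqrt 2\,\theta/\cosh\theta$ attains its maximum at the solution of $\theta\tanh\theta=1$ ($\theta\approx 1.1997$), giving exactly $\gamma_*\approx 0.9373$, so your parametrization of the root curve matches Lemma~\ref{lem:g_0_roots}. If you write this up formally, I would spell out the joint continuity of $M_\delta$ in $(\delta,p)$ on $[0,\delta_0]\times K$ (using the analyticity of $w\mapsto(1-\ee^{-2\ii\gamma w})/(2w)$) as a short lemma, since that is the one place a careless reader could object; the rest of the argument is elementary and complete.
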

If Conjectures~\ref{con:g_order_delta} and \ref{con:M_not_zero} are
true, then \eqref{eqn:pd_a}--\eqref{eqn:L} imply that the part of the integrand $L_{\delta}(p;\gamma)$ that is
independent of the source $f$, namely
\[
    \ee^{2\gamma\nu_m'}\frac{M_{\delta}(p;\gamma)}
    {|g_{\delta}(p;\gamma)|^2},
\]
is on the order of $\delta^{-2}$ near $p^1_{\gamma}$ and $p^2_{\gamma}$
as $\delta \rightarrow 0^+$.  If $|I_p|^2$ is also bounded away from $0$
near $p^1_{\gamma}$ and $p^2_{\gamma}$, the entire integrand
$L_{\delta}(p;\gamma)$ will have values on the order of $\delta^{-2}$
near $p^1_{\gamma}$ and $p^2_{\gamma}$. 

To provide some justification for Conjecture~\ref{con:M_not_zero}, in
Figures~\ref{fig:M_delta}(a) and (b) we plot 
\[
    M_{\delta}(p^1_{\gamma};\gamma) \eqtext{and} 
    M_{\delta}(p^2_{\gamma};\gamma)
\]
as functions of $\delta$ and $\gamma$ over the same intervals as in
Figure~\ref{fig:g_delta}.  In particular, we note that
$M_{\delta}(p^1_{\gamma};\gamma)$ and $M_{\delta}(p^2_{\gamma};\gamma)$
are both bounded away from $0$ and seem to depend quite weakly on
$\delta$.  
\begin{figure}[!hbp]
    \begin{center}
        \begin{tabular}{c c}
            \includegraphics[width=0.45\textwidth]{./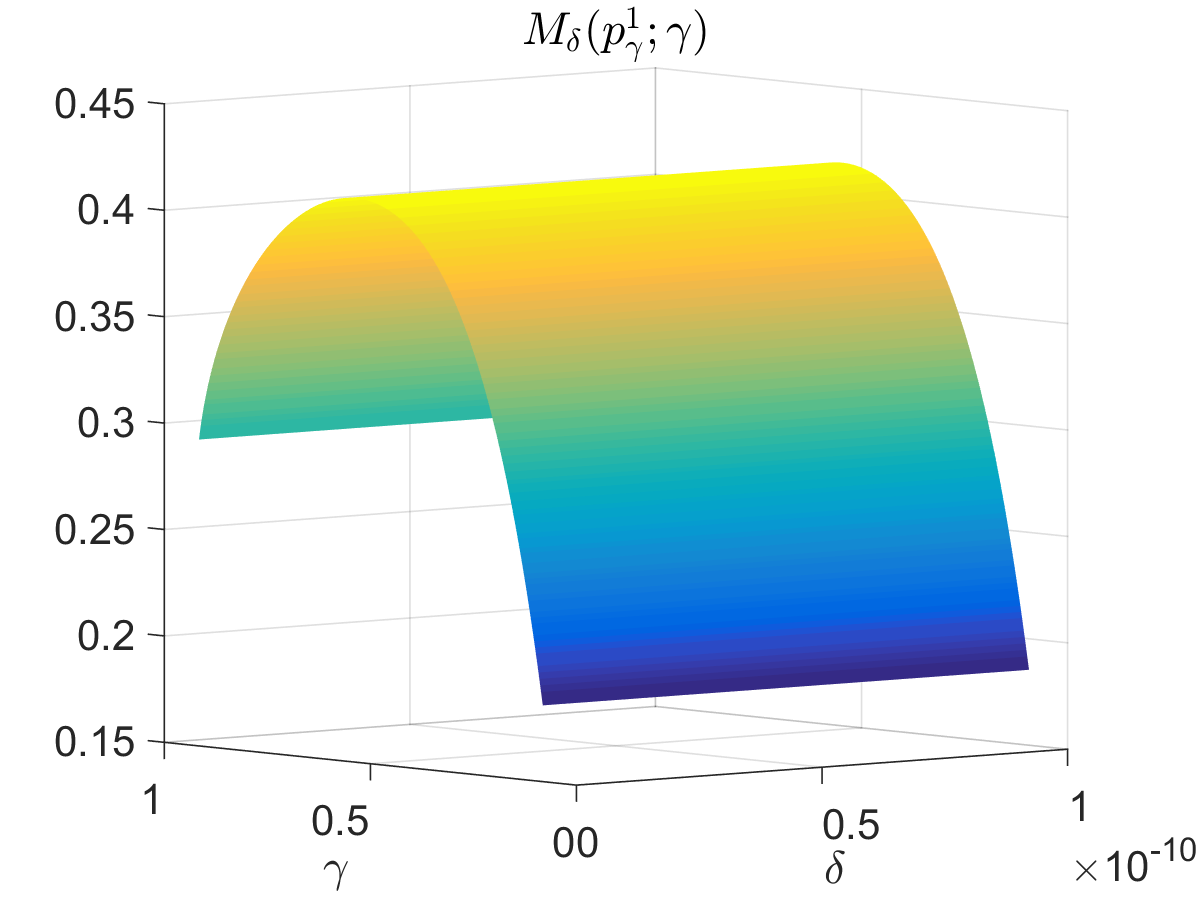}
            &
            \includegraphics[width=0.45\textwidth]{./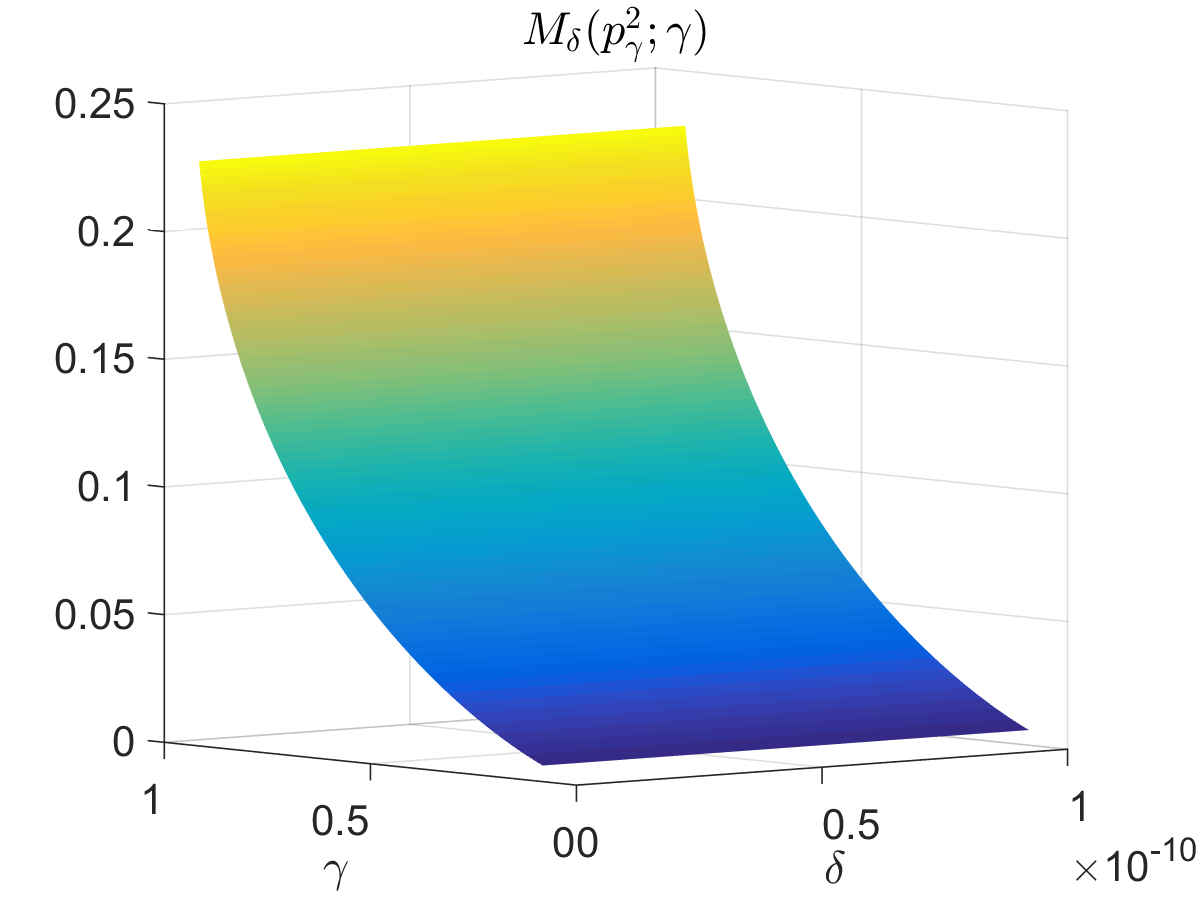}
            \\
            (a) & (b)
        \end{tabular}
        \caption{\emph{In this figure we plot (a)
                $M_{\delta}(p^1_{\gamma};\gamma)$ and (b)
                $M_{\delta}(p^2_{\gamma};\gamma)$ over the
                range $10^{-12} \le \delta \le 10^{-10}$ and
        $0.1\gamma_* \le \gamma \le 0.99\gamma_*$.}}
        \label{fig:M_delta}
    \end{center}
\end{figure}

Finally, to obtain a blow-up in $E_{\delta}(a)$, it should be the
case that $|I_p|$ does not conquer the small values of $|g_{\delta}|$
near $p^1_{\gamma}$ and $p^2_{\gamma}$.   Heuristically, there will be
no blow-up if $|I_p| \approx 0$ near
$p^1_{\gamma}$ and $p^2_{\gamma}$.  In the next section, we present
numerical evidence that suggests that sources with $|I_{p^1_{\gamma}}| =
|I_{p^2_{\gamma}}| = 0$ do not lead to ALR.  

On the other hand, recall from \eqref{eqn:Iq} that
\begin{equation*}
    I_p = \int_{d_0}^{d_1} \fhat(x,k_0p)\ee^{-k_0\sqrt{p^2-1}s} \di{s}.
\end{equation*}
Again we take our inspiration from the quasistatic case
\cite{Thaler:2014:BVI, Meklachi:2016:SAL}.
If $d_0 \gg a$, then the exponential in the above integrand will be
extremely small (especially because $p_{\gamma}^1$ and $p_{\gamma}^2$
are both greater than $1$).  In particular, the exponential may be small
enough so that it cancels out the effect of the denominator near
$p^1_{\gamma}$ and $p^2_{\gamma}$.  We emphasize that this is not
rigorous, but we hope that it may provide a starting point for future
investigations. 
\begin{conjecture}\label{con:pd_blow_up}
    Suppose $0 < \gamma < \gamma_*$.  Then there exist sources $f\!\in\! L^2(\mathcal{M})$ with compact support or distributional such as
    dipoles) such that, for any $0 < \xi \le a$,
    $E_{\delta}(\xi)\rightarrow \infty$ if $d_0$ is ``close enough'' to
    $a$ and $E_{\delta}(\xi) \le C_{\gamma}$ for some positive constant
    $C_{\gamma}$ if $d_0$ is ``far enough away'' from $a$.  This
    critical distance may depend on $\gamma$.

    Moreover, there are positive constants $b_{\gamma}$, $C_{\gamma}$,
    and $\delta_{\gamma}$ such that, for all $0 < \delta \le
    \delta_{\gamma}$,
    \[
        |V(x,y)| \le C_{\gamma}
    \]
    for all $(x,y) \in \mathcal{C}\cup\mathcal{M}$ with $|x| >
    b_{\gamma}$.
\end{conjecture}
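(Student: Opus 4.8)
\medskip
\noindent\emph{Proof proposal.} The plan is to work under Conjectures~\ref{con:g_order_delta} and \ref{con:M_not_zero} (in the slightly sharpened form, supported by Figure~\ref{fig:g_delta}, that $|g_\delta(p^j_\gamma;\gamma)|\asymp\delta$ and $M_\delta(p^j_\gamma;\gamma)\asymp 1$ as $\delta\to0^+$); the surrounding discussion makes clear that these are the substantive inputs and that the rest is a localization around the two roots $1<p^1_\gamma<p^2_\gamma$ of $g_0(\cdot\,;\gamma)$ from Lemma~\ref{lem:g_0_roots}. Fix $\gamma\in(0,\gamma_*)$. First I would strip away the non-resonant part of \eqref{eqn:pd_a}: Lemma~\ref{lem:blow_up_small_p} bounds $\int_0^1 L_\delta(p;\gamma)\di{p}$ uniformly in $\delta$ (and its proof adapts verbatim to the $\xi$-integrand in \eqref{eqn:pd}, since on $[0,1]$ the only relevant $p$-dependence is through exponentials that stay bounded), and outside fixed neighbourhoods $U_j=(p^j_\gamma-\rho,p^j_\gamma+\rho)$ one has $|g_\delta(\cdot\,;\gamma)|$ bounded below uniformly in small $\delta$ --- on compact subsets of $[1,\infty)$ by Lemma~\ref{lem:g_0} and $g_0\neq0$, and on $[P,\infty)$ because $g_0(p;\gamma)$ decays only like $p^{-2}$ while the exponentially small bound \eqref{eqn:I_p_bounds} (recall $d_0>a$) makes that tail integrable uniformly in $\delta$. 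So the entire $\delta$-behaviour of $E_\delta(\xi)$ is carried by $\int_{U_1}+\int_{U_2}$.

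\medskip
\noindent For the blow-up claim I would exploit that each $p^j_\gamma$ is a \emph{simple} zero of $g_0(\cdot\,;\gamma)$: then $\partial_p g_0(p^j_\gamma;\gamma)\neq0$, hence also $\partial_p g_\delta(p^j_\gamma;\gamma)\neq0$ for small $\delta$, and a second-order Taylor expansion of $g_\delta$ about $p^j_\gamma$ together with the sharpened Conjecture~\ref{con:g_order_delta} yields constants (independent of $\delta$) with $c_1\bigl(\delta^2+(p-p^j_\gamma)^2\bigr)\le|g_\delta(p;\gamma)|^2\le c_2\bigl(\delta^2+(p-p^j_\gamma)^2\bigr)$ on $U_j$. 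On $U_j$ we also have $\ee^{2\gamma\nu_m'}\ge1$, $M_\delta(p;\gamma)\ge C_\gamma>0$ by Conjecture~\ref{con:M_not_zero}, and the $\xi$-dependent prefactors in \eqref{eqn:pd} are positive and bounded below. Now I would pick the source so that $I_{p^1_\gamma}\neq0$ and $I_{p^2_\gamma}\neq0$: for a fixed profile of width $\ell$ translated to begin at $x=d_0$, formula \eqref{eqn:Iq} gives $|I_{p^j_\gamma}|=|\widetilde I_{p^j_\gamma}|\,\ee^{-\gamma\frac{d_0}{a}\sqrt{(p^j_\gamma)^2-1}}$ with $\widetilde I_{p^j_\gamma}$ independent of $d_0$, and $\widetilde I_{p^j_\gamma}\neq0$ for generic profiles --- checked directly for, e.g., a horizontal dipole, where $\widetilde I_p=k_0\sqrt{p^2-1}$; this quantity is bounded below precisely when $d_0$ is close to $a$. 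By continuity $|I_p|\ge c>0$ on $U_j$, so for every $\xi\in(0,a]$
\[
    E_\delta(\xi)\ \ge\ \int_{U_j}L_\delta(p;\gamma)\di{p}
    \ \ge\ c'\int_{-\rho}^{\rho}\frac{1}{\delta^2+t^2}\di{t}
    \ =\ \frac{2c'}{\delta}\arctan\!\Bigl(\frac{\rho}{\delta}\Bigr)
    \ \longrightarrow\ \infty \qquad (\delta\to 0^+),
\]
which is the first assertion.

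\medskip
\noindent For the boundedness claim and the pointwise bound, the above shows the \emph{only} way to keep $E_\delta(\xi)$ bounded is to arrange $I_{p^1_\gamma}=I_{p^2_\gamma}=0$; such sources are constructed in \S\ref{subsec:ALR_busting}. For such a source $|I_p|^2\le C(p-p^j_\gamma)^2$ near $p^j_\gamma$, which exactly cancels the $\bigl(\delta^2+(p-p^j_\gamma)^2\bigr)^{-1}$ blow-up of $|g_\delta|^{-2}$, so $L_\delta(\cdot\,;\gamma)$ is bounded uniformly in $\delta$ on $U_1\cup U_2$ and, with the first paragraph, $E_\delta(\xi)\le C_\gamma$. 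For $|V|$ away from the slab, I would simplify $A_q$ from \eqref{eqn:Aq}: a direct calculation (reducing $\nu_m\psi_q^++\psi_q^-$ via \eqref{eqn:psi}, \eqref{eqn:alpha_R}, $\nu_c=\nu_m$, and the definition \eqref{eqn:g} of $g_\delta$) gives
\[
    |A_q|\ \le\ \frac{C\,|I_q|\,|\nu_s|}{k_0\,|g_\delta(q/k_0;\gamma)|}\,
    \ee^{k_0(\nu_m'-\nu_s')a}.
\]
Thus $A_q$ is uniformly bounded for $|q|\le k_0$, is $\mathcal{O}(1)$ for $|q|>k_0$ off $U_1\cup U_2$, and --- for an ALR-busting source --- is $\mathcal{O}(1)$ on $U_j$ as well. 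Since in the evanescent regime $q$ enters \eqref{eqn:Vc_hat} and \eqref{eqn:matrix_hat_general} with $|\ee^{k_0\nu_c x}|=\ee^{-k_0\sqrt{q^2/k_0^2-1}\,|x|}$, the $U_j$-contribution to $|V_c(x,y)|$ (and to $|V_m(x,y)|$, via the outgoing exponential) is $\mathcal{O}\!\bigl(\ee^{-k_0\sqrt{(p^j_\gamma)^2-1}\,|x|}\bigr)$; bounding the propagating part ($|q|\le k_0$, a finite integral of a bounded integrand) and the non-resonant evanescent part directly, one obtains $|V(x,y)|\le C_\gamma$ for all $(x,y)\in\mathcal{C}\cup\mathcal{M}$ with $|x|>b_\gamma$, once $b_\gamma$ is large enough that $\ee^{-k_0\sqrt{(p^j_\gamma)^2-1}\,b_\gamma}$ is small.

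\medskip
\noindent The hard parts are twofold. First, Conjectures~\ref{con:g_order_delta} and \ref{con:M_not_zero} are themselves open: the obvious approach --- expanding $|g_\delta(p^j_\gamma;\gamma)|$ and $M_\delta(p^j_\gamma;\gamma)$ in $\delta$ and bounding the $\delta$-derivatives uniformly in $p$ --- fails because those derivatives appear numerically to blow up as $p\to\infty$. Second, and this is why the statement is only conjectural, the ``$d_0$ far away $\Rightarrow$ bounded'' half of the dichotomy cannot follow from the decay of $|I_p|$ alone: the factor $\ee^{-2\gamma\frac{d_0}{a}\sqrt{(p^j_\gamma)^2-1}}$ suppressing $|I_{p^j_\gamma}|^2$ is \emph{independent of $\delta$}, so for a fixed $d_0$ it cannot overcome the $\delta^{-2}$ growth of $|g_\delta|^{-2}$ at a root; the decay argument yields only the weaker $E_\delta(\xi)\lesssim\delta^{-1}\ee^{-2\gamma\frac{d_0}{a}\min_j\sqrt{(p^j_\gamma)^2-1}}$, a resonance exponentially small in $d_0$ but still unbounded in $\delta$. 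Converting this into a genuine $d_0$-threshold would require showing that $I_{p^j_\gamma}$ actually vanishes past some distance, or finding a cancellation inside $M_\delta$ near the roots --- neither of which we expect in general, and which is the real crux of the difficulty.
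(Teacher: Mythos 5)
The statement you were asked to prove is presented in the paper as a \emph{conjecture}. The paper gives no proof --- only the heuristic ``the integrand has $\delta^{-2}$-type poles at $p^1_\gamma, p^2_\gamma$, and the exponential smallness of $|I_p|$ may cancel them when $d_0$ is far,'' plus numerical plots; the authors explicitly write ``we emphasize that this is not rigorous.'' So there is no proof to compare against, and what you have produced should be judged as a conditional proof sketch together with a critical examination of the conjecture itself. On both counts your work is sound and goes further than the paper.

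Your blow-up argument (modulo the paper's Conjectures~\ref{con:g_order_delta} and \ref{con:M_not_zero}, sharpened to $|g_\delta(p;\gamma)|^2 \asymp \delta^2 + (p-p^j_\gamma)^2$ near each simple root and $M_\delta \asymp 1$ there) is the natural rigorous completion of the paper's heuristic: strip $[0,1]$ via Lemma~\ref{lem:blow_up_small_p}, control compacts away from the roots by uniform convergence of $|g_\delta|$ to $|g_0|$, control the tail by the exponential decay in \eqref{eqn:I_p_bounds} versus the merely algebraic decay of $g_0$, and then extract $\int_{-\rho}^{\rho}(\delta^2+t^2)^{-1}\di t = (2/\delta)\arctan(\rho/\delta) \to \infty$. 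This is the right skeleton, and the dipole computation $\widetilde I_p = k_0\sqrt{p^2-1}$ confirms $I_{p^j_\gamma} \neq 0$.

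Your final paragraph is the most valuable part, and you are right not to bury it: it identifies a genuine tension in the conjecture that the paper glosses over. For any translated fixed-profile source (in particular, the dipole the paper suggests), $|I_{p^j_\gamma}|$ is a nonzero, $\delta$-\emph{independent} constant times $\ee^{-\gamma(d_0/a)\sqrt{(p^j_\gamma)^2-1}}$. That exponential shrinks the prefactor as $d_0$ grows, but multiplies --- it does not damp --- the $\delta^{-1}$ divergence from the pole. So under the very mechanism the paper conjectures for blow-up, $E_\delta(\xi) \to \infty$ for \emph{every} $d_0 > a$, and the ``far $d_0 \Rightarrow$ bounded'' half of Conjecture~\ref{con:pd_blow_up} would fail for such sources. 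The appeal to the quasistatic case does not rescue it, because there the near-resonance is spread over a continuum of $q$ rather than concentrated at finitely many simple real poles, so the competition between source decay and resonance has an entirely different structure. Making the dichotomy true would require either a source with $I_{p^j_\gamma} = 0$ past a threshold distance (which does not happen for a translated profile) or a $d_0$-dependent cancellation inside $M_\delta$ near the roots, neither of which the paper exhibits. You have correctly located the open problem hiding behind the conjecture.
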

\begin{remark}\label{rem:weak}
    If it is only the case that
    \[
        \limsup_{\delta\rightarrow 0^+} E_{\delta}(\xi) = \infty,
    \]
    then we say that \emph{weak} ALR occurs.  Because $E_{\delta}(\xi)$
    is difficult to deal with analytically, we cannot say much more on
    this.  It is difficult to determine whether
    \[
        \limsup_{\delta\rightarrow 0^+} E_{\delta}(\xi) = \infty
        \eqtext{or}
        \lim_{\delta\rightarrow 0^+} E_{\delta}(\xi) = \infty
    \]
    using only numerical techniques.  In particular, if the limit
    supremum of $E_{\delta}(\xi)$ is $\infty$, there is at least one
    sequence $\delta_n\rightarrow 0^+$ along which
    $E_{\delta_n}(\xi)\rightarrow \infty$; however, it may be the case
    that $E_{\delta_n}(\xi) \rightarrow \infty$ for all sequences
    $\delta_n \rightarrow 0^+$ except
    a few very special sequences that would be extremely difficult to
    find via numerical experiments alone.
\end{remark}

Figures~\ref{fig:dipole_small_gamma} and \ref{fig:general_small_gamma}
are exactly the same as Figures~\ref{fig:dipole_large_gamma} and
\ref{fig:general_large_gamma} except $\gamma = 0.5\gamma_*$ in
Figures~\ref{fig:dipole_small_gamma} and \ref{fig:general_small_gamma}.
In Figures~\ref{fig:dipole_small_gamma}(a) and (b) and
Figures~\ref{fig:general_small_gamma}(a) and (b), the sources (a dipole
in Figure~\ref{fig:dipole_small_gamma} and the source $f$ from
\eqref{eqn:general} in Figure~\ref{fig:general_small_gamma}) are located
at $d_0 = 4a$, and the solution $V$ appears to be smooth throughout the
domain.  As the sources move closer to the slab, resonant regions appear
around both boundaries of the slab at $x = 0$ and $x = a$.
Figures~\ref{fig:dipole_small_gamma}(c) and (d) and
Figures~\ref{fig:general_small_gamma}(c) and (d) contain plots of $V$ when
$d_0 = 1.2a$.  From these figures we see that the extreme oscillations
of $V$ are contained near the boundaries of the slab, and that the
boundaries between the resonant and nonresonant regions are sharp and
not defined by the boundaries of the slab; away from the slab, $V$ is
smooth and bounded.  This is highly
characteristic of ALR (see, e.g., \cite{Milton:2005:PSQ,
Thaler:2014:BVI}, and the references therein).
Moreover, Figures~\ref{fig:dipole_small_gamma} and
\ref{fig:general_small_gamma} indicate that an image of (part of) the
solution $V$ is focused in the region to the left of the lens (outside
of the resonant region); this is in stark contrast to the high frequency
regime illustrated in Figures~\ref{fig:dipole_large_gamma} and
\ref{fig:general_large_gamma}, in which the solution $V$ in the region
to the left of the slab is barely noticeable.  Indeed, in the
quasistatic regime, ALR is closely associated with this so-called
superlensing \cite{Milton:2005:PSQ}; since ALR does not occur for $\gamma >
\gamma_*$ (see Theorem~\ref{thm:bounded}), we do not expect to see the
superlensing effect in this regime (see Theorem~\ref{thm:shielding}).

Figures~\ref{fig:dipole_small_gamma}(c) and
\ref{fig:general_small_gamma}(c) provide an additional insight into
Conjecture~\ref{con:pd_blow_up}.  In general, for $q \approx
k_0p^2_{\gamma}$ (where $p^2_{\gamma}$ is the larger root of
$g_0(p;\gamma)$) the coefficient $A_q$ from \eqref{eqn:Aq} becomes very large
since its denominator is proportional to $g_{\delta}(p;\gamma)$
and $g_{\delta}(p^2_{\gamma};\gamma) \approx g_0(p^2_{\gamma}; \gamma)
= 0$ for $\delta$ small enough.  Recalling that the Fourier transform
variable $q = k_0p$ represents a wavenumber in the $y$-direction with
corresponding wavelength $\lambda = 2\pi/q$, this
implies that the solution $V$ should exhibit prominent oscillations with
wavelength on the order of 
\begin{equation}\label{eqn:lambda_gamma}
    \lambda_{\gamma} = \frac{2\pi}{k_0p^2_{\gamma}}.
\end{equation}
In Figures~\ref{fig:dipole_small_gamma}(c) and
\ref{fig:general_small_gamma}(c), we have drawn a vertical,
red line of length $2\lambda_{\gamma}$.  This red line covers
approximately $2$ wavelengths of oscillation in the resonant region,
which seems to indicate that at least one of the zeros of $g_0$, namely
$p^2_{\gamma}$, is responsible for ALR.  Because $p^2_{\gamma}$ is
independent of $f$, the above argument also suggests that the wavelength
of the resonant oscillations of $V$ is also independent of the source
$f$.  We emphasize that this is speculative at best, but it would be
interesting to investigate further.
\begin{figure}[!hbp]
    \begin{center}
        \begin{tabular}{c c}
            \includegraphics[width=0.45\textwidth]{./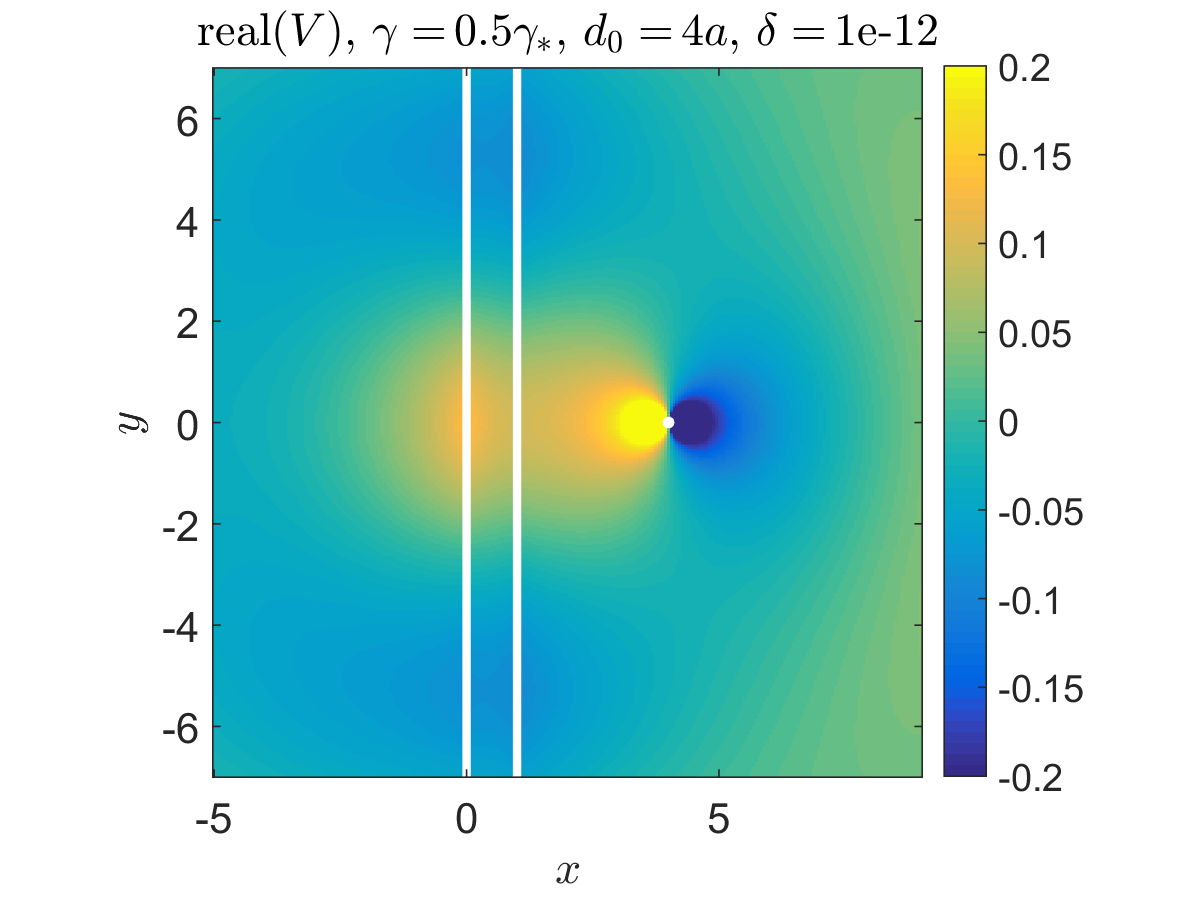}
            & \includegraphics[width=0.45\textwidth]{./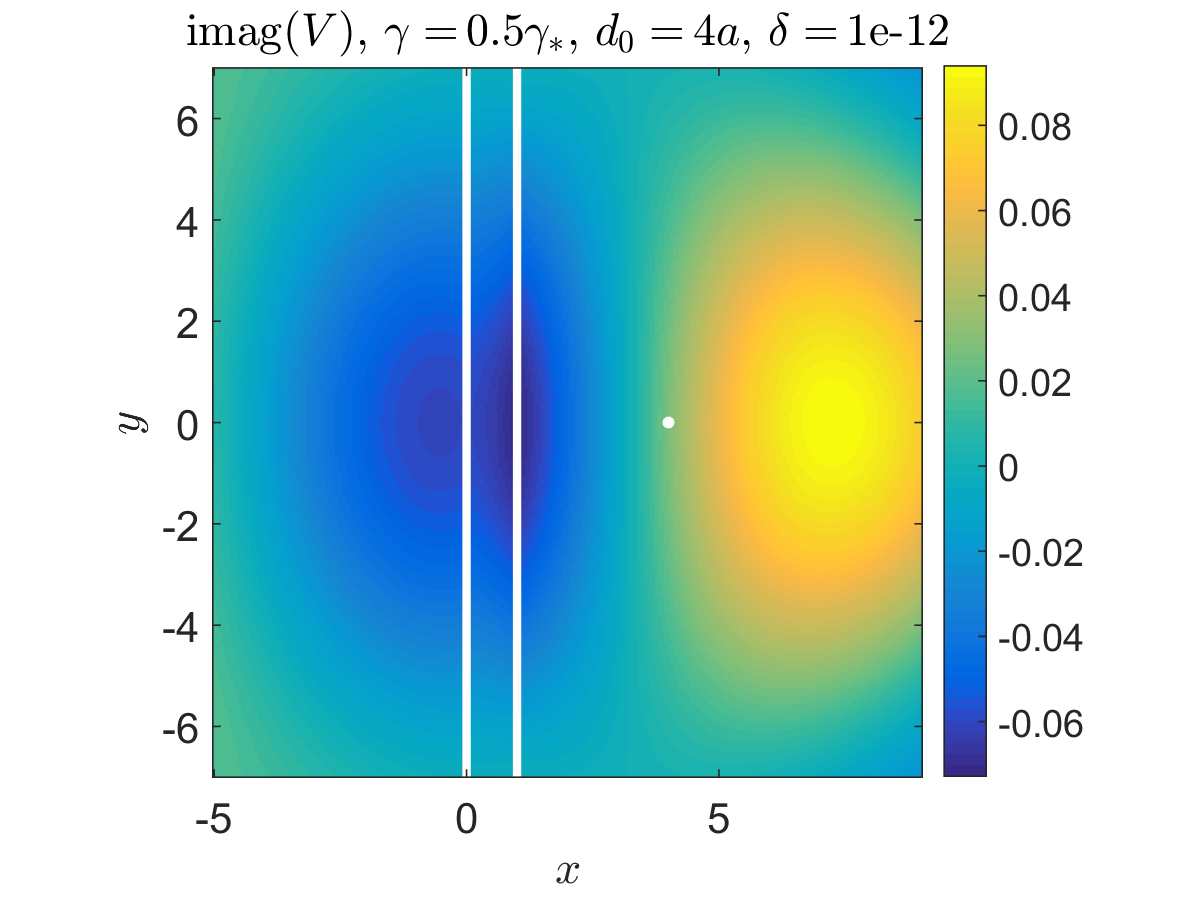}
            \\
            (a) & (b) \myspace{}
            \includegraphics[width=0.45\textwidth]{./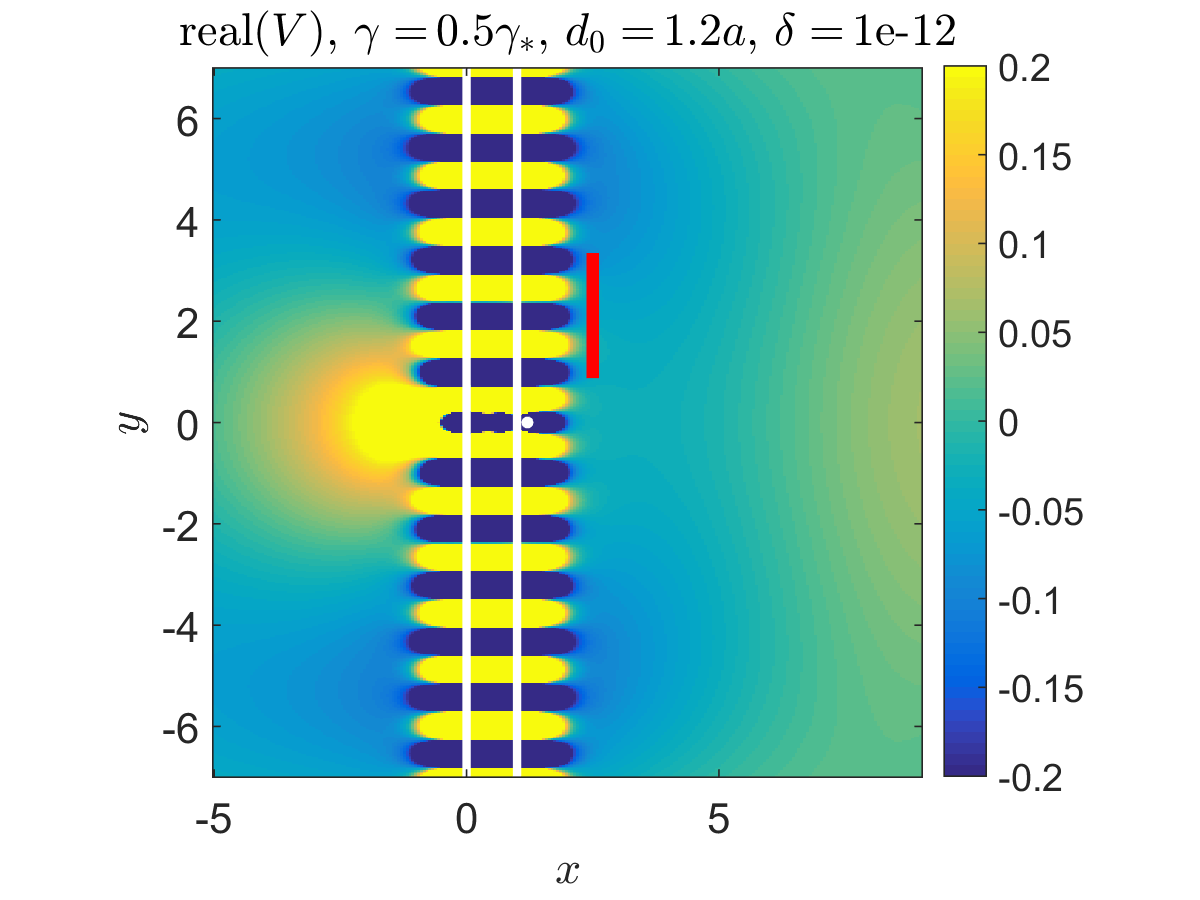}
            & \includegraphics[width=0.45\textwidth]{./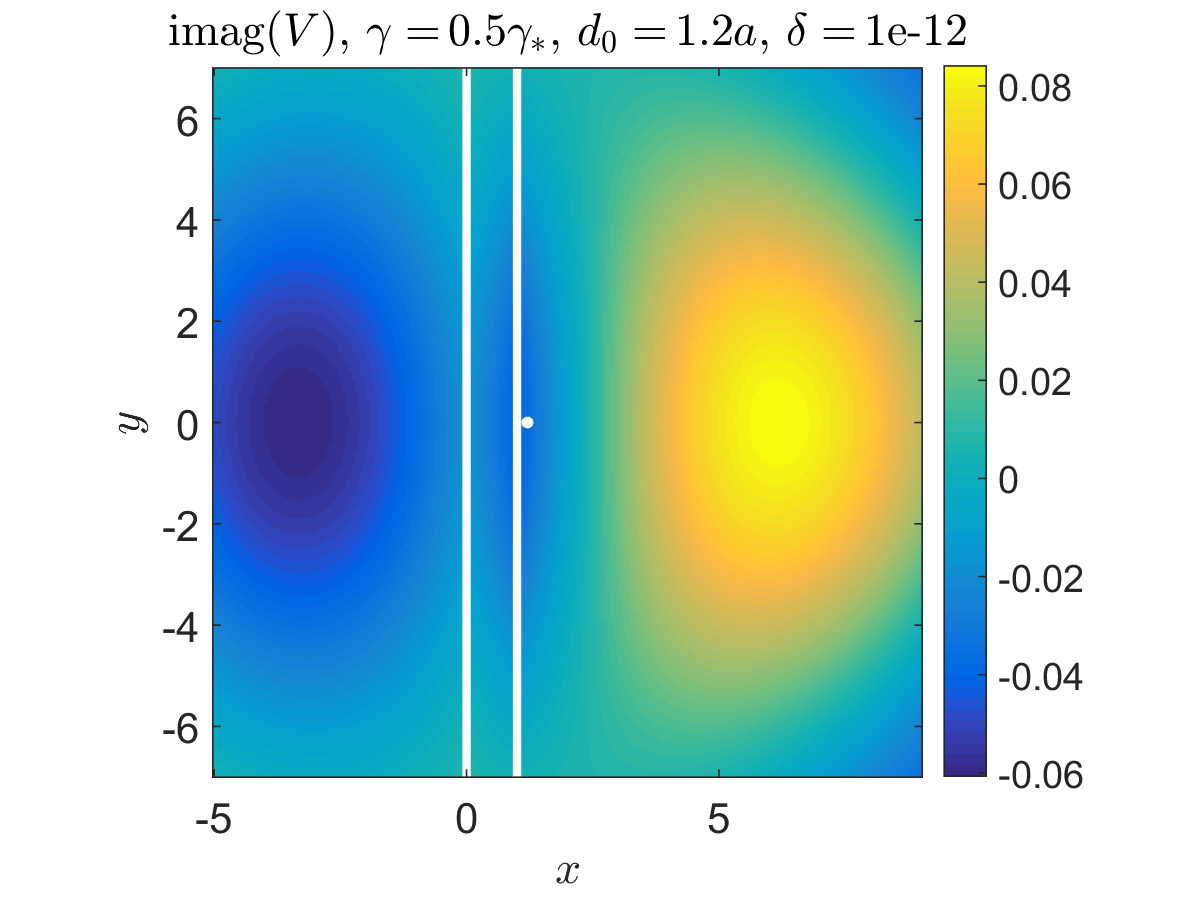}
            \\
            (c) & (d)
        \end{tabular}
        \caption{\emph{This is a plot of $V$, the solution to
                \eqref{eqn:finite_freq}, when $f$ is a dipole and
                $\gamma = 0.5\gamma_*$: (a) $\real(V)$ and (b) $\imag(V)$
                for $d_0 = 4a$; (c) $\real(V)$ and (d) $\imag(V)$ for
                $d_0 = 1.2a$.  To make the behavior of $V$ more clear,
                we clipped the maximum and minimum values in each plot 
                to $0.2$ (yellow) and $-0.2$ (blue) respectively. The
            vertical, red line in (c) extends a distance of
        $2\lambda_{\gamma}$, where $\lambda_{\gamma}$ is defined in
\eqref{eqn:lambda_gamma}.}}
        \label{fig:dipole_small_gamma}
    \end{center}
\end{figure}
\begin{figure}[!hbp]
    \begin{center}
        \begin{tabular}{c c}
            \includegraphics[width=0.45\textwidth]{./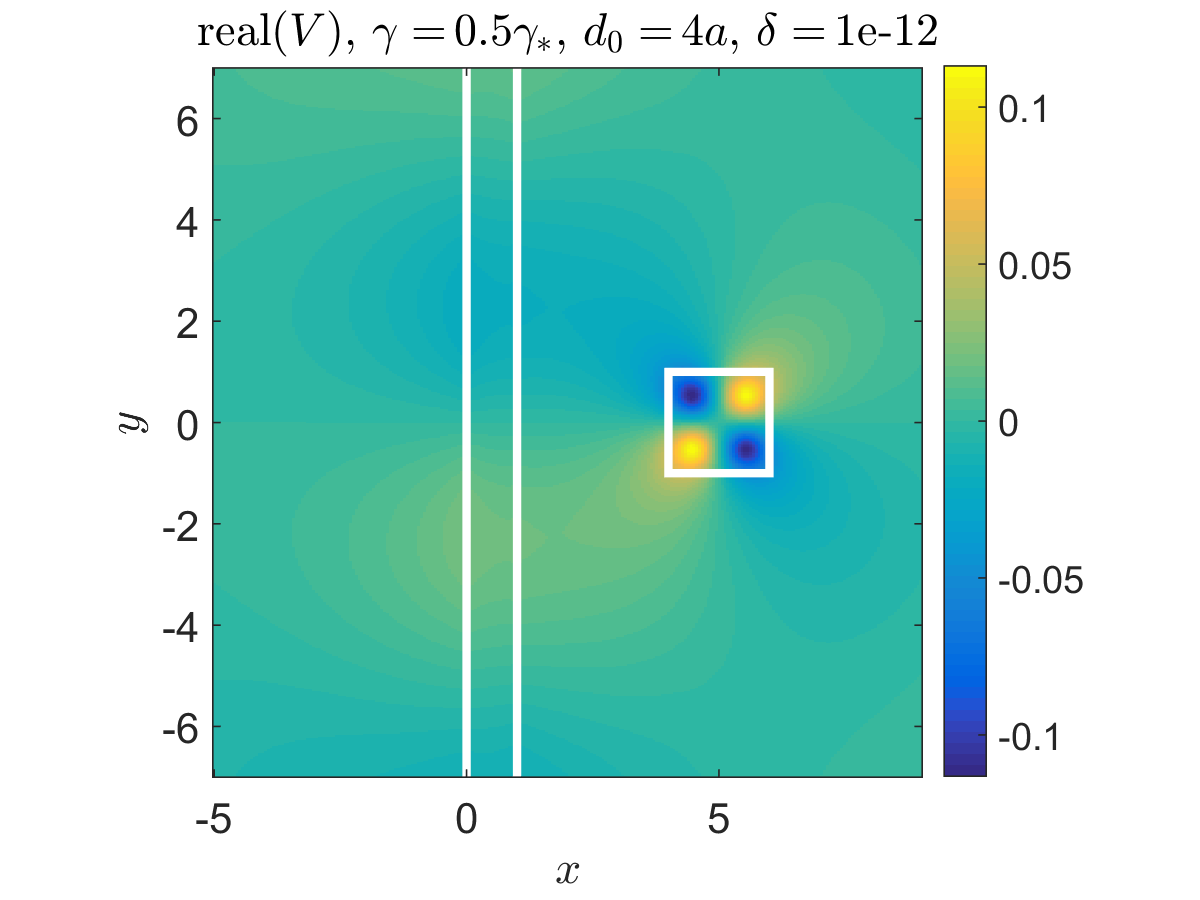}
            & \includegraphics[width=0.45\textwidth]{./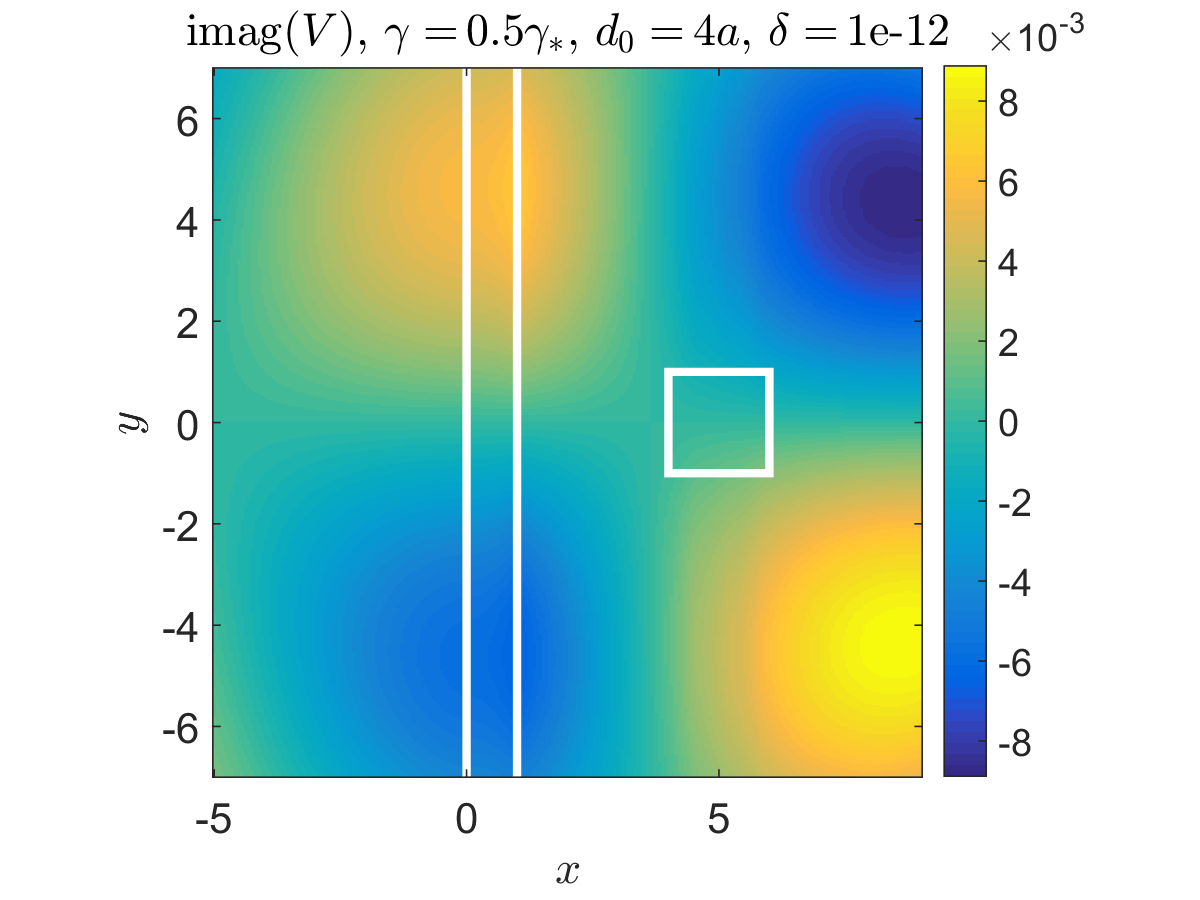}
            \\
            (a) & (b)\myspace{}
            \includegraphics[width=0.45\textwidth]{./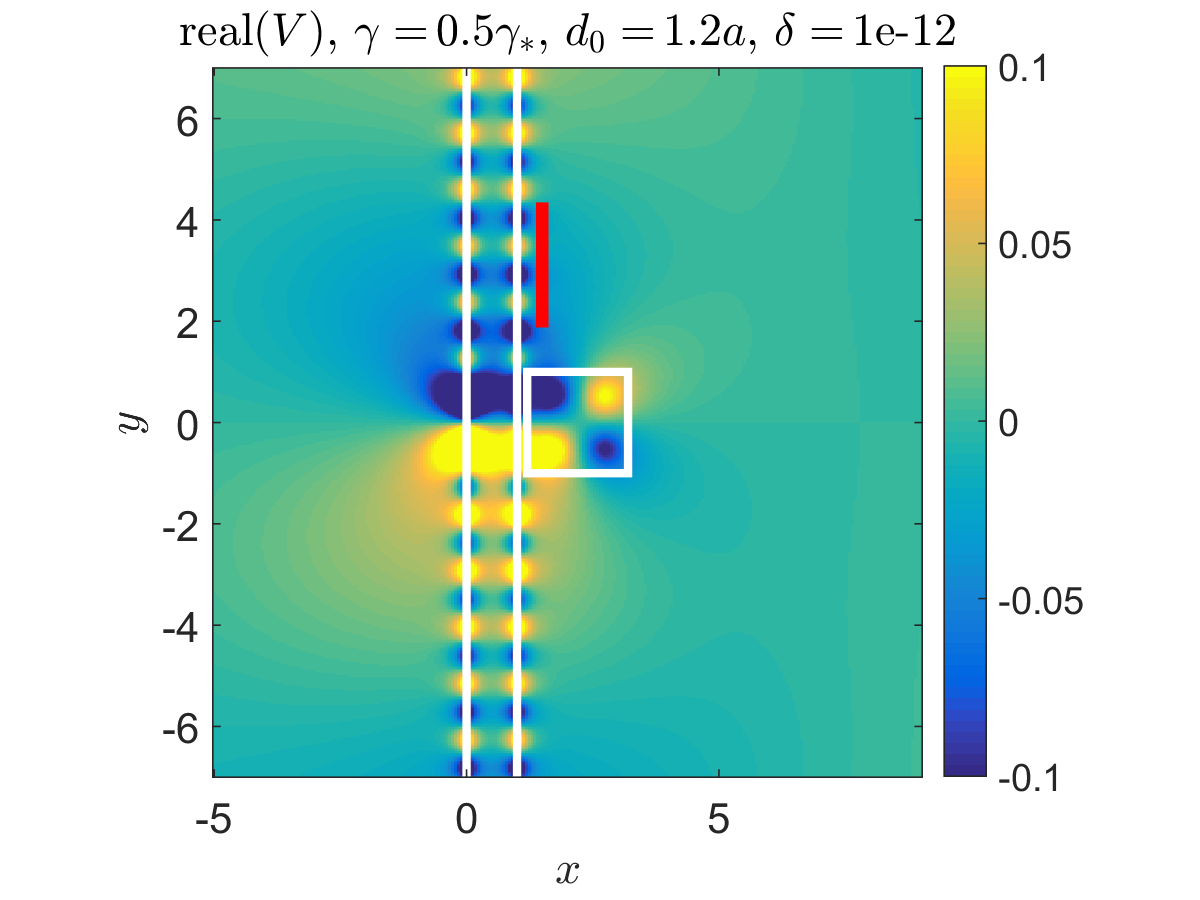}
            & \includegraphics[width=0.45\textwidth]{./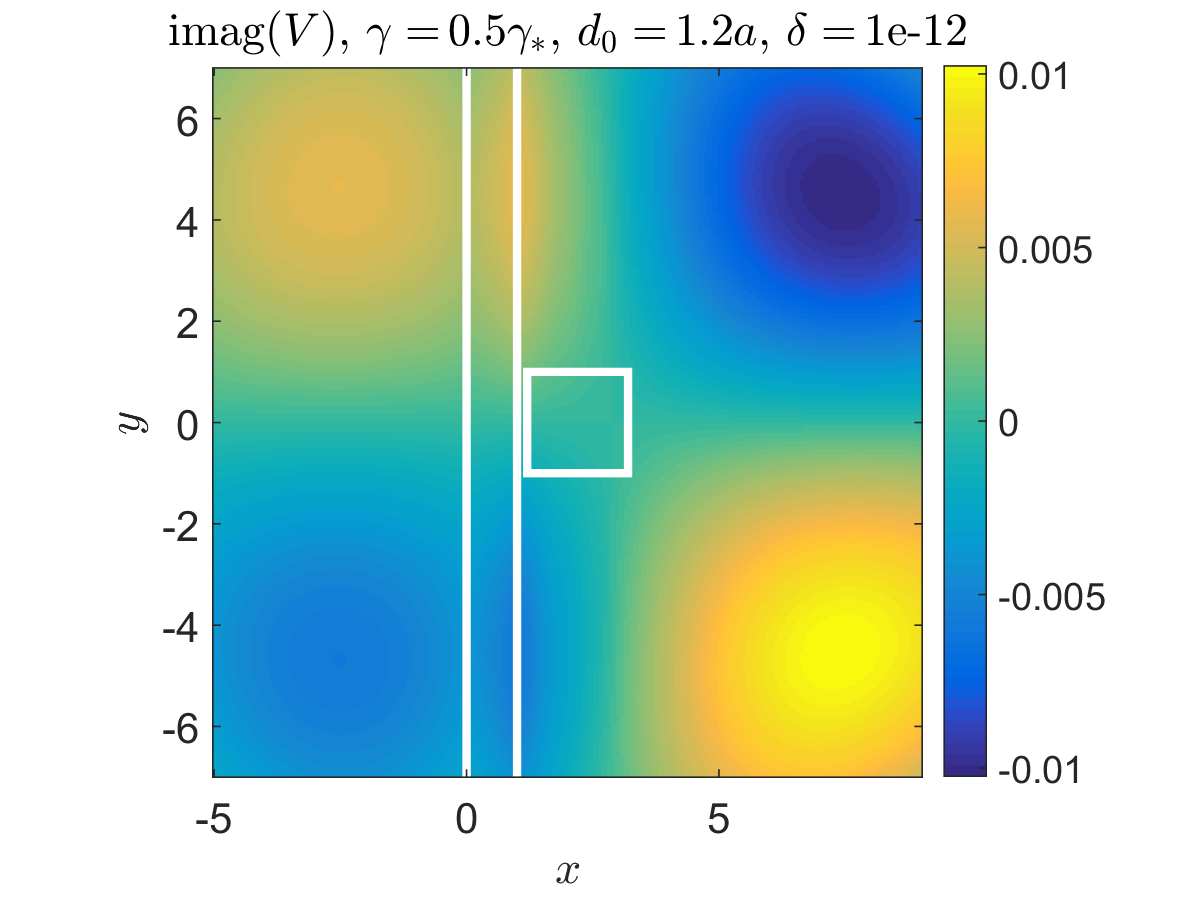}
            \\
            (c) & (d)
        \end{tabular}
        \caption{\emph{This is a plot of $V$, the solution to
                \eqref{eqn:finite_freq}, when $f$ is the function in
                \eqref{eqn:general} and
                $\gamma = 0.5\gamma_*$: (a) $\real(V)$ and (b) $\imag(V)$
                for $d_0 = 4a$; (c) $\real(V)$ and (d) $\imag(V)$ for
                $d_0 = 1.2a$.  To make the behavior of $\real V$ more
                clear, in (a) and (c)
                we clipped the maximum and minimum values in each plot 
                to $0.1$ (yellow) and $-0.1$ (blue) respectively.  The
            vertical, red line in (c) extends a distance of
        $2\lambda_{\gamma}$, where $\lambda_{\gamma}$ is defined in
\eqref{eqn:lambda_gamma}.}}
        \label{fig:general_small_gamma}
    \end{center}
\end{figure}

To illustrate how drastically different the behavior of $V$ is for
$\gamma > \gamma_*$ and $\gamma < \gamma_*$, in
Figure~\ref{fig:dipole_compare} we plotted $V$ corresponding to a dipole
source located at $d_0 = 1.2a$ for two different values of $\gamma$.  In
Figures~\ref{fig:dipole_compare}(a) and (b) we took $\gamma = 1.01\gamma_*$
while in Figures~\ref{fig:dipole_compare}(c) and (d) we took $\gamma =
0.99\gamma_*$.  The ALR is present when $\gamma < \gamma_*$ in
Figures~\ref{fig:dipole_compare}(c); on the other hand, in
Figure~\ref{fig:dipole_compare}(a) there are a few oscillations near
the $x$-axis, but they quickly die out as $|y|$ grows.
\begin{figure}[!hbp]
    \begin{center}
        \begin{tabular}{c c}
            \includegraphics[width=0.45\textwidth]{./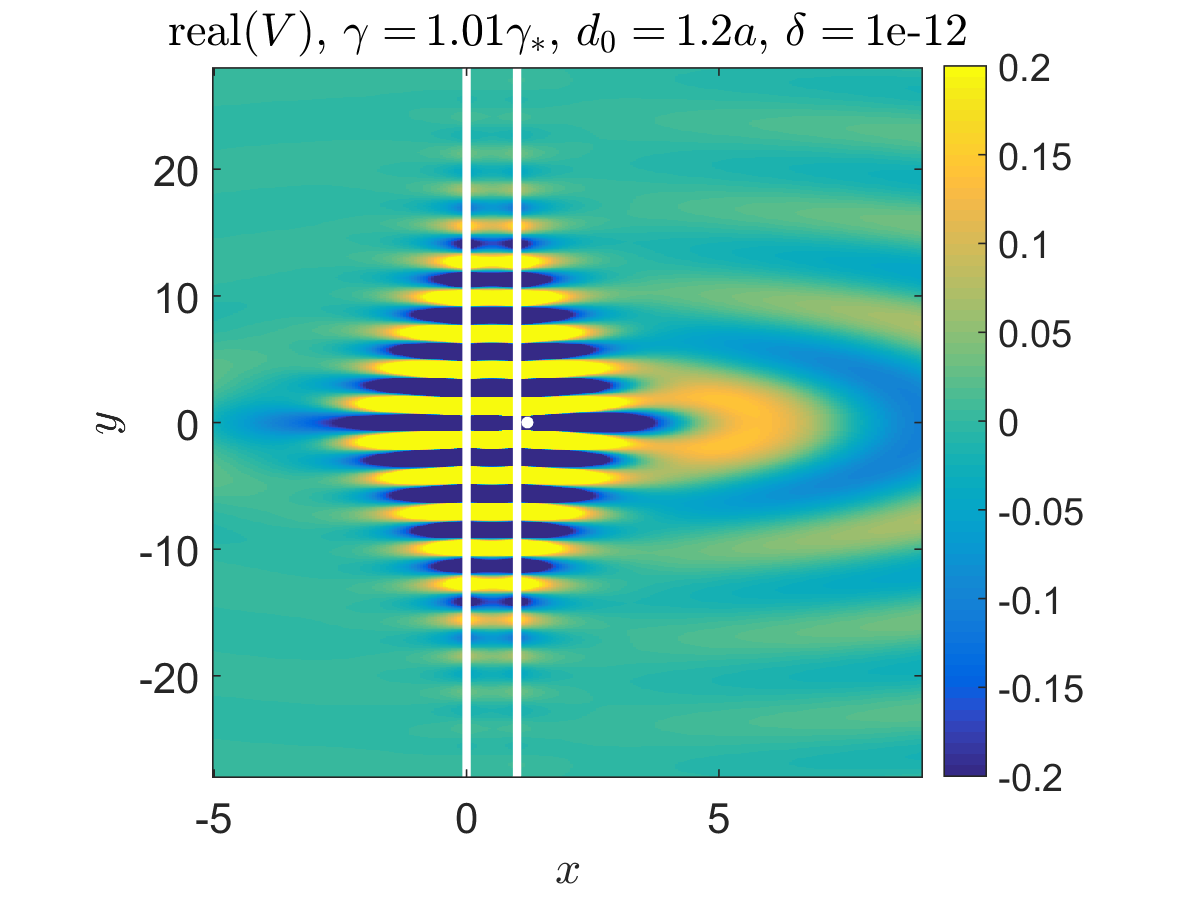}
            &
            \includegraphics[width=0.45\textwidth]{./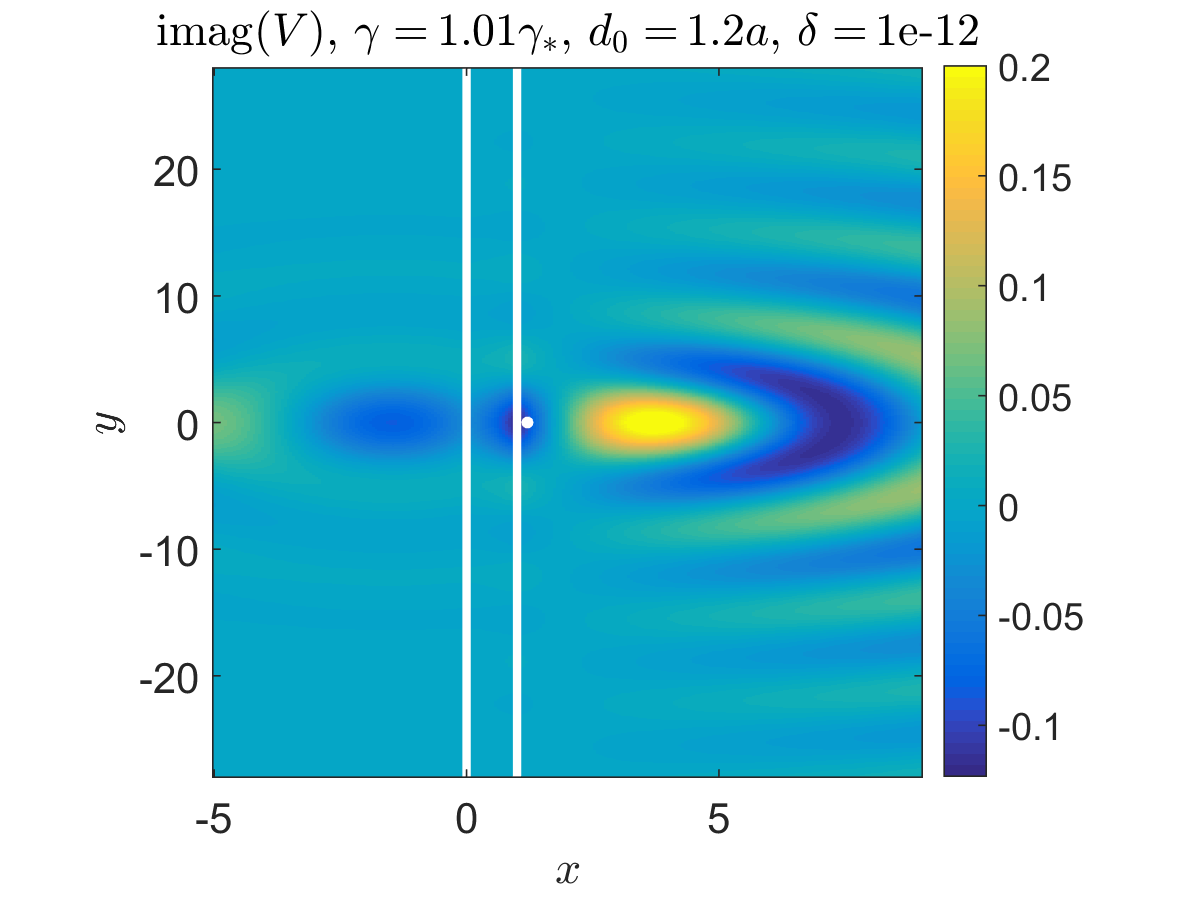}
            \\
            (a) & (b) \myspace{}
            \includegraphics[width=0.45\textwidth]{./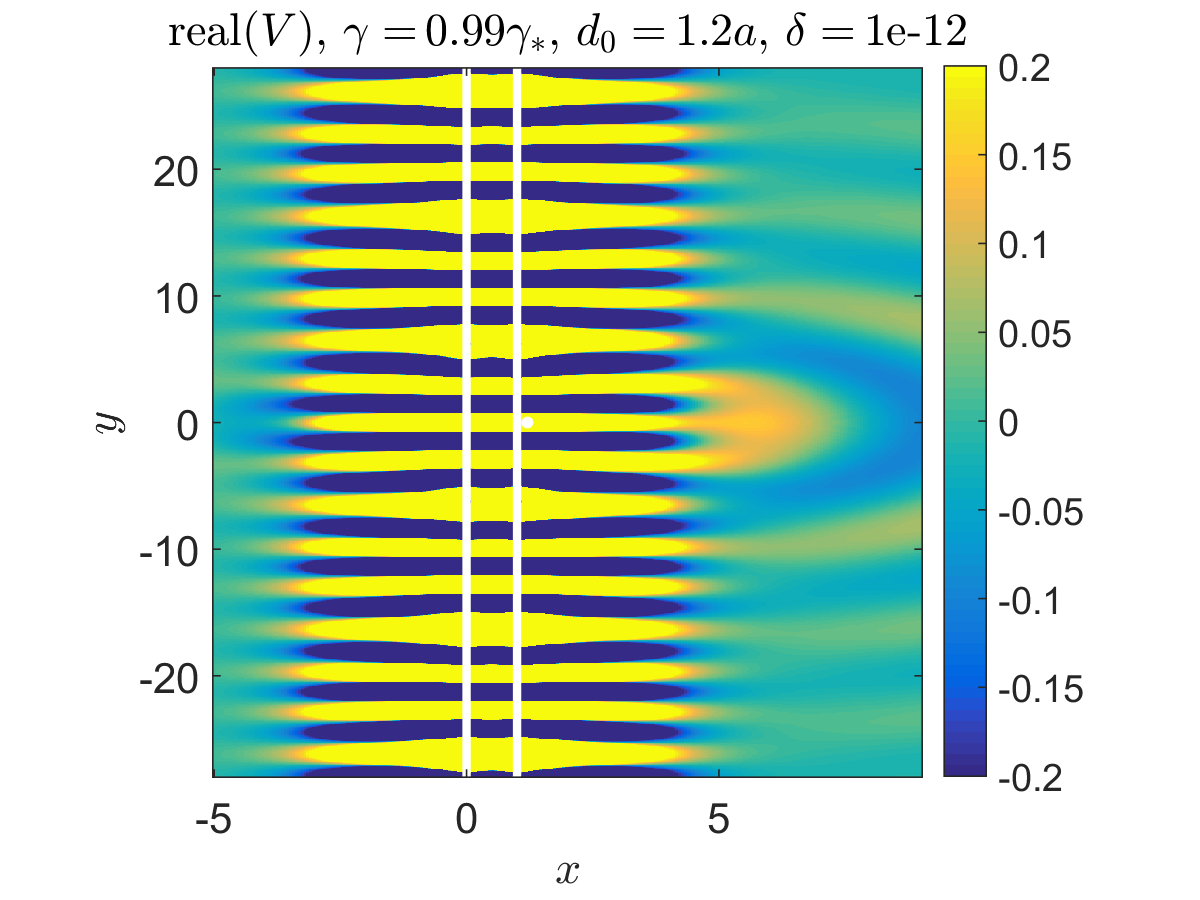}
            &
            \includegraphics[width=0.45\textwidth]{./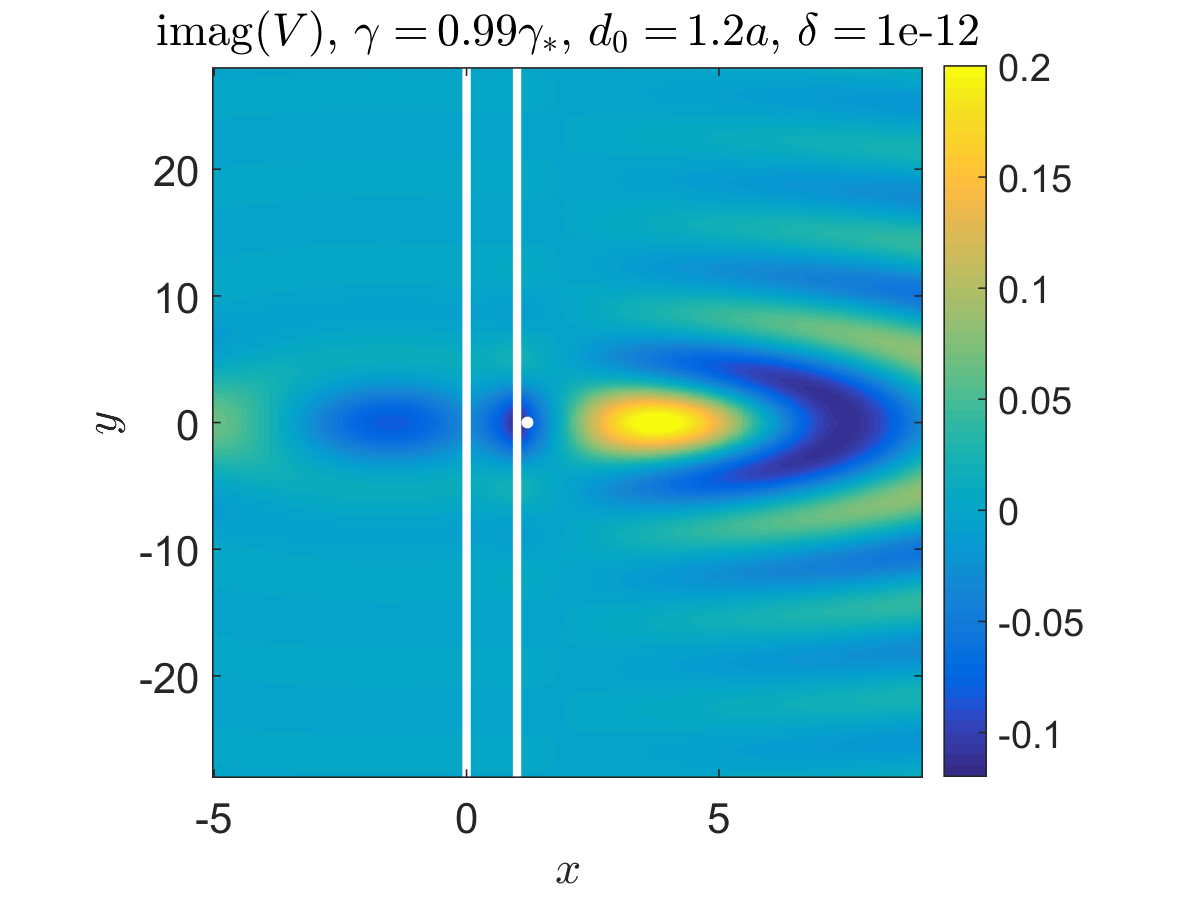}
            \\
            (c) & (d)
        \end{tabular}
        \caption{\emph{This is a plot of $V$, the solution to
                \eqref{eqn:finite_freq}, when $f$ is a dipole located at
                $d_0 = 1.2a$: (a) $\real(V)$ and (b) $\imag(V)$ for
                $\gamma = 1.01\gamma_*$; (c) $\real(V)$ and (d)
                $\imag(V)$ for $\gamma = 0.99\gamma_*$.  To make the
                behavior of $V$ more clear, in (a), (c), and (d) we clipped the maximum and minimum
values in each plot to $0.1$ (yellow) and $-0.1$ (blue) respectively.}}
            \label{fig:dipole_compare}
    \end{center}
\end{figure}

Unfortunately, we cannot provide an figure analogous to
Figure~\ref{fig:dipole_bounded_pd} for $E_{\delta}(a)$ when $f$ is a
dipole source --- MATLAB is unable to accurately compute the integral
\[
    \int_{1}^{\infty} L_{\delta}(p;\gamma) \di{p}
\]
because $|g_{\delta}(p;\gamma)|$ is very close to $0$ near the roots of
$g_0(p;\delta)$ for small values of $\delta$ (see
Conjecture~\ref{con:g_order_delta}).  However, to get a sense of what is
going on, we plotted
$L_{\delta}(p;\gamma)$ on a logarithmic scale for a dipole source $f$
with $\gamma =
0.99\gamma_*$ in Figures~\ref{fig:dipole_integrand}(a) and (b) and $\gamma =
1.01\gamma_*$ in Figures~\ref{fig:dipole_integrand}(c) and (d).  Each curve is
$L_{\delta}(p;\gamma)$ as a function of $p$ for various values of
$\delta$.  In Figures~\ref{fig:dipole_integrand}(a) and (b), where $\gamma <
\gamma_*$, we see that $L_{\delta}(p;\gamma)$ is quite large near
the poles of $g_0(p;\gamma)$, even if $\delta = 10^{-4}$.
Additionally, on comparing the $y$-axis scales in Figures~\ref{fig:dipole_integrand}(a) and
(b), we note that the poles seem somewhat less severe in
Figure~\ref{fig:dipole_integrand}(a) than in
Figure~\ref{fig:dipole_integrand}(b), which, in combination with results
from the quasistatic regime \cite{Meklachi:2016:SAL}, lends credence to our conjecture
(Conjecture~\ref{con:pd_blow_up})
that ALR may be present only if the source is located close enough to
the lens.  On the other
hand, in Figures~\ref{fig:dipole_integrand}(c) and (d), $\gamma > \gamma_*$ and
we see that $L_{\delta}(p;\gamma)$ remains bounded regardless of $d_0$\footnote{In
    Figure~\ref{fig:dipole_integrand}, all of the functions rapidly tend
to $0$ for larger values of $p$ (not shown in the figure).}. 
\begin{figure}[!hbp]
    \begin{center}
        \begin{tabular}{c c}
            \includegraphics[width=0.45\textwidth]{./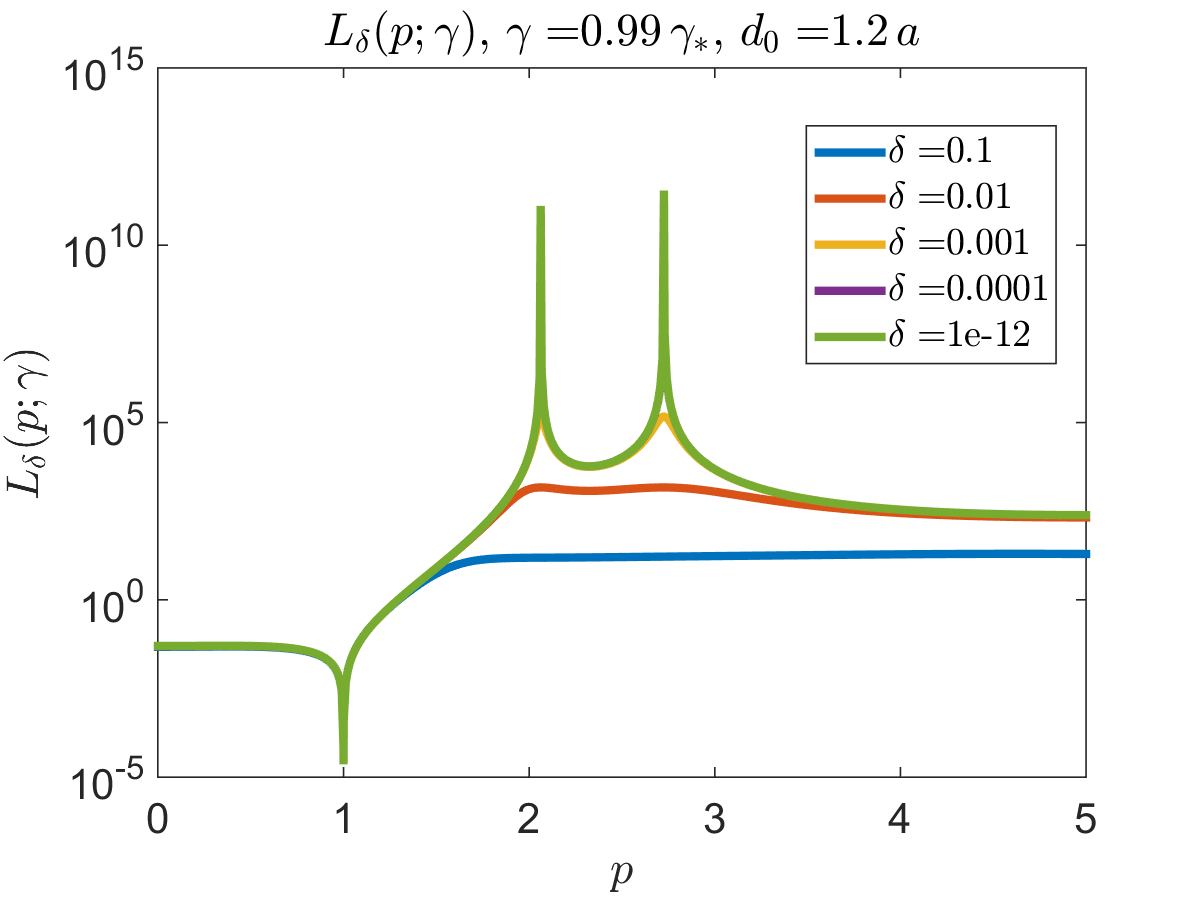}
            & 
            \includegraphics[width=0.45\textwidth]{./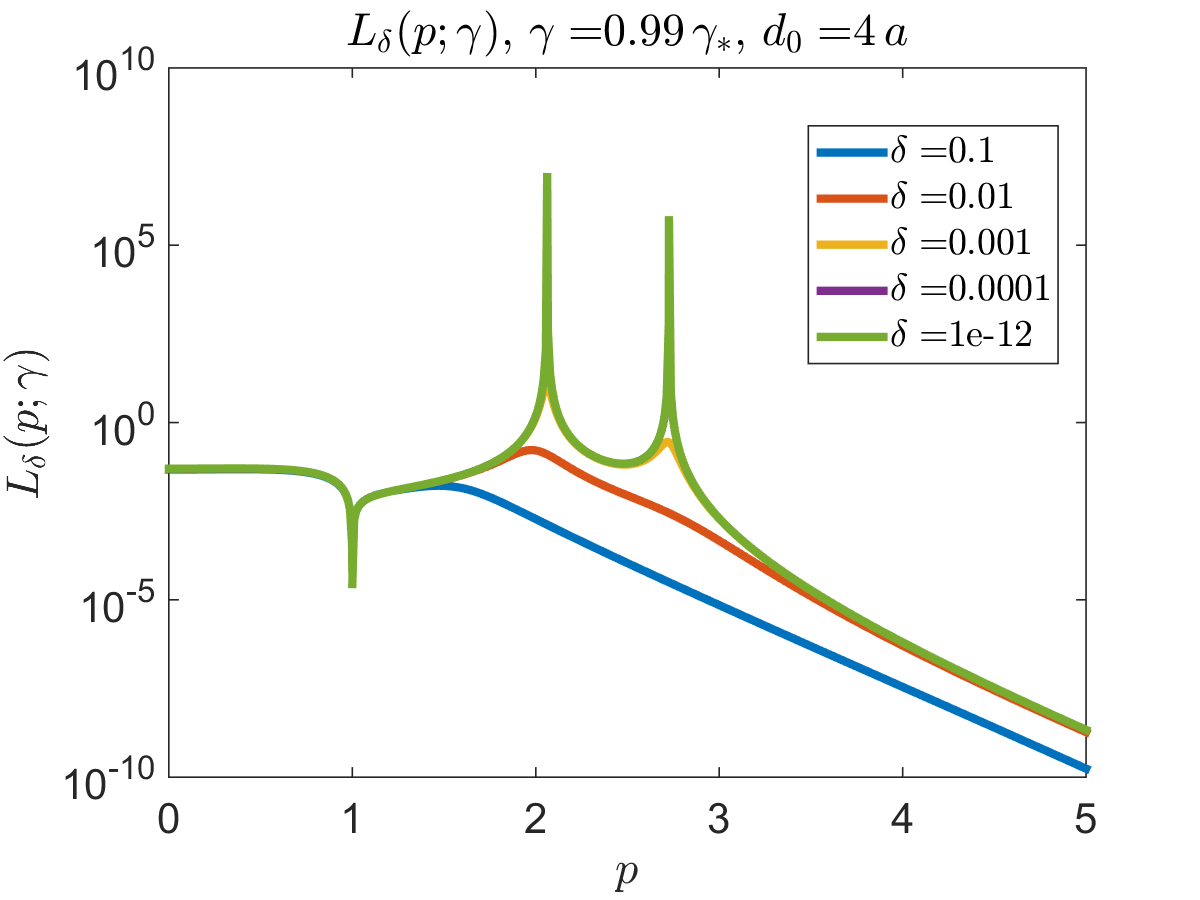}
            \\
            (a) & (b) \myspace{}
            \includegraphics[width=0.45\textwidth]{./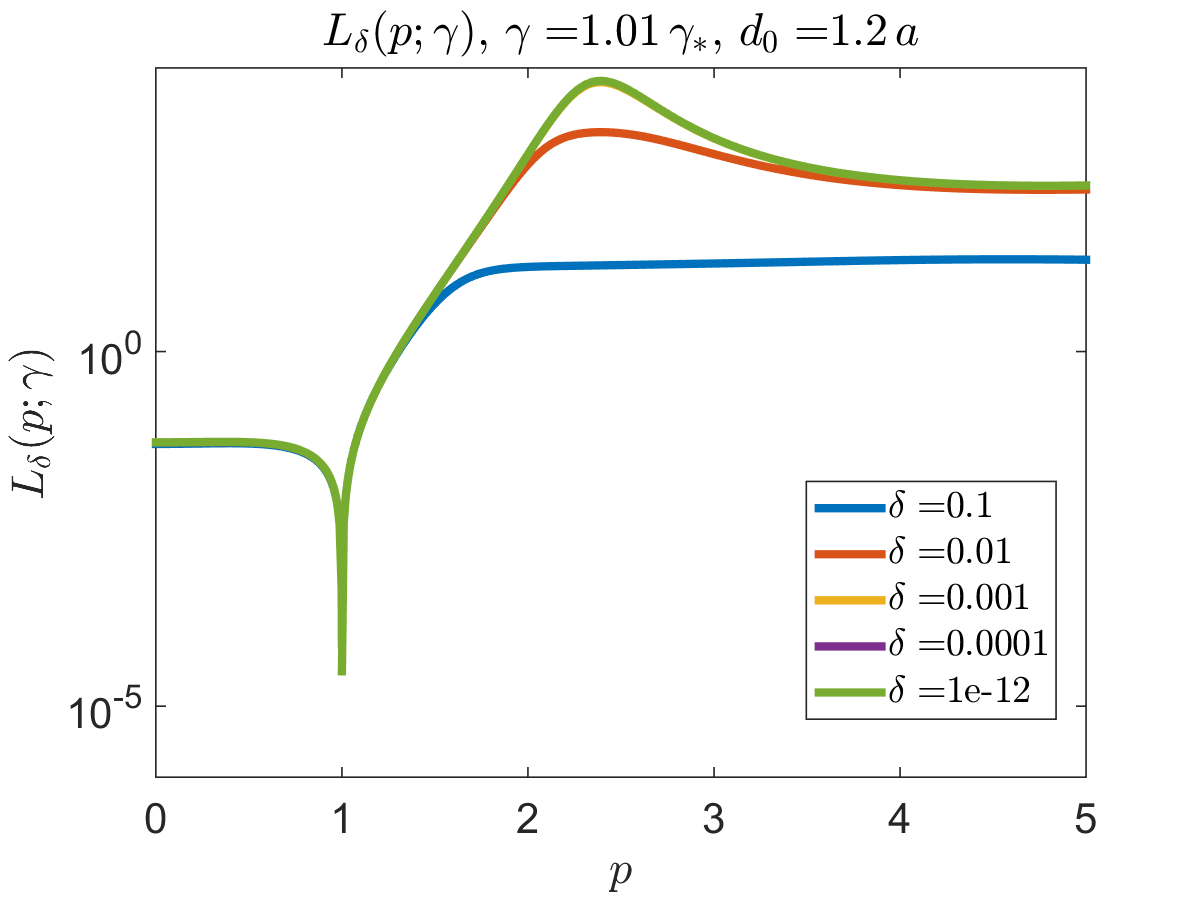}
            & 
            \includegraphics[width=0.45\textwidth]{./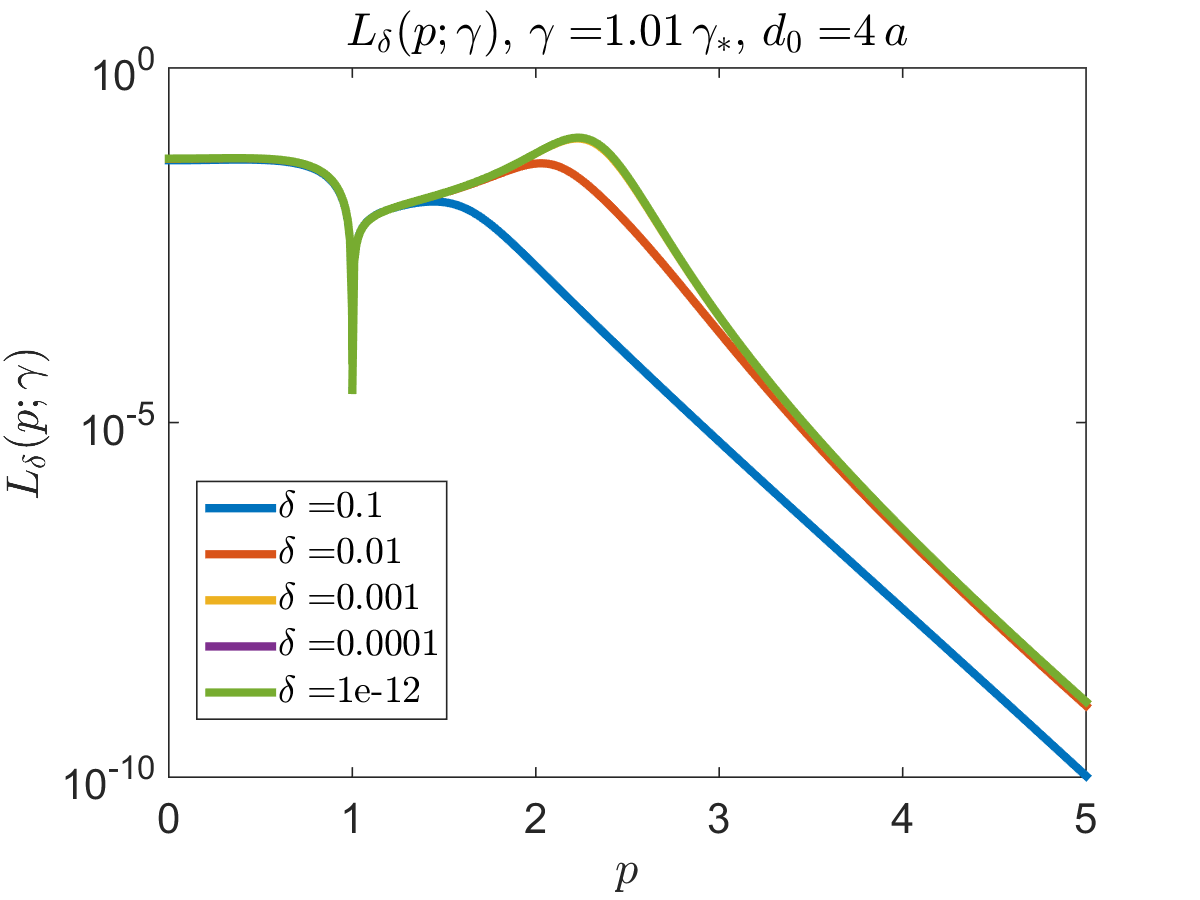}
            \\
            (c) & (d)
        \end{tabular}
        \caption{\emph{A plot of the integrand $L_{\delta}(p;\gamma)$
        from \eqref{eqn:pd_a} for several parameter values.  The
        separate curves in each plot represent different values of
        $\delta$, indicated in the legend: (a) $\gamma
= 0.99\gamma_*$ and $d_0 = 1.2a$; (b) $\gamma = 0.99\gamma_*$ and $d_0 =
4a$; (c) $\gamma = 1.01\gamma_*$ and $d_0 = 1.2a$; (d) $\gamma =
1.01\gamma_*$ and $d_0 = 4a$.}}
        \label{fig:dipole_integrand}
   \end{center}
\end{figure}


\subsection{Sources for which ALR does not occur}
\label{subsec:ALR_busting}

When $0 < \gamma < \gamma_*$, the conjectures from the previous section
suggest that the zeros of $g_0(p;\gamma)$ are responsible for forcing
$E_{\delta}(a)$ to blow up in the limit as $\delta \rightarrow
0^+$.  This begs the question of whether one can design a (realistic)
source in the finite frequency regime (with $0 < \gamma < \gamma_*$)
that effectively cancels the poles that show up in the limit $\delta
\rightarrow 0^+$.  In other words, we would like to design a source such
that $|I_p| = 0$ exactly at the zeros of $g_0(p;\gamma)$; heuristically,
in the limit as $\delta \rightarrow 0^+$, this would force the integrand
in \eqref{eqn:pd} to remain bounded at the zeros of $g_0(p;\gamma)$ and
annihilate the anomalous localized resonance that occurs in this limit.
Recall from \eqref{eqn:Iq} that
\begin{equation}\label{eqn:Ip}
    I_p = \int_{d_0}^{d_1} \fhat(s, k_0p) \ee^{-k_0\nu_m s} \di{s}.
\end{equation}
Lemma~\ref{lem:g_0_roots} implies that $g_0(p;\gamma)$ has two roots $1
< p^1_{\gamma} < p^2_{\gamma}$.  Using this and \eqref{eqn:Ip}, we see
that an ``ALR-busting'' source $f(x,y)$ can be constructed by choosing
$f$ such that $\fhat(s, k_0p^1_{\gamma}) = \fhat(s, k_0p^2_{\gamma}) =
0$ for all $s \in [d_0,d_1]$ (which implies $I_{p^1_{\gamma}} =
I_{p^2_{\gamma}} = 0$).  We do not want to just choose any $\fhat$
satisfying this property, however; we restrict ourselves to those
sources $f \in L^2(\mathcal{M})$ with compact support.  In summary, we
make the following conjecture.
\begin{conjecture}\label{con:ALR_busting}
    Suppose $f\in L^2(\mathcal{M})$ has compact support and
    \[
        \fhat(x,k_0p^1_{\gamma}) = \fhat(x,k_0p^2_{\gamma}) = 0,
    \]
    where $1 < p^1_{\gamma} < p^2_{\gamma}$ are the zeros of
    $g_0(p;\gamma)$ from Lemma~\ref{lem:g_0_roots} and $k_0p^j_{\gamma}$
    are zeros of order at least $1$ for $\fhat(x,k_0p)$.  Then there is
    a $\delta_{\gamma} > 0$ and a constant $C_{\gamma} > 0$ such that
    $E_{\delta}(a) \le C_{\gamma}$ for all $0 < \delta \le
    \delta_{\gamma}$.
\end{conjecture}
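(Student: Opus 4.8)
The plan is to estimate $E_\delta(a)=\int_0^\infty L_\delta(p;\gamma)\di{p}$ by splitting off the two roots $1<p^1_\gamma<p^2_\gamma$ of $g_0(\cdot;\gamma)$ from Lemma~\ref{lem:g_0_roots}. By Lemma~\ref{lem:blow_up_small_p} the integral over $[0,1]$ is bounded uniformly for $\delta$ small, so only $\int_1^\infty$ must be controlled. Fix a constant $P_0>p^2_\gamma$ and fixed, disjoint open disks $D_1,D_2\subset\{z\in\mathbb{C}:1<\real z<P_0\}$ centred at $p^1_\gamma,p^2_\gamma$; set $N_j=D_j\cap\mathbb{R}$, $R=[1,P_0]\setminus(N_1\cup N_2)$, and decompose $\int_1^\infty=\int_R+\int_{N_1}+\int_{N_2}+\int_{P_0}^\infty$.

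The pieces over $R$ and $[P_0,\infty)$ do not use the hypothesis on $f$. On the compact set $R$ the function $|g_0(\cdot;\gamma)|$ is continuous and, by Lemma~\ref{lem:g_0_roots}, bounded below by a positive constant; arguing as in the proof of Lemma~\ref{lem:blow_up_small_p} (compactness promotes the pointwise convergence of Lemma~\ref{lem:g_0} to uniform convergence on $R$), $|g_\delta(\cdot;\gamma)|$ is bounded below on $R$ for $\delta$ small, and since $|I_p|$ (Lemma~\ref{lem:I_p_upper_bound}), $M_\delta$ and $\ee^{2\gamma\nu_m'}$ are bounded on $R$, the integral $\int_R L_\delta\di{p}$ is bounded. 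For the tail one imitates the large-$p$ estimates behind Theorem~\ref{thm:bounded}: since $\real\nu_s\ge\sqrt{p^2+1}$, the term $[\nu_s+(1+\ii\delta)\nu_m]^2\ee^{-2\gamma\nu_s}$ in \eqref{eqn:g} is negligible for $p\ge P_0$, so $|g_\delta(p;\gamma)|\gtrsim|\nu_s-(1+\ii\delta)\nu_m|^2\asymp p^{-2}+\delta^2p^2$; the identity $|\nu_s-(1+\ii\delta)\nu_m|^2|R|^2=|\nu_s+(1+\ii\delta)\nu_m|^2$ gives $M_\delta(p;\gamma)\lesssim p^{-1}+\delta^2p^3+p^3\ee^{-2\gamma p}$; and Lemma~\ref{lem:I_p_upper_bound} together with $d_0>a$ gives $|I_p|^2\ee^{2\gamma\nu_m'}\le C^2\ee^{-2\gamma(d_0/a-1)\sqrt{p^2-1}}$. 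Combining these and using $(p^{-2}+\delta^2p^2)^2\ge\max\{p^{-4},4\delta^2\}$ yields $L_\delta(p;\gamma)\lesssim\ee^{-2\gamma(d_0/a-1)\sqrt{p^2-1}}\,(p^3+p^7\ee^{-2\gamma p})$ for all $p\ge P_0$, whose integral over $[P_0,\infty)$ is a finite constant independent of $\delta$.

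The heart of the matter is $\int_{N_j}L_\delta\di{p}$, and this is where the hypothesis $\fhat(\cdot,k_0p^j_\gamma)\equiv0$ enters. Since $M_\delta$, $\ee^{2\gamma\nu_m'}$ and $\ee^{-2\gamma(d_0/a)\sqrt{p^2-1}}$ are bounded on the compact set $\overline{N_j}$, it suffices to bound $\int_{N_j}|I_p|^2/|g_\delta(p;\gamma)|^2\di{p}$ uniformly in $\delta$. For the numerator, $\fhat(s,\cdot)$ is entire with $|\partial_q\fhat(s,q)|\le C\|f(s,\cdot)\|_{L^2(\mathbb{R})}$ (differentiate \eqref{eqn:Fourier_def} under the integral and use compact support), so $|\fhat(s,k_0p)|=|\fhat(s,k_0p)-\fhat(s,k_0p^j_\gamma)|\le C|p-p^j_\gamma|\,\|f(s,\cdot)\|_{L^2(\mathbb{R})}$, and inserting this into \eqref{eqn:Ip} exactly as in the proof of Lemma~\ref{lem:I_p_upper_bound} gives $|I_p|\le C|p-p^j_\gamma|\,\ee^{-\gamma(d_0/a)\sqrt{p^2-1}}$ on $N_j$. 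For the denominator, $g_\delta(\cdot;\gamma)$ is analytic near $p^j_\gamma>1$ and, by Lemma~\ref{lem:g_0_roots}, $g_0(\cdot;\gamma)$ has a simple zero there; by Rouch\'e's theorem $g_\delta(\cdot;\gamma)$ has for $\delta$ small a unique, simple zero $z^j_\delta\in D_j$ with $z^j_\delta=p^j_\gamma+\mathcal{O}(\delta)$, and $g_\delta(p;\gamma)=(p-z^j_\delta)\,h_\delta(p)$ with $|h_\delta|\ge c>0$ on $\overline{N_j}$. Writing $z^j_\delta=\alpha_\delta+\ii\beta_\delta$ and using $|p-p^j_\gamma|^2\le2|p-z^j_\delta|^2+2|z^j_\delta-p^j_\gamma|^2$ together with $\int_{N_j}\di{p}/[(p-\alpha_\delta)^2+\beta_\delta^2]\le\pi/|\beta_\delta|$, one obtains
\[
    \int_{N_j}\frac{|I_p|^2}{|g_\delta(p;\gamma)|^2}\di{p}\ \le\ \frac{C}{c^2}\left(2\,|N_j|+\frac{2\pi\big[(\alpha_\delta-p^j_\gamma)^2+\beta_\delta^2\big]}{|\beta_\delta|}\right),
\]
which is bounded uniformly in $\delta$ provided $|\beta_\delta|\gtrsim(\alpha_\delta-p^j_\gamma)^2$, in particular whenever $|\imag z^j_\delta|$ is of exact order $\delta$.

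Thus the whole argument reduces to one concrete statement, which I expect to be the main obstacle: \emph{the loss moves each resonant pole off the real axis at a linear rate}, i.e.\ $|\imag z^j_\delta|\asymp\delta$ for $j=1,2$. First-order perturbation of the simple zero gives $z^j_\delta-p^j_\gamma=-\delta\,\partial_\delta g_\delta(p^j_\gamma;\gamma)|_{\delta=0^+}\big/\partial_p g_0(p^j_\gamma;\gamma)+\mathcal{O}(\delta^2)$, so it suffices to verify that
\[
    \imag\!\left(\frac{\partial_\delta g_\delta(p^j_\gamma;\gamma)\,|_{\delta=0^+}}{\partial_p g_0(p^j_\gamma;\gamma)}\right)\ \ne\ 0 ,
\]
where $\partial_\delta g_\delta|_{\delta=0}$ is computed from $\partial_\delta\nu_s|_{\delta=0}=\ii/(2\sqrt{p^2+1})$ and the explicit factors $(1+\ii\delta)$, and $\partial_p g_0$ is read off from \eqref{eqn:g_0}; one evaluates the quotient at a root of $g_0$, simplifying with $g_0(p^j_\gamma;\gamma)=0$, and checks the imaginary part does not vanish. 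I expect this to hold: it is the finite-frequency analogue of the fact that the loss regularizes the quasistatic problem, and it is consistent with Figure~\ref{fig:g_delta}, where $|g_\delta(p^j_\gamma;\gamma)|/\delta$ stays bounded away from $0$. Crucially, in contrast with Conjecture~\ref{con:g_order_delta} this requires information about $\partial_\delta g_\delta$ only at the two \emph{fixed} points $p^1_\gamma,p^2_\gamma$, so the growth of those derivatives as $p\to\infty$ — which obstructs the blow-up direction — is irrelevant here; if the displayed imaginary part happened to vanish at some $\gamma$ one would instead need a higher-order expansion in $\delta$. Granting the nonvanishing, summing the four pieces gives $E_\delta(a)\le C_\gamma$ for all $0<\delta\le\delta_\gamma$ (and, by Remark~\ref{rem:V_and_grad_V}, the analogous bound for $\|V\|_{H^1(S_a)}$).
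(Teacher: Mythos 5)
First, note that the statement you are trying to prove appears in the paper only as a \textbf{conjecture}; the authors offer no proof of it, just heuristic discussion and the numerical evidence of Figure~\ref{fig:ALR_bust}. So there is no paper argument to compare against, and the real question is whether your attempt closes the conjecture. It does not quite, though it is a genuine and usefully structured reduction. Your split at $p=1$, the treatment of the compact set $R$ by promoting the pointwise limit of Lemma~\ref{lem:g_0} to uniform convergence, and the tail estimate exploiting the exponential gain $\ee^{-2\gamma(d_0/a-1)\sqrt{p^2-1}}$ (since $d_0>a$) all parallel the existing machinery behind Lemma~\ref{lem:blow_up_small_p} and Theorem~\ref{thm:bounded}. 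The genuinely new content is on the neighborhoods $N_j$, where the hypothesis $\fhat(\cdot,k_0p^j_\gamma)\equiv 0$ is correctly converted, via analyticity of $\fhat(s,\cdot)$ and Cauchy--Schwarz as in Lemma~\ref{lem:I_p_upper_bound}, into the quadratic gain $|I_p|^2\lesssim|p-p^j_\gamma|^2\ee^{-2\gamma(d_0/a)\sqrt{p^2-1}}$, to be matched against $|g_\delta|^{-2}$ via a Rouch\'e-type factorization $g_\delta(p;\gamma)=(p-z^j_\delta)h_\delta(p)$.

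There is, however, a genuine gap at exactly the point you flag as ``the heart of the matter.'' The whole argument hinges on the nondegeneracy $|\imag z^j_\delta|\asymp\delta$, equivalently on the nonvanishing of $\imag\bigl(\partial_\delta g_\delta(p^j_\gamma;\gamma)\big/\partial_p g_0(p^j_\gamma;\gamma)\bigr)$ at the two roots. You write ``I expect this to hold'' and ``Granting the nonvanishing,'' i.e.\ the one analytic fact doing the work is assumed rather than proved. This is essentially the same obstacle the authors could not overcome in Conjecture~\ref{con:g_order_delta}, which posits $|g_\delta(p^j_\gamma;\gamma)|=\mathcal{O}(\delta)$ and is supported in the paper only by the numerics of Figure~\ref{fig:g_delta}; your condition is in fact slightly stronger, since it also bounds $|g_\delta(p^j_\gamma;\gamma)|$ from \emph{below} by $c\delta$, ruling out the accidental cancellation that the $\mathcal{O}(\delta)$ statement alone would permit. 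Your observation that only the two fixed points $p^1_\gamma,p^2_\gamma$ enter --- so the unboundedness of $\partial_\delta g_\delta$ as $p\to\infty$, which blocks the blow-up side, is harmless here --- is correct and a real improvement on the paper's discussion. But until the displayed imaginary part is actually verified to be nonzero (a finite symbolic computation from $\partial_\delta\nu_s|_{\delta=0}=\ii/(2\sqrt{p^2+1})$, the explicit $(1+\ii\delta)$ factors, $\partial_p g_0$ from \eqref{eqn:g_0}, and the relation $g_0(p^j_\gamma;\gamma)=0$; Maple or a careful hand calculation should settle it, at least for generic $\gamma$), what you have is a clean reduction of the conjecture, not a proof. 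If that imaginary part vanished identically the approach would fail, and if it vanished at isolated $\gamma$ you would need, as you note, a higher-order expansion in $\delta$.
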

There are many sources that satisfy the hypotheses of this theorem.  We
will present $2$ examples here. First, consider
\begin{equation}\label{eqn:special_f_specific_Fourier}
    \fhat(x,q) =-\ii\chi_{(d_0,d_1)}(x)\sinc(\alpha_1 q)\sin(\alpha_2 q),
\end{equation}
where $\sinc(x) = \sin(x)/x$,
\[
    \chi_{(d_0,d_1)}(x) = 
    \begin{cases}
        1 &\text{for } d_0 < x < d_1, \\
        0 &\text{otherwise},
    \end{cases}
\]
\[
    \alpha_1 \equiv \frac{\pi}{k_0p^1_{\gamma}}, \eqtext{and} 
    \alpha_2 \equiv \frac{\pi}{k_0p^2_{\gamma}}.
\]
Then $\fhat(x,\cdot) \in L^2(\mathbb{R})$ and, hence, $f(x,\cdot) \in
L^2(\mathbb{R})$ by the Plancherel Theorem; moreover,
$\fhat(x,k_0p^1_{\gamma}) = \fhat(x,k_0p^2_{\gamma}) = 0$, where the
zeros are order $1$. Finally, by direct calculations we have
\begin{equation}\label{eqn:specific_special_f}
    f(x,y) = \chi_{(d_0,d_1)}(x) \cdot \frac{1}{4\alpha_1}
    \left[H(-y-\alpha_1-\alpha_2) - H(-y-\alpha_1+\alpha_2) +
    H(-y+\alpha_1+\alpha_2) - H(-y-\alpha_2+\alpha_1)\right],
\end{equation}
where $H(z)$ is the Heaviside step function; this $f \in
L^2(\mathcal{M})$ has compact support and thus satisfies the hypotheses
of Conjecture~\ref{con:ALR_busting}.
We may also take
\begin{equation}\label{eqn:special_Bessel_Fourier}
    \fhat(x,q) = \chi_{(d_0,d_1)}(x)J_0(\beta_0 q)J_1(\beta_1 q),
\end{equation}
where $J_0$ and $J_1$ are the Bessel functions of the first kind of
orders $0$ and $1$, respectively, and $\beta_0$ and $\beta_1$ are such
that $J_0(\beta_0 k_0 p^1_{\gamma}) = J_1(\beta_1 k_0 p^2_{\gamma}) = 0$
(we note that these zeros are also of order $1$).  Because the Bessel
functions of the first kind are $O(q^{-1/2})$ as $q\rightarrow \infty$
\cite{Watson:1996:WWW}, we have $\fhat(x,\cdot) \in L^2(\mathbb{R})$.  By the
convolution theorem for Fourier transforms,
\begin{equation}\label{eqn:special_Bessel}
    f(x,y) = \chi_{(d_0,d_1)}(x)(f_0 \ast f_1)(y),
\end{equation}
where $\ast$ denotes convolution and $f_0$ and $f_1$ are the inverse
Fourier transforms of $J_0(\beta_0 q)$ and $J_1(\beta_1 q)$,
respectively; in particular, we obtain
\begin{equation}\label{eqn:f0f1}
    f_0(y) =
    \frac{1}{\pi\sqrt{\beta_0^2-y^2}}\chi_{(-\beta_0,\beta_0)}(y)
    \eqtext{and}
    f_1(y) = \frac{-y}{\beta_1\pi\sqrt{\beta_1^2-y^2}} 
        \chi_{(-\beta_1,\beta_1)}(y).
\end{equation}
Although the convolution in \eqref{eqn:special_Bessel} is difficult to
compute analytically, since $f_0$ and $f_1$ both have compact support
the convolution of $f_0$ with $f_1$ will as well.  Thus $f$ as defined
in \eqref{eqn:special_Bessel} is in $L^2(\mathcal{M})$ and has compact
support.

In Figures~\ref{fig:ALR_bust}(a) and (b) we plot $\real(V)$ and
$\imag(V)$, respectively, for the source from
\eqref{eqn:special_f_specific_Fourier} (equivalently,
\eqref{eqn:specific_special_f}); in Figures~\ref{fig:ALR_bust}(c) and
(d), we plot $\real(V)$ and
$\imag(V)$, respectively, for the source from
\eqref{eqn:special_Bessel_Fourier} (equivalently,
\eqref{eqn:special_Bessel}).  We take the same parameters that we used
in Figures~\ref{fig:dipole_small_gamma}(c)--(d) and
Figures~\ref{fig:general_small_gamma}(c)--(d), namely $d_0 = 1.2a$ and
$\gamma = 0.5\gamma_*$.  In stark contrast with those figures, the
solution $V$ is well-behaved in Figure~\ref{fig:ALR_bust}.
\begin{figure}[!hbp]
    \begin{center}
        \begin{tabular}{c c}
            \includegraphics[width=0.45\textwidth]{./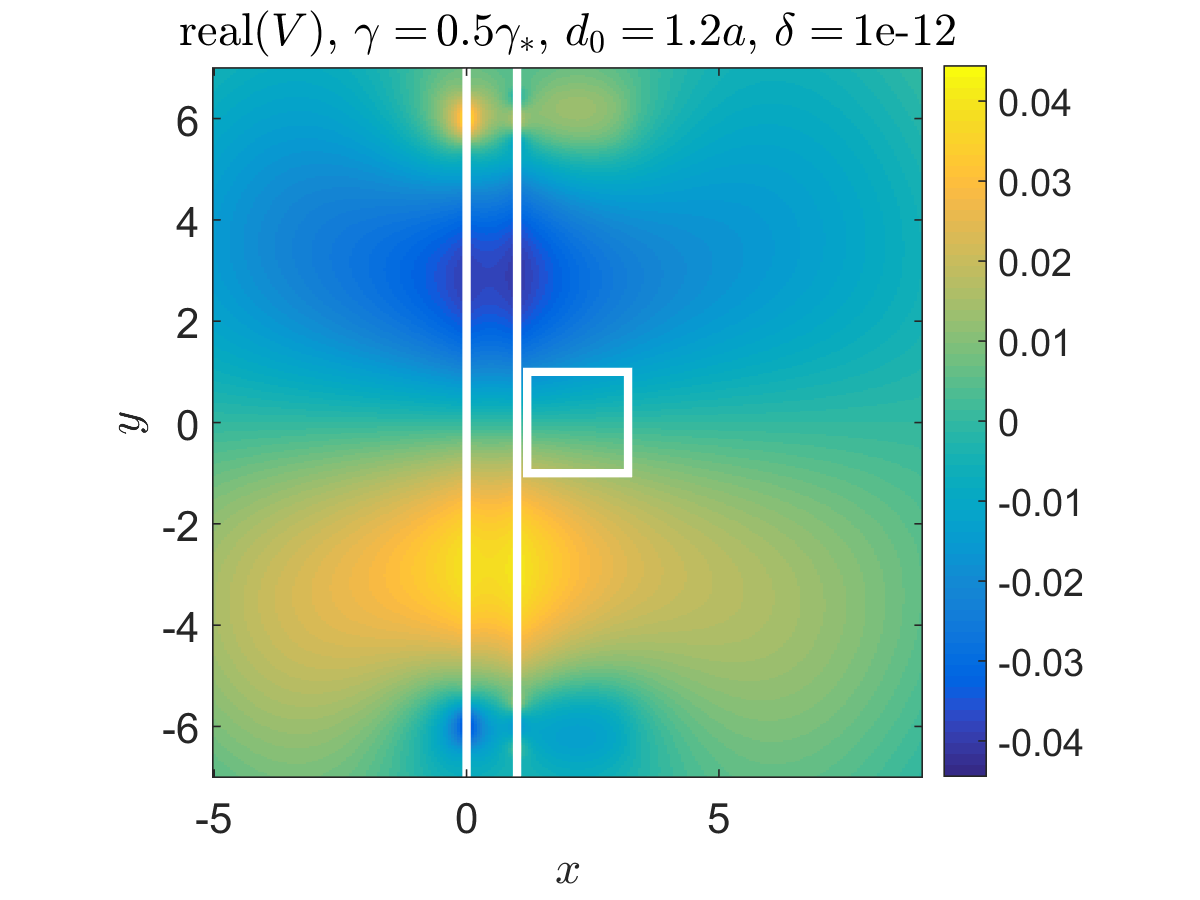}
            &
            \includegraphics[width=0.45\textwidth]{./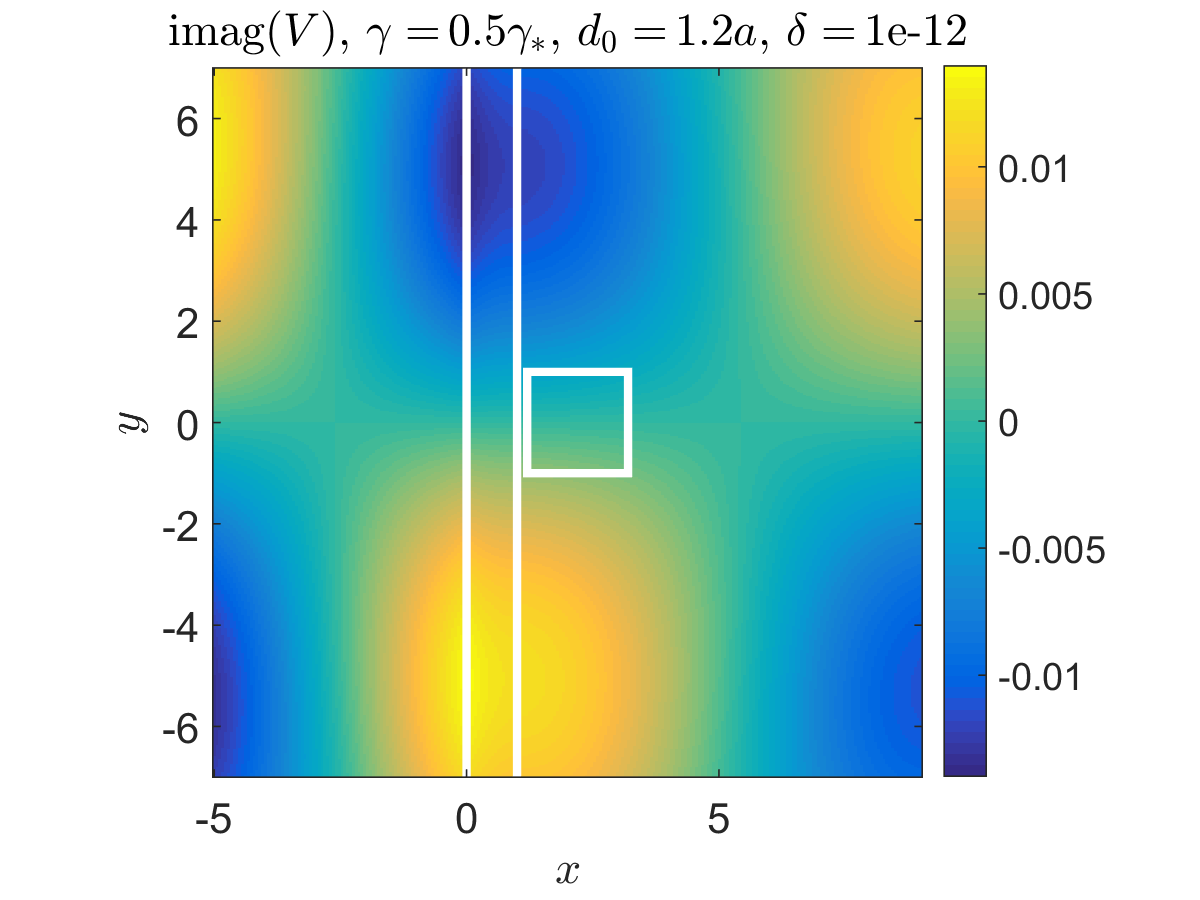}
            \\
            (a) & (b) \myspace{}
            \includegraphics[width=0.45\textwidth]{./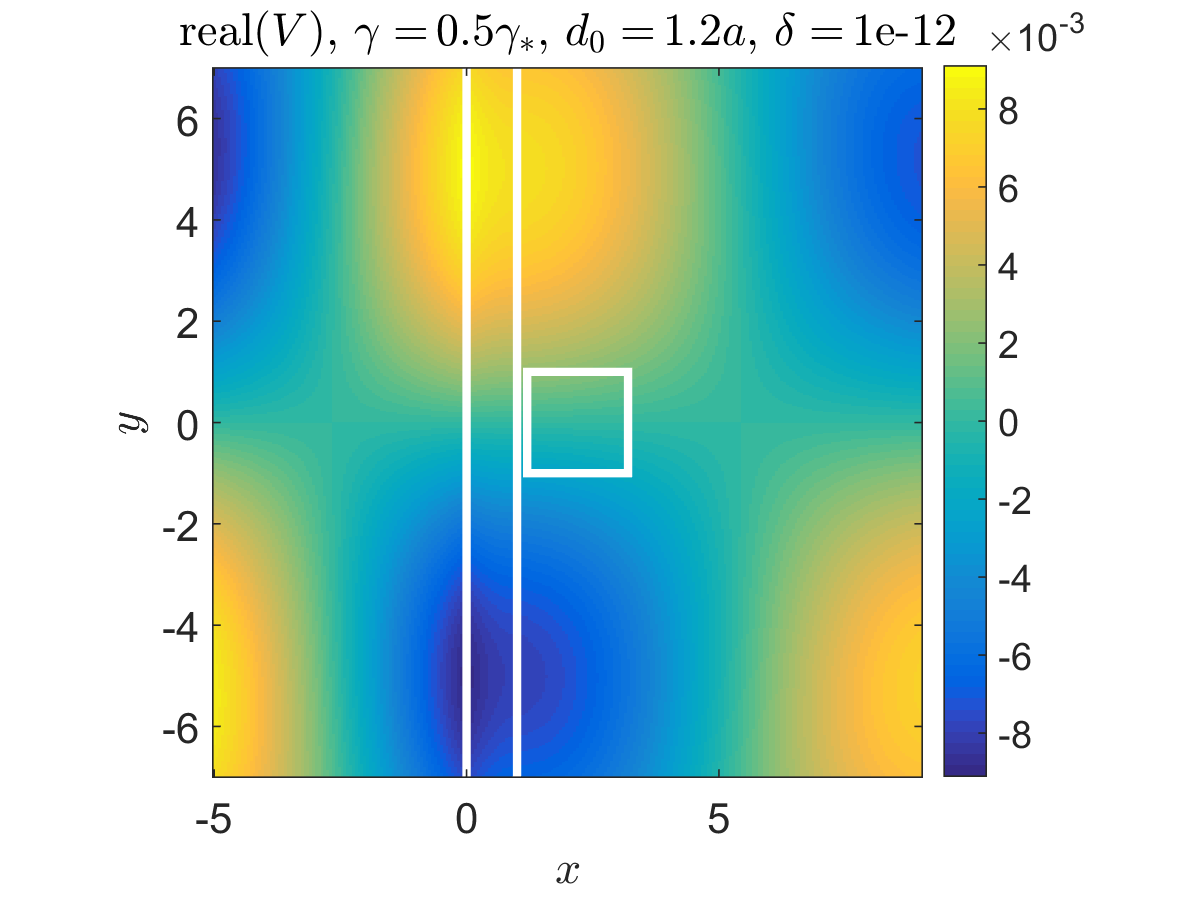}
            &
            \includegraphics[width=0.45\textwidth]{./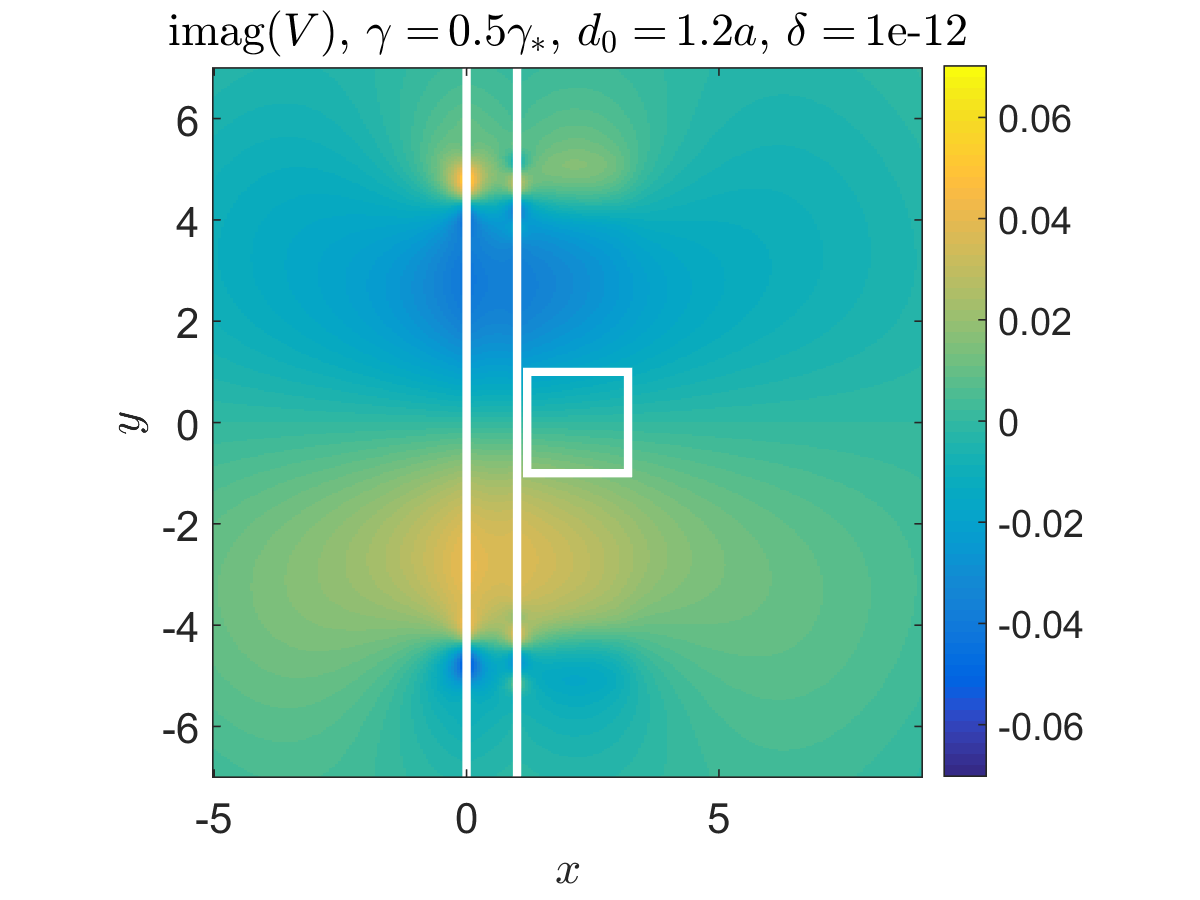}
            \\
            (c) & (d)
        \end{tabular}
        \caption{\emph{This is a plot of $V$, the solution to
                \eqref{eqn:finite_freq}, for two different sources with
                $d_0 = 1.2a$ and $\gamma = 0.5\gamma_*$: (a) $\real(V)$
                and (b) $\imag(V)$ for $f$ as in
                \eqref{eqn:special_f_specific_Fourier} (equivalently,
                \eqref{eqn:specific_special_f}); (c) $\real(V)$ and (d)
                $\imag(V)$ for $f$ as in
                \eqref{eqn:special_Bessel_Fourier} (equivalently
                \eqref{eqn:special_Bessel}).}}
        \label{fig:ALR_bust}
    \end{center}
\end{figure}


\subsubsection{Current sources for which ALR does not occur}

In the monochromatic electromagnetic setting, $f$ must satisfy
additional restrictions for it to represent a realistic
(divergence-free) current source --- see
\S~\ref{subsec:Maxwell_to_Helmholtz}.  In particular, the function $f$
should be in $L^2(\mathcal{M})$ with compact support and be of the form
\begin{equation}\label{eqn:f}
    f = \mu_0\left(\frac{\partial \widetilde{J}_y}{\partial x} -
        \frac{\partial \widetilde{J}_x}{\partial y}\right)
\end{equation}
for a current 
\begin{equation}\label{eqn:special_current_main}
    \widetilde{\mathbf{J}} = \widetilde{J}_x(x,y)\mathbf{e}_x +
    \widetilde{J}_y(x,y)\mathbf{e}_y
\end{equation}
satisfying the continuity equation
\begin{equation}\label{eqn:2D_cont_main}
    \frac{\partial \widetilde{J}_x}{\partial x} + \frac{\partial
    \widetilde{J}_y}{\partial y} = 0.
\end{equation}

We now construct a source $f$ satisfying the hypotheses of
Conjecture~\ref{con:ALR_busting} that is of the form
\eqref{eqn:f}--\eqref{eqn:2D_cont_main}.  For simplicity, we assume that
the current from \eqref{eqn:special_current_main} has the form
\begin{equation}\label{eqn:special_J_separated}
    \widetilde{J}_x(x,y) = r_1(x)t_1(y) \eqtext{and}
    \widetilde{J}_y(x,y) = r_2(x)t_2(y).
\end{equation}
with $r_1, r_2, t_1, t_2$ smooth enough. Then the continuity equation \eqref{eqn:2D_cont_main} becomes
\[
    r_1'(x)t_1(y) + r_2(x)t_2'(y) = 0.
\]
Taking the Fourier transform of this equation with respect to $y$ gives
\begin{equation}\label{eqn:cont_Fourier}
    r_1'(x)\widehat{t}_1(q) + r_2(x)\ii q\widehat{t}_2(q) = 0.
\end{equation}
We further simplify the problem by taking 
\begin{equation}\label{eqn:special_q_condition}
    \widehat{t}_1(q) = \ii q\widehat{t}_2(q).
\end{equation}
Then \eqref{eqn:cont_Fourier} becomes
\[
    \left[r_1'(x) + r_2(x)\right]\widehat{t}_1(q) = 0,
\]
which is satisfied for all $x$ and $q$ if 
\begin{equation}\label{eqn:special_x_condition}
    r_2(x) = -r_1'(x).
\end{equation}
Then \eqref{eqn:f}, \eqref{eqn:special_J_separated},
\eqref{eqn:special_q_condition}, and
\eqref{eqn:special_x_condition} imply that 
\begin{equation}\label{eqn:special_f_general}
    \fhat(x,q) 
    = \mu_0\left[r_2'(x)\widehat{t}_2(q) - r_1(x)\ii qt_1(q)\right]
    = \mu_0\left[-r_1''(x)\widehat{t}_2(q) + r_1(x)q^2\widehat{t}_2(q)\right]
\end{equation}
At this point, $f$ satisfies \eqref{eqn:f}--\eqref{eqn:2D_cont_main}.

By the Plancherel Theorem and \eqref{eqn:special_f_general},
$f\in L^2(\mathcal{M})$ if and only if $\widehat{t}_2(q) \in
L^2(\mathbb{R})$ and $q^2\widehat{t}_2(q) \in
L^2(\mathbb{R})$.  We must also be careful to also choose
$\widehat{t}_2(q)$ in such a way that $\widehat{t}_2(k_0p^1_{\gamma}) =
\widehat{t}_2(k_0p^2_{\gamma}) = 0$ and $f(x,y) =
\mu_0\left[-r_1''(x)t_2(y) - r_1(x)t_2''(y)\right]$
has compact support in $y$.

There are many examples of functions that accomplish these tasks.
Unfortunately, the functions in \eqref{eqn:specific_special_f} and
\eqref{eqn:special_Bessel} lead to current sources that are
discontinuous and, hence, not divergence free, so we need to be a bit
more careful.  To find a smooth current with compact support satisfying
our requirements, we take
\begin{equation}\label{eqn:special_t_2}
    \widehat{t}_2(q) = \sinc^3(\alpha_1 q) \sinc^2(\alpha_2 q),
\end{equation}
\[
    \alpha_1 \equiv \frac{\pi}{k_0p^1_{\gamma}}, \eqtext{and} 
    \alpha_2 \equiv \frac{\pi}{k_0p^2_{\gamma}}.
\]
Then $\widehat{t}_2(q) \in L^2(\mathbb{R})$, $q^2\widehat{t}_2(q) \in
L^2(\mathbb{R})$  and $\fhat(x,k_0p^1_{\gamma}) =
\fhat(x,k_0p^2_{\gamma}) = 0$\footnote{Here the zeros are of order $3$ and
$2$, respectively, so they are stronger than what we need according to
Conjecture~\ref{con:ALR_busting}.  We take
these higher-order zeros to ensure that $t_2(y)$ and $t_2''(y)$ are both
continuous.  There may be other choices of continuous functions $t_2(y)$
and $t_2''(y)$ such that $\widehat{t}_2(q) \in L^2(\mathbb{R})$ that has
zeros of order $1$ at $q = k_0p_1$ and $q = k_0p_2$.}.

One possible choice of $r_1(x)$ from \eqref{eqn:special_f_general} is
\begin{equation}\label{eqn:special_r_1}
    r_1(x) = 
    \begin{cases}
        C\left[\frac{2}{d_1-d_0}\left(x-d_0\right)-1\right]^3
        \left[\left|\frac{2}{d_1-d_0}\left(x-d_0\right)-1\right|
        -1\right]^3 &\text{for } d_0 \le x \le d_1, \\
        0 &\text{otherwise},
    \end{cases}
\end{equation}
where $C$ is a nonzero constant.  The function $r_1(x)$ is twice
continuously differentiable and has compact support, so $r_1''(x)$ is
continuous with compact support.  

Finally, $\fhat(x,q)$ may be computed via \eqref{eqn:special_f_general},
\eqref{eqn:special_t_2}, and \eqref{eqn:special_r_1}.  We note that the
inverse Fourier transform of $\fhat(x,q)$ can be computed analytically; for the benefit of the reader, we avoid writing out the
expression.  Importantly, $f$ is continuous with compact support.

The current source corresponding to this $f$ may be computed via
\eqref{eqn:special_current_main}, \eqref{eqn:special_J_separated},
\eqref{eqn:special_q_condition}, \eqref{eqn:special_x_condition},
\eqref{eqn:special_t_2}, and \eqref{eqn:special_r_1}.  We emphasize that 
both $\widetilde{J}_x(x,y)$ and $\widetilde{J}_y(x,y)$ are continuously
differentiable functions with compact support in $\mathcal{M}$.  

In Figure~\ref{fig:realistic_source}, we plot the source $f$ defined by
\eqref{eqn:special_f_general}--\eqref{eqn:special_r_1} with $C =
10^3\mu_0^{-1}$.  In Figures~\ref{fig:realistic_sine}(a) and (b), we
plot $\real(V)$ and $\imag(V)$, respectively, corresponding to this
source.  As expected, we see that there is no resonant region near the
slab even though the source is quite close to the slab ($d_0 = 1.2a$),
$\gamma = 0.5\gamma_*$, and $\delta = 10^{-12}$.
\begin{figure}[!hbp]
    \begin{center}
        \includegraphics[width=0.75\textwidth]{./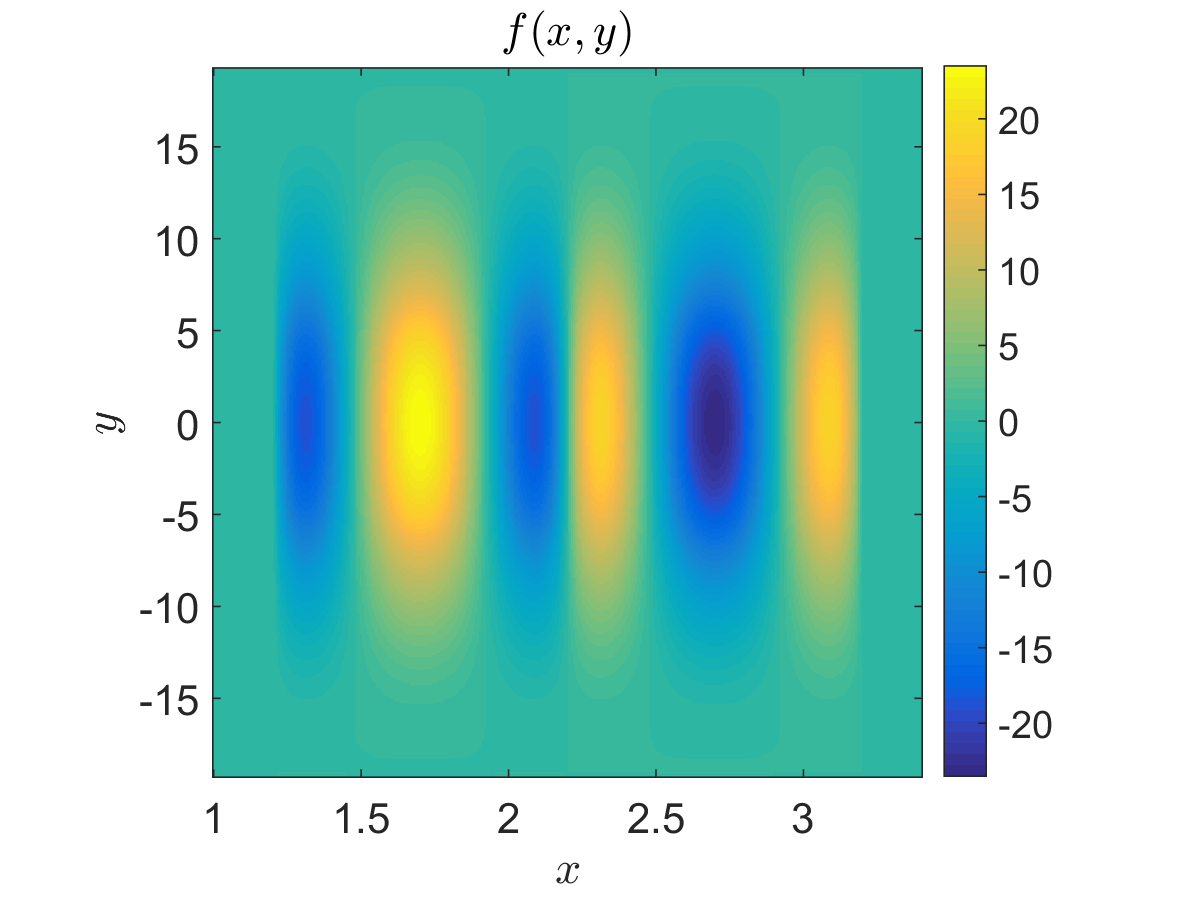}
        \caption{\emph{The source $f$ defined by
            \eqref{eqn:special_f_general}--\eqref{eqn:special_r_1} with 
            the parameters $d_0 = 1.2a$ and $d_1 = d_0 + 2$.}}
        \label{fig:realistic_source}
    \end{center}
\end{figure}
\begin{figure}[!hbp]
    \begin{center}
        \begin{tabular}{c c}
            \includegraphics[width=0.45\textwidth]{./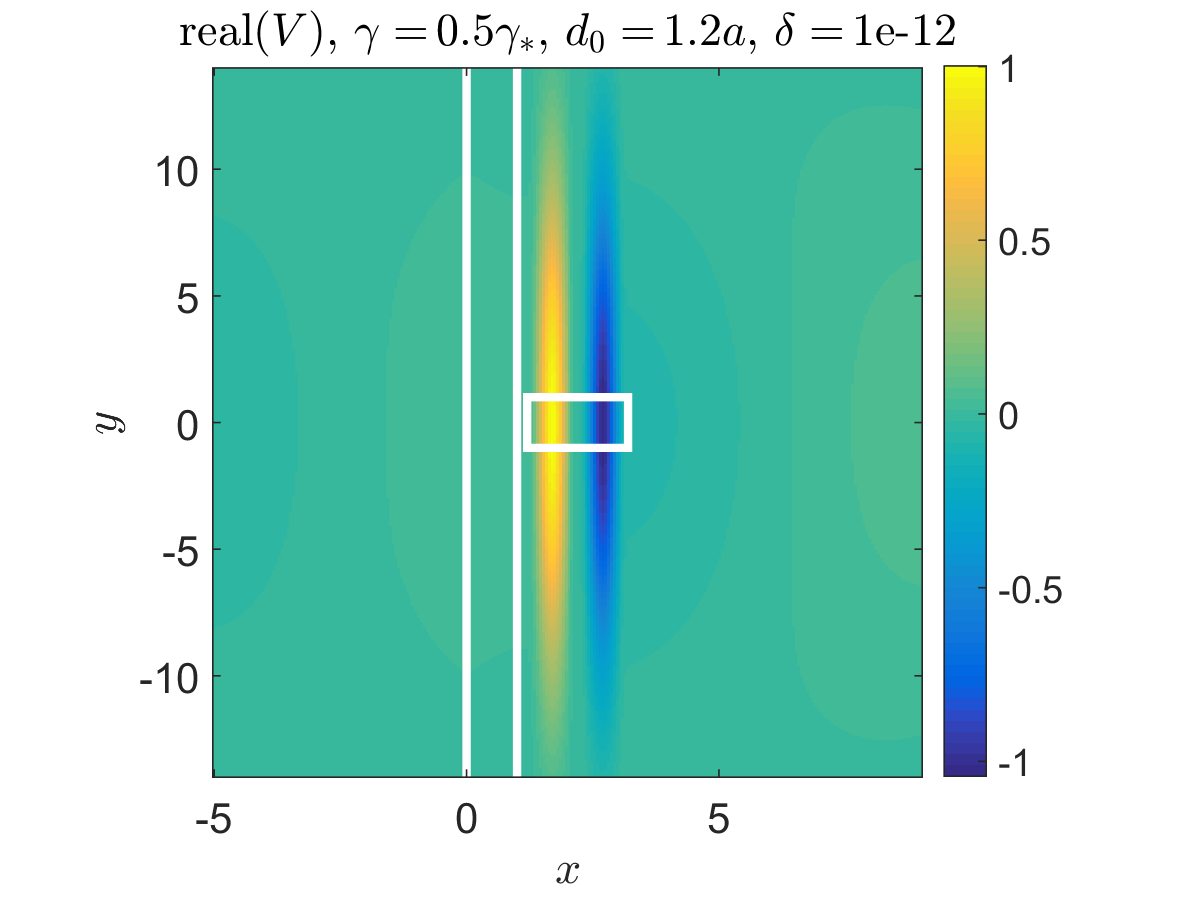}
            &
            \includegraphics[width=0.45\textwidth]{./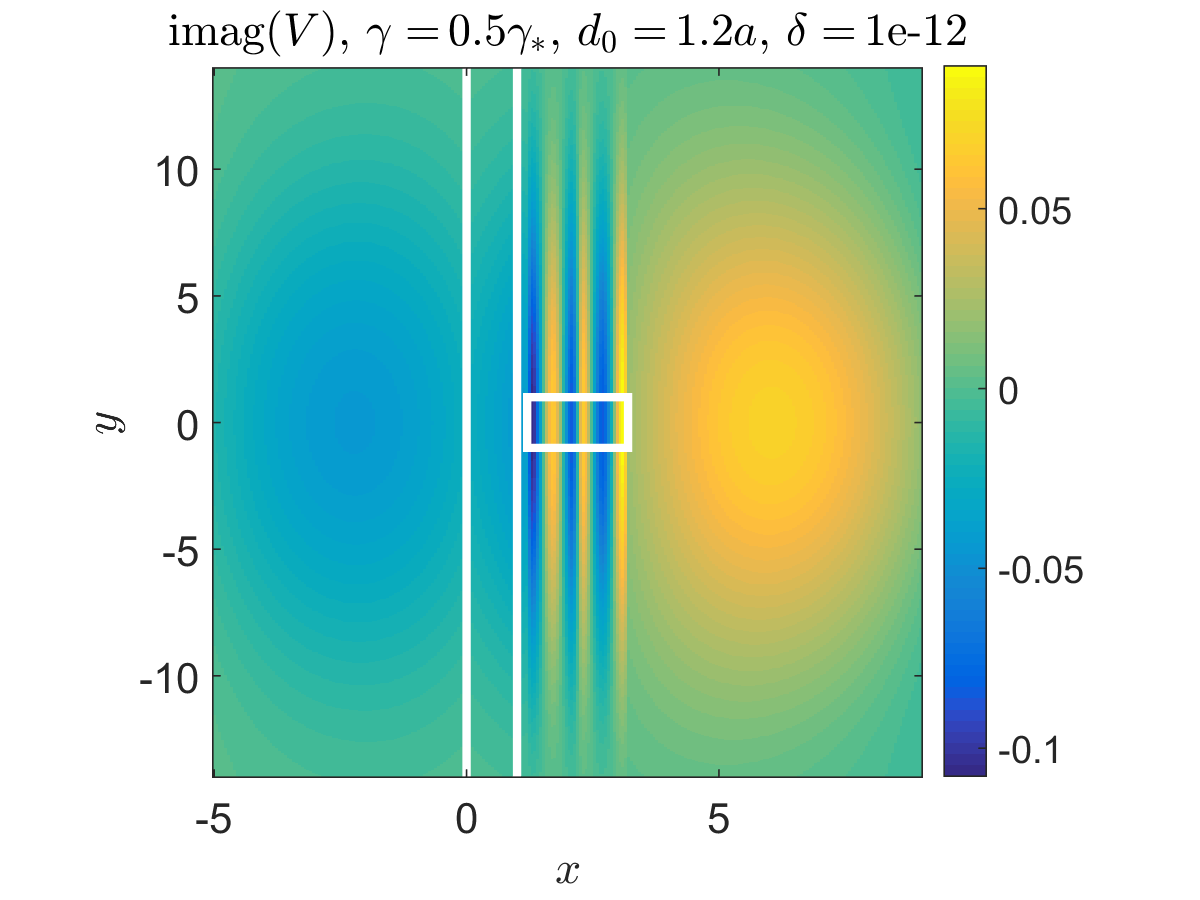}
            \\
            (a) & (b)
        \end{tabular}
        \caption{\emph{In this figure, we plot (a) $\real(V)$ and (b)
                $\imag(V)$, where $V$ is the solution to
                \eqref{eqn:finite_freq} corresponding to the source $f$
                defined through
                \eqref{eqn:special_f_general}--\eqref{eqn:special_r_1}
                with the parameters $d_0 = 1.2a$, $\gamma = 0.5\gamma_*$,
                and $\delta = 10^{-12}$.  To make the behavior of $V$
                more clear, we clipped the maximum and minimum values in
                each plot to $0.2$ (yellow) and $-0.2$ (blue)
        respectively.}}
        \label{fig:realistic_sine}
    \end{center}
\end{figure}


\appendix

\section{Proofs and derivations omitted in the text}

In this appendix, we provide detailed proofs we omitted in the
main body of the paper.


\subsection{Proof of Lemma~\ref{lem:g_0_roots}}
\label{subsec:proof_of_lem_g_0_roots}

Setting $g_0(p;\gamma) = 0$, defining a new variable $s \equiv p^2$, and
simplifying, we find that $g_0(p;\gamma) = 0$ is equivalent to having
\begin{equation}\label{eqn:g_0_roots_equivalent}
    s + \sqrt{s^2-1} = \ee^{\gamma\sqrt{s+1}}.
\end{equation}
We define
\begin{equation}\label{eqn:G_0}
    G_0(s;\gamma) \equiv s + \sqrt{s^2-1} - \ee^{\gamma\sqrt{s+1}}
    \eqtext{for} s\ge 0, \, \gamma > 0.
\end{equation}
Then $g_0(p;\gamma) = 0$ if and only if $G_0(s;\gamma) = 0$.  We will
complete the proof of the lemma in several steps.
\begin{enumerate}
    \item \emph{Claim: $G_0(s;\gamma) \ne 0$ for $s \le 1$.}\\
        \emph{Proof of claim:} For $s \le 1$, 
        \[
            G_0(s;\gamma) = s - \ee^{\gamma\sqrt{s+1}} +
            \ii\sqrt{1-s^2}.
        \]
        Since $\imag G_0(s;\gamma) = \sqrt{1-s^2} > 0$ for $s < 1$, the
        only point at which $G_0(s;\gamma)$ could possibly be $0$ is $s
        = 1$.  But, for $\gamma > 0$,
        \[
            G_0(1;\gamma) = 1-\ee^{\sqrt{2}\gamma} < 0.
        \]
        In particular, this proves that if $p_0 \ge 0$ is a root of
        $g_0(p;\gamma)$, then $p_0 > 1$.
    \item \emph{Claim: For $\gamma \ge \sqrt{\ee}/(\ee+1) \approx
        0.4434$, the function $s \mapsto G_0(s;\gamma)$ is concave for
        $s > 1$.} \\
        \emph{Proof of claim:} For $s > 1$, we have
        \[
            G_0(s;\gamma) = s + \sqrt{s^2-1} - \ee^{\gamma\sqrt{s+1}}
        \]
        and
        \[
            \frac{\partial^2 G_0}{\partial s^2} = 
            -\frac{1}{(s^2-1)^{3/2}} +
            \frac{\gamma\ee^{\gamma\sqrt{s+1}}}{4(s+1)^{3/2}}
            \left(1-\gamma\sqrt{s+1}\right).
        \]
        We note that $1-\gamma\sqrt{s+1} < 0$ for $\gamma >
        \frac{1}{\sqrt{2}} \approx 0.7071$ and $s > 1$.  This proves the
        claim for $\gamma > \frac{1}{\sqrt{2}}$.

        Now consider 
        \[
            \frac{\sqrt{\ee}}{\ee+1} \le \gamma \le \frac{1}{\sqrt{2}}.
        \]
        We note that $1-\gamma\sqrt{s+1} < 0$ for $s > \gamma^{-2}-1$,
        so $\partial^2G_0/\partial s^2 < 0$ for $s > \gamma^{-2}-1$.
        Then, for $1 < s \le \gamma^{-2} -1$, we have
        \begin{equation}\label{eqn:int_inequality}
            \frac{\partial^2G_0}{\partial s^2} < 0
            \quad
            \Leftrightarrow
            \quad
            \gamma\left(s-1\right)^{3/2}\ee^{\gamma\sqrt{s+1}}
            \left(1-\gamma\sqrt{s+1}\right) < 4.
        \end{equation}
        For $1 < s \le \gamma^{-2}-1$, the left-hand side of the above
        inequality satisfies
        \begin{equation}\label{eqn:int_inequality_2}
            \gamma\left(s-1\right)^{3/2}\ee^{\gamma\sqrt{s+1}}
            \left(1-\gamma\sqrt{s+1}\right)
            \le
            \ee\gamma\left(\gamma^{-2}-2\right)^{3/2}
            \left(1-\sqrt{2}\gamma\right).
        \end{equation}
        Using calculus and Maple, it can be shown that
        \[
            \max_{\gamma\in \left[\frac{\sqrt{\ee}}{\ee+1},
            \frac{1}{\sqrt{2}}\right]}
            \ee\gamma\left(\gamma^{-2}-2\right)^{3/2}
            \left(1-\sqrt{2}\gamma\right) 
            = \frac{\left(\ee^2+1\right)^{3/2} \left(\ee - \sqrt{2\ee} +
                1\right)}{\left(\ee+1\right)^2} \approx 2.4370.
        \]
        In combination with \eqref{eqn:int_inequality} and
        \eqref{eqn:int_inequality_2}, this implies
        \[
            \frac{\partial^2G_0}{\partial s^2} < 0
        \]
        for $s > 1$ as long as $\gamma \ge \sqrt{\ee}/(\ee+1)$ and
        verifies the claim.
    \item \emph{Claim: For $0 < \gamma < \sqrt{\ee}/(\ee+1) \approx
        0.4434$, the function $s \mapsto G_0(s;\gamma)$ has two real
        zeros $1 < s^1_{\gamma} < s^2_{\gamma}$.} \\
        \emph{Proof of claim:} We begin by defining the functions 
        \[
            G_1(s) \equiv s+\sqrt{s^2-1} 
            \eqtext{and}
            G_2(s;\gamma) \equiv \ee^{\gamma\sqrt{s+1}}
        \]
        for $s > 1$ and $\gamma \in (0,\sqrt{\ee}/(\ee+1)$.  Then 
        \[
            G_0(s;\gamma) = G_1(s)-G_2(s;\gamma).
        \]
        For $s > 1$ we have
        \begin{equation}\label{eqn:dG1ds}
            \frac{\dd G_1}{\dd s} = 1 + \frac{s}{\sqrt{s^2-1}} > 2
            \eqtext{and}
            \frac{\dd^2 G_1}{\dd s^2} = -\frac{1}{(s^2-1)^{3/2}} <
            0.
        \end{equation}
        Similarly, for $s > 1$ we have
        \[
            \frac{\partial G_2}{\partial s} =
            \frac{\gamma}{2}\cdot
            \frac{\ee^{\gamma\sqrt{s+1}}}{\sqrt{s+1}} > 0 
            \eqtext{and}
            \frac{\partial^2G_2}{\partial s^2} =
            \frac{\gamma\ee^{\gamma\sqrt{s+1}}}{4(s+1)^{3/2}}
            \left(\gamma\sqrt{s+1}-1\right).
        \]
        Then $\partial^2G_2/\partial s^2 < 0$ for $s < \gamma^{-2}-1$
        and $\partial^2G_2/\partial s^2 > 0$ for $s > \gamma^{-2}-1$.

        Next, we note that 
        \[
            G_0(1;\gamma) = G_1(1;\gamma) - G_2(1;\gamma) =
            1-\ee^{\sqrt{2}\gamma} < 0.
        \]
        In addition, we have
        \[
            \frac{\partial G_2}{\partial s}(1;\gamma) \le 
            \frac{\partial G_2}{\partial
            s}\left(1;\frac{\sqrt{\ee}}{\ee+1}\right) \approx 0.2935 <
            2.
        \]
        Because $\partial G_2/\partial s$ is decreasing for $1 < s <
        \gamma^{-2}-1$, and $\dd G_1/\dd s > 2$ for $s > 1$, the above
        inequality implies
        \[
            \frac{\partial G_0}{\partial s} = \frac{\dd
            G_1}{\dd s} - \frac{\partial G_2}{\partial s} > 0
            \eqtext{for} 1 < s < \gamma^{-2}-1.
        \]
        In fact, $G_0$ increases enough on this interval to become
        positive; we have
        \begin{align*}
            G_0(\gamma^{-2}-1;\gamma) &=
            \gamma^{-2}\left[1-(\ee+1)\gamma^2 +
            \sqrt{1-2\gamma^2}\right] \\
            &\ge 
            \gamma^{-2}\left[1-\frac{\ee}{\ee+1} +
            \sqrt{1-\frac{2\ee}{(\ee+1)^2}}\right] \\
            &\approx 1.0479\gamma^{-2} > 0.
        \end{align*}
        
        For $s > \gamma^{-2}-1$, $\partial G_2/\partial s$ increases
        without bound while $\dd G_1/\dd s$ will get arbitrarily close
        to $2$; the upshot of this is that $\partial G_0/\partial s$
        becomes arbitrarily negative for $s$ large enough.

        In summary, we have $G_0(1;\gamma) < 0$; the function
        $G_0(s;\gamma)$ increases at least until $s = \gamma^{-2}-1$
        where $G_0(\gamma^{-2}-1;\gamma) > 0$.  Next, $G_0(s;\gamma)$
        will continue increasing until $\partial G_2/\partial s$ becomes
        larger than $\dd G_1/\dd s$; then $G_0(s;\gamma)$ will decrease
        toward $-\infty$ as $s$ approaches $\infty$.  Thus
        $G_0(s;\gamma)$ has $2$ real zeros for $\gamma <
        \sqrt{\ee}/(\ee+1)$; by item (1) above, both of these zeros must
        be larger than $1$.  Finally, since
        $G_0(\gamma^{-2}-1;\gamma)$ is strictly greater than $0$, by
        continuity the roots cannot be equal.  This also proves
        that both of the zeros are of order $1$.  This proves the claim.
\end{enumerate}
We have shown that $G_0(s;\gamma)$ has two real roots $1 < s^1_{\gamma}
< s^2_{\gamma}$ provided $0 < \gamma < \sqrt{\ee}/(\ee+1)$.  For $\gamma
\ge \sqrt{\ee}/(\ee+1)$, the function $s\mapsto G_0(s;\gamma)$ is
concave for $s > 1$ by item (2) above.  Thus $G_0(s;\gamma)$ has a
unique maximum and will have two real roots of order $1$ if the maximum
is positive and no real roots if the maximum is negative.  

Because the maximum of $G_0(s;\gamma)$ is positive for $\gamma =
\sqrt{\ee}/(\ee+1)$ (see the proof of item (3) above) and
$G_0(s;\gamma)$ decreases at an exponential rate as a function of
$\gamma$, there exists a $\gamma_* > \sqrt{\ee}/(\ee+1)$ such that 
\[
    \max_{s>1} G_0(s;\gamma) 
    \begin{cases}
        > 0 &\text{for } \gamma < \gamma_*, \\
        < 0 &\text{for } \gamma > \gamma_*.
    \end{cases}
\]
In particular, we have
\begin{equation}\label{eqn:gamma_star}
    \gamma_* = \argmin_{\gamma > 0} \left[\max_{s > 1}
    G_0(s;\gamma)\right].
\end{equation}
MATLAB gives $\gamma_* \approx 0.9373$.  This completes the proof.


\subsection{$I_p$ for dipole sources}\label{subsec:dipole}

In this section, we derive an explicit formula for $I_q$, defined in
\eqref{eqn:Iq}, when the source $f$ is a dipole.  In particular, we
consider a source of the form
\begin{equation*}
    f(x,y) = \mathbf{d}\cdot\nabla[\delta(x-x_0)\delta(y-y_0)]
    = d_x\left[\frac{\partial}{\partial x}\delta(x-x_0)\right]
    \delta(y-y_0) + 
    d_y \delta(x-x_0)\frac{\partial}{\partial y}\delta(y-y_0);
\end{equation*}
here $\mathbf{d} = \left[d_x,d_y\right]^T$ is the dipole moment. Then
\eqref{eqn:Fourier_def} gives
\begin{align*}
    \widehat{f}(x,q) 
    &= d_x\frac{\partial}{\partial x}\delta(x-x_0)
        \int_{-\infty}^{\infty} \delta(y-y_0) \ee^{-\ii q y} \, dy +
        d_y\delta(x-x_0)\int_{-\infty}^{\infty} \frac{\partial}{\partial
        y}\delta(y-y_0) \ee^{-\ii q y} \, dy \\
    &= d_x\frac{\partial}{\partial x}\delta(x-x_0)\ee^{-\ii q y_0} -
        d_y\delta(x-x_0)\int_{-\infty}^{\infty} \delta(y-y_0)(-\ii q)
        \ee^{-\ii q y} \, dy\\
    &= d_x\frac{\partial}{\partial x}\delta(x-x_0)\ee^{-\ii q y_0} + \ii
        q d_y\delta(x-x_0)\ee^{-\ii q y_0}.
\end{align*}
Next, \eqref{eqn:Iq} implies
\begin{align*}
    I_q 
    &= \int_{a}^{\infty} \widehat{f}(s,q) \ee^{-k_0\nu_m s} \, ds \\
    &= d_x\ee^{-\ii q y_0} \int_a^{\infty} \frac{\partial}{\partial s}
    \delta(s-x_0)\ee^{-k_0\nu_m s} \, ds + \ii q d_y\ee^{-\ii q y_0}
    \int_{a}^{\infty} \delta(s-x_0)\ee^{-k_0\nu_m s} \, ds \\
    &= -d_x\ee^{-\ii q y_0} \int_a^{\infty} \delta(s-x_0)
    (-k_0\nu_m)\ee^{-k_0\nu_m s} \, ds + \ii q d_y \ee^{-\ii q y_0}
    \ee^{-k_0\nu_m x_0} \\
    &= d_x\ee^{-\ii q y_0} k_0\nu_m\ee^{-k_0\nu_m x_0} + \ii q d_y 
    \ee^{-\ii q y_0} \ee^{-k_0\nu_m x_0} \\
    &= (d_x k_0\nu_m + \ii d_y q)\ee^{-k_0\nu_m x_0}\ee^{-\ii q y_0}.
\end{align*}
If we take $x_0 = d_0$ and $y_0 = 0$ (the typical case), this becomes
\[
    I_q = (d_x k_0\nu_m + \ii d_y q)\ee^{-\nu_m d_0},
\]
which, after the changes of variables $p = q/k_0$ and $k_0 = \gamma/a$, 
becomes
\begin{equation}\label{eqn:I_p_dipole}
    I_p = 
    \begin{cases}
        \ii \dfrac{\gamma}{a}\left(d_x\sqrt{1-p^2} + d_yp\right) 
        \ee^{-\ii \gamma\frac{d_0}{a}\sqrt{1-p^2}} 
        &\text{if } 0 \le p \le 1, \myspace{}
        \dfrac{\gamma}{a}\left(d_x\sqrt{p^2-1} + \ii d_y p\right)
        \ee^{-\gamma\frac{d_0}{a}\sqrt{p^2-1}} 
        &\text{if } 1 \le p < \infty.
    \end{cases}
\end{equation}
From here the triangle inequality implies that $I_p$ satisfies the bound
\eqref{eqn:I_p_stricter_bounds} with
\[
    B(p;\gamma) \equiv
    \begin{cases}
        \dfrac{\gamma}{a}\left(|d_x|\sqrt{1-p^2} + |d_y|p\right)
        &\text{for } 0 \le p \le 1, \myspace{}
        \dfrac{\gamma}{a}\left(|d_x|\sqrt{p^2-1} + |d_y|p\right)
        &\text{for } 1 \le p < \infty.
    \end{cases}
\]

Finally, the same computations as those leading up to
\eqref{eqn:I_p_dipole} imply that quadrupole, octopole, and higher order
distributional sources satisfy equations similar to
\eqref{eqn:I_p_dipole} with higher order powers of $p$ and $\gamma$.
Therefore, as discussed in Remark~\ref{rem:dipole_bounded}, such sources
satisfy Theorems~\ref{thm:bounded} and \ref{thm:shielding}.


\subsection{Proof of Theorem~\ref{thm:bounded}}
\label{subsec:proof_of_thm_bounded}

We will prove this theorem in several steps.  Essentially, our goal is
to bound the integrand $L_{\delta}(p;\gamma)$ from above by a function
that is integrable and independent of $\delta$.  We begin by finding a
lower bound on $|g_{\delta}(p;\gamma)|$; in particular, we note that
\eqref{eqn:g} and the reverse triangle inequality imply that
\begin{equation}\label{eqn:g_reverse_triangle}
    |g_{\delta}(p;\gamma)| \ge 
    \left|\left|\nu_s-(1+\ii\delta)\nu_m\right|^2 -
    \left|\nu_s+(1+\ii\delta)\nu_m\right|^2\ee^{-2\gamma\nu_s'}\right|.
\end{equation}
In the next lemma, we provide a lower bound on the first term in
\eqref{eqn:g_reverse_triangle}.
\begin{lemma}\label{lem:g_first_term_lower_bound}
    Suppose $\gamma > \gamma_*$ and $\delta > 0$; then 
    \begin{equation}\label{eqn:a_lower_bound}
        \left|\nu_s-(1+\ii\delta)\nu_m\right|^2 \ge
        \begin{cases}
            1 &\text{for } 0 \le p \le 1,\\
            \left(\sqrt{p^2+1}-\sqrt{p^2-1}\right)^2
                &\text{for } 1 \le p < \infty.
        \end{cases}
    \end{equation}
\end{lemma}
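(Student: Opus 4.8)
The plan is to compute $|\nu_s-(1+\ii\delta)\nu_m|^2$ directly from the real and imaginary parts of $\nu_s$ and $\nu_m$, treating the two regimes $0\le p\le1$ and $p\ge1$ separately because \eqref{eqn:nuc} (which also governs $\nu_m$, since $\nu_c=\nu_m$) makes $\nu_m$ purely imaginary in the first and real in the second.

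The first step is to record two elementary facts about $\nu_s$. After the substitution $q=k_0p$, \eqref{eqn:shell_equation_hat} gives $\nu_s^2=(p^2+1)+\ii\delta$, which has positive real and imaginary parts, so by Remark~\ref{rem:principal_root} $\nu_s=\nu_s'+\ii\nu_s''$ lies in the open first quadrant; writing the principal root explicitly gives $(\nu_s')^2=\tfrac12\bigl[\sqrt{(p^2+1)^2+\delta^2}+(p^2+1)\bigr]\ge p^2+1$ and $2\nu_s'\nu_s''=\delta$. Consequently $\nu_s'\ge\sqrt{p^2+1}\ge1$ and $\nu_s''=\delta/(2\nu_s')\le\delta/2$. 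These are the only inputs the argument needs.

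For $p\ge1$ I would use $\nu_m=\sqrt{p^2-1}\ge0$, so that $|\nu_s-(1+\ii\delta)\nu_m|^2=(\nu_s'-\nu_m)^2+(\nu_s''-\delta\nu_m)^2\ge(\nu_s'-\nu_m)^2$, and since $\nu_s'\ge\sqrt{p^2+1}>\nu_m\ge0$ the right side is $\ge(\sqrt{p^2+1}-\sqrt{p^2-1})^2$, as claimed. For $0\le p\le1$ I would set $m\equiv\sqrt{1-p^2}\in[0,1]$, so $\nu_m=\ii m$ and $(1+\ii\delta)\nu_m=-\delta m+\ii m$; expanding,
\[
  |\nu_s-(1+\ii\delta)\nu_m|^2=\bigl[(\nu_s')^2+(\nu_s'')^2+m^2\bigr]+2m\bigl(\delta\nu_s'-\nu_s''\bigr)+\delta^2m^2 .
\]
Here the bracket is $\ge(p^2+1)+(1-p^2)=2$, the middle term is nonnegative because $\delta\nu_s'\ge\nu_s''$ reduces to $(\nu_s')^2\ge\tfrac12$ (true since $\nu_s'\ge1$), and the last term is nonnegative, so the whole expression is $\ge2\ge1$.

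I expect no real obstacle: the entire content is choosing the correct branch of $\nu_m$ in each regime and exploiting $\nu_s'\ge\sqrt{p^2+1}$ together with $2\nu_s'\nu_s''=\delta$. (It is worth noting that the hypothesis $\gamma>\gamma_*$ is not actually used here — it is inherited from the ambient setting of Theorem~\ref{thm:bounded}.)
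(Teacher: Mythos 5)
Your proof is correct and uses essentially the same approach as the paper: split $\nu_s-(1+\ii\delta)\nu_m$ into real and imaginary parts, exploit $\nu_s' \ge \sqrt{p^2+1}$ (together with $2\nu_s'\nu_s''=\delta$), and treat the purely-imaginary versus real branches of $\nu_m$ separately. The only cosmetic difference is that the paper discards the imaginary part outright via $|z|^2 \ge (\real z)^2$ in both regimes, whereas you retain it for $0\le p\le 1$ and thereby obtain the marginally sharper lower bound $2$; you are also right that the hypothesis $\gamma>\gamma_*$ plays no role in this lemma.
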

\begin{proof}
    We have
    \begin{equation}\label{eqn:first_term_lower_int}
        \left|\nu_s-(1+\ii\delta)\nu_m\right|^2
        \ge \left|\real\left[\nu_s-(1+\ii\delta)\nu_m\right]
            \right|^2.
    \end{equation}
    If $0 \le p \le 1$, then 
    \begin{equation}\label{eqn:first_term_lower_int_2}
        \left|\real\left[\nu_s-(1+\ii\delta)\nu_m\right]\right|
        = \sqrt{\dfrac{(p^2+1) + \sqrt{(p^2+1)^2 +
            \delta^2}}{2}} + \delta\sqrt{1-p^2}
        \ge 1.
    \end{equation}
    If $1 \le p < \infty$, then
    \begin{equation}\label{eqn:first_term_lower_int_3}
        \left|\real\left[\nu_s-(1+\ii\delta)\nu_m\right]\right|
        = \sqrt{\dfrac{(p^2+1) + \sqrt{(p^2+1)^2 +
            \delta^2}}{2}} - \sqrt{p^2-1}
            \ge \sqrt{p^2+1}-\sqrt{p^2-1}.
    \end{equation}
    Squaring both sides of \eqref{eqn:first_term_lower_int_2} and
    \eqref{eqn:first_term_lower_int_3} and utilizing
    \eqref{eqn:first_term_lower_int} gives us the desired result.
\end{proof}
In the next lemma, we provide upper bounds on the second term in
\eqref{eqn:g_reverse_triangle}.  
\begin{lemma}\label{lem:g_second_term_upper_bound}
    Suppose $\gamma > \gamma_*$;
    then there is a constant $0 < C < 1$ such that 
    \begin{equation}\label{eqn:b_upper_bound}
        \left|\nu_s+(1+\ii\delta)\nu_m\right|^2\ee^{-2\gamma\nu_s'} \le
        \begin{cases}
            C &\text{for } 0 \le p \le 1, \ 0 < \delta \le 0.4, \\
            \left(1+3\sqrt{\delta}\right)
            \left(\sqrt{p^2+1}+\sqrt{p^2-1}\right)^2
            \ee^{-2\gamma\sqrt{p^2+1}} &\text{for } 1 \le p < \infty, \ 
            0 < \delta \le 1.
        \end{cases}
    \end{equation}
\end{lemma}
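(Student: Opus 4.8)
The plan is to split the estimate into the two stated ranges of $p$ and to use throughout the elementary bound $\nu_s' \ge \sqrt{p^2+1}$, where we write $\nu_s = \nu_s' + \ii\nu_s''$ with $\nu_s' \ge 0$ as dictated by Remark~\ref{rem:principal_root}. This bound follows at once from the identity $\nu_s'^2 = \tfrac12\bigl[(p^2+1) + \sqrt{(p^2+1)^2+\delta^2}\,\bigr]$ (a consequence of $\nu_s^2 = (p^2+1) + \ii\delta$) together with $\sqrt{(p^2+1)^2+\delta^2} \ge p^2+1$; in particular $\nu_s' \ge 1$ for every $p \ge 0$, and since $2\nu_s'\nu_s'' = \imag\nu_s^2 = \delta$ one also has $\nu_s'' = \delta/(2\nu_s') \le \delta/2$. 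These two facts will do most of the work.

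For $0 \le p \le 1$ I would use $\nu_m = \ii\sqrt{1-p^2}$ from \eqref{eqn:nuc} to write $\nu_s + (1+\ii\delta)\nu_m = (\nu_s' - \delta\sqrt{1-p^2}) + \ii(\nu_s'' + \sqrt{1-p^2})$, and then bound $|\nu_s + (1+\ii\delta)\nu_m|^2$ by an absolute constant: since $\nu_s' \ge 1 > 0.4 \ge \delta\sqrt{1-p^2}$ we get $(\nu_s' - \delta\sqrt{1-p^2})^2 \le \nu_s'^2 \le (p^2+1) + \delta/2 \le 2.2$ and $(\nu_s'' + \sqrt{1-p^2})^2 \le (0.2 + 1)^2 = 1.44$, hence $|\nu_s + (1+\ii\delta)\nu_m|^2 \le 3.64$ uniformly for $p \in [0,1]$ and $\delta \in (0,0.4]$. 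Multiplying by $\ee^{-2\gamma\nu_s'} \le \ee^{-2\gamma} \le \ee^{-2\gamma_*}$ (which uses $\nu_s' \ge 1$ and $\gamma > \gamma_*$) and observing that $3.64\,\ee^{-2\gamma_*} \approx 0.56 < 1$ gives the first bound, with $C = 3.64\,\ee^{-2\gamma_*}$ a valid choice.

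For $1 \le p < \infty$, where $\nu_m = \sqrt{p^2-1} \ge 0$ by \eqref{eqn:nuc}, the key move is that $\nu_s' \ge \sqrt{p^2+1}$ gives $\ee^{-2\gamma\nu_s'} \le \ee^{-2\gamma\sqrt{p^2+1}}$, so it suffices to show $|\nu_s + (1+\ii\delta)\nu_m|^2 \le (1 + 3\sqrt\delta)\bigl(\sqrt{p^2+1} + \sqrt{p^2-1}\bigr)^2$. Writing $A = \sqrt{p^2+1}$ and $B = \sqrt{p^2-1}$ (so $B \le A$ and $A + B \ge \sqrt2$), I would use the triangle inequality $|\nu_s + (1+\ii\delta)\nu_m| \le |\nu_s| + (1+\delta)B$ together with the estimate $|\nu_s| = \bigl((p^2+1)^2+\delta^2\bigr)^{1/4} \le A + \delta^2/(8\sqrt2)$ (from $(1+t)^{1/4}\le 1 + t/4$ and $A^3 \ge 2\sqrt2$) to get $|\nu_s + (1+\ii\delta)\nu_m| \le (A+B) + \delta B + \delta^2/(8\sqrt2)$. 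Squaring and repeatedly invoking $B \le (A+B)/2$, $A+B \le (A+B)^2/\sqrt2$, and $\delta^2 \le \sqrt\delta$ for $\delta \le 1$, each cross term and the squared term is bounded by a small multiple of $\sqrt\delta\,(A+B)^2$, and these multiples sum to well under $3$; this yields the claimed bound (the constant $3$ is far from sharp).

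The estimates are essentially bookkeeping; the two points that need care are (i) extracting $\nu_s' \ge \sqrt{p^2+1}$ from the principal-root convention, which is exactly what lets us discard the surplus decay $\ee^{-2\gamma(\nu_s'-\sqrt{p^2+1})}$ in the second case, and (ii) confirming in the first case that the constant is \emph{strictly} less than $1$, not merely $\le 1$ --- and this is precisely where $\gamma > \gamma_*$ enters, since it forces $\ee^{-2\gamma_*} < 1/3.64$. I do not anticipate a genuine obstacle beyond keeping the constants under control.
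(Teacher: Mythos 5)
Your proof is correct and follows the same overall strategy as the paper: split on $p\le 1$ versus $p\ge 1$, use $\nu_s' \ge \sqrt{p^2+1}$ to control the exponential, and, in the first case, win the factor $\ee^{-2\gamma_*}<1$ from $\gamma>\gamma_*$ together with $\nu_s'\ge 1$. The only places you diverge are minor. For $p\le 1$ you separate real and imaginary parts (gaining from the partial cancellation between $\nu_s'$ and $-\delta\sqrt{1-p^2}$), whereas the paper applies the modulus triangle inequality $|\nu_s+(1+\ii\delta)\nu_m| \le |\nu_s|+\sqrt{1+\delta^2}\,|\nu_m|$ and bounds each modulus crudely; both produce a uniform constant strictly below $1$ (yours $\approx 0.56$, the paper's $\approx 0.96$). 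For $p\ge 1$ you estimate $|\nu_s|=\bigl[(p^2+1)^2+\delta^2\bigr]^{1/4}$ via $(1+t)^{1/4}\le 1+t/4$, while the paper uses the subadditivity $(x+y)^{1/4}\le x^{1/4}+y^{1/4}$, which gives $|\nu_s|\le\sqrt{p^2+1}+\sqrt\delta$ in one step and then lets the $\sqrt\delta$ fall out of the squared expression directly; your variant is tighter for small $\delta$ but requires exactly the bookkeeping you outline ($B\le(A+B)/2$, $A+B\ge\sqrt2$, $\delta,\delta^2\le\sqrt\delta$), and it comfortably lands under the stated constant $3$.
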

\begin{proof}
    The triangle inequality, the assumptions that $\gamma >
    \gamma_*$ and $\delta > 0$, and the bound
    \begin{equation}\label{eqn:shell_equation_hat_prime_bound}
        \nu_s' = \sqrt{\dfrac{(p^2+1) + \sqrt{(p^2+1)^2 +
        \delta^2}}{2}} \ge \sqrt{p^2+1}
    \end{equation}
    imply that
    \[
        \left|\nu_s+(1+\ii\delta)\nu_m\right|\ee^{-\gamma\nu_s'}
        \le \left(\left|\nu_s\right| +
        \sqrt{1+\delta^2}|\nu_m|\right)\ee^{-\gamma_*\sqrt{p^2+1}}.
    \]
    For $0 \le p \le 1$ and $0 < \delta \le 0.4$, the definitions of
    $\nu_s$ and $\nu_m$ from \eqref{eqn:shell_equation_hat} and \eqref{eqn:matrix_equation_hat},
    respectively, imply that the above
    bound becomes
    \begin{align*}
        \left|\nu_s+(1+\ii\delta)\nu_m\right|\ee^{-\gamma\nu_s'}
        &\le \left\{\left[(p^2+1)^2 + \delta^2\right]^{1/4} +
        \sqrt{1+\delta^2}\sqrt{1-p^2}\right\}\ee^{-\gamma_*\sqrt{p^2+1}}
        \\
        &\le \left[\left(4+\delta^2\right)^{1/4} +
        \sqrt{1+\delta^2}\right]\ee^{-\gamma_*} \\
        &\le \left[\left(4+0.4^2\right)^{1/4} +
            \sqrt{1+0.4^2}\right]\ee^{-\gamma_*} \\
        &\approx 0.9813.
    \end{align*}
    Therefore, if we take $C = 0.99$, we have
    \[
        \left|\nu_s+(1+\ii\delta)\nu_m\right|^2\ee^{-2\gamma\nu_s'} \le
        C.
    \]
    
    Now we consider the case $1 \le p < \infty$.  The triangle
    inequality, the bound in \eqref{eqn:shell_equation_hat_prime_bound},
    and the definitions of $\nu_s$ and $\nu_m$ for $p\ge 1$ (see
    \eqref{eqn:shell_equation_hat} and \eqref{eqn:matrix_equation_hat},
    respectively), imply that
    \begin{align*}
        \left|\nu_s + (1+\ii\delta)\nu_m\right|\ee^{-\gamma\nu_s'} 
        &\le \left(\left|\nu_s\right| +
            \sqrt{1+\delta^2}\left|\nu_m\right|\right)
            \ee^{-\gamma\sqrt{p^2+1}} \\
        &= \left\{\left[(p^2+1)^2+\delta^2\right]^{1/4} +
            \sqrt{1+\delta^2}\sqrt{p^2-1}\right\} 
            \ee^{-\gamma\sqrt{p^2+1}}.
    \end{align*}
    Applying the bound $(x+y)^r \le x^r + y^r$ for $r = 1/4$ and $r =
    1/2$ to the right-hand side of the above inequality implies
    \begin{equation}\label{eqn:b_upper_bound_big_int}
        \left|\nu_s + (1+\ii\delta)\nu_m\right|\ee^{-\gamma\nu_s'} 
        \le \left[\sqrt{p^2+1}+\sqrt{\delta} +
        (1+\delta)\sqrt{p^2-1}\right]\ee^{-\gamma\sqrt{p^2+1}}.
    \end{equation}
    Squaring the term in brackets on right-hand side of the above
    expression gives
    \begin{equation}\label{eqn:bound_with_Q}
        \left[\sqrt{p^2+1}+\sqrt{\delta} +
        (1+\delta)\sqrt{p^2-1}\right]^2 = 
        \left(\sqrt{p^2+1}+\sqrt{p^2+1}\right)^2 + Q_{\delta}(p),
    \end{equation}
    where
    \begin{align*}
        Q_{\delta}(p)
        &\equiv 2\left(\sqrt{p^2+1}+\sqrt{p^2-1}\right)
        \left(\sqrt{\delta} + \delta\sqrt{p^2-1}\right) +
        \left(\sqrt{\delta}+\delta\sqrt{p^2-1}\right)^2 \\
        &= \sqrt{\delta}\left[2\left(\sqrt{p^2+1}+\sqrt{p^2-1}\right)
        \left(1 + \sqrt{\delta}\sqrt{p^2-1}\right) +
        \sqrt{\delta}\left(1+\sqrt{\delta}\sqrt{p^2-1}\right)^2\right].
        \numberthis\label{eqn:Q_bound_int}
    \end{align*}
    For $p\ge 1$ and $0 < \delta \le 1$, we have the bound
    \[
        1+\sqrt{\delta}\sqrt{p^2-1} \le \sqrt{p^2+1}+\sqrt{p^2-1}.
    \]
    Using this bound in \eqref{eqn:Q_bound_int} gives
    \begin{equation}\label{eqn:Q_bound}
        Q_{\delta}(p) \le \sqrt{\delta}\left(2+\sqrt{\delta}\right) 
        \left(\sqrt{p^2+1}+\sqrt{p^2-1}\right)^2
        \le 3\sqrt{\delta}\left(\sqrt{p^2+1}+\sqrt{p^2-1}\right)^2.
    \end{equation}
    Combining \eqref{eqn:b_upper_bound_big_int},
    \eqref{eqn:bound_with_Q}, and \eqref{eqn:Q_bound} gives us the
    second bound
    in \eqref{eqn:b_upper_bound}.
\end{proof}
We are now ready to give lower bounds on $|g_{\delta}(p;\gamma)|$.
\begin{lemma}\label{lem:abs_g_lower_bound}
    Suppose $\gamma > \gamma_*$; then there exists a $\delta_{\gamma}$
    satisfying $0 < \delta_{\gamma} \le 0.4$ and a positive constant $C$
    such that, for $0 < \delta \le \delta_{\gamma}$,
    \begin{equation}\label{eqn:g_lower_bound}
        |g_{\delta}(p;\gamma)| \ge 
        \begin{cases}
            C &\text{for } 0 \le p \le 1,\\
            \dfrac{|g_0(p;\gamma)|}{2} &\text{for } 1 \le p < \infty.
        \end{cases}
    \end{equation}
\end{lemma}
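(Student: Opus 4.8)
The plan is to combine the reverse triangle inequality \eqref{eqn:g_reverse_triangle} with the two preceding lemmas: Lemma~\ref{lem:g_first_term_lower_bound} bounds the first term in \eqref{eqn:g_reverse_triangle} from below and Lemma~\ref{lem:g_second_term_upper_bound} bounds the second term from above, so their difference is controlled from below. I would treat the ranges $0\le p\le1$ and $1\le p<\infty$ separately, since the two lemmas give qualitatively different bounds on each.

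For $0\le p\le1$ and $0<\delta\le0.4$, Lemma~\ref{lem:g_first_term_lower_bound} gives $|\nu_s-(1+\ii\delta)\nu_m|^2\ge1$, while Lemma~\ref{lem:g_second_term_upper_bound} gives $|\nu_s+(1+\ii\delta)\nu_m|^2\ee^{-2\gamma\nu_s'}\le C_0$ for the constant $0<C_0<1$ produced there. Substituting both into \eqref{eqn:g_reverse_triangle} yields $|g_\delta(p;\gamma)|\ge1-C_0>0$, which is the first line of \eqref{eqn:g_lower_bound} with $C\equiv1-C_0$, valid for any $\delta_\gamma\le0.4$.

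For $1\le p<\infty$ the argument is more delicate. First I would note that $g_0(p;\gamma)>0$ for every $p\ge1$ when $\gamma>\gamma_*$: the function $g_0(\cdot\,;\gamma)$ is real-valued and continuous on $[1,\infty)$, has no real roots there by Lemma~\ref{lem:g_0_roots}, and satisfies $g_0(1;\gamma)=2\bigl(1-\ee^{-2\sqrt2\,\gamma}\bigr)>0$, so positivity propagates by the intermediate value theorem. Next, feeding Lemmas~\ref{lem:g_first_term_lower_bound} and \ref{lem:g_second_term_upper_bound} into \eqref{eqn:g_reverse_triangle} gives, for $0<\delta\le1$,
\begin{equation*}
    |g_\delta(p;\gamma)|\ \ge\ \left(\sqrt{p^2+1}-\sqrt{p^2-1}\right)^2-\left(1+3\sqrt\delta\right)R_\gamma(p)\ =\ g_0(p;\gamma)-3\sqrt\delta\,R_\gamma(p),
\end{equation*}
where $R_\gamma(p)\equiv\bigl(\sqrt{p^2+1}+\sqrt{p^2-1}\bigr)^2\ee^{-2\gamma\sqrt{p^2+1}}$. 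It therefore suffices to show that $\rho_\gamma\equiv\sup_{p\ge1}R_\gamma(p)/g_0(p;\gamma)$ is finite, for then $3\sqrt\delta\,R_\gamma(p)\le3\sqrt\delta\,\rho_\gamma\,g_0(p;\gamma)\le\tfrac12 g_0(p;\gamma)$ as soon as $\delta\le(36\rho_\gamma^2)^{-1}$, whence $|g_\delta(p;\gamma)|\ge\tfrac12 g_0(p;\gamma)=\tfrac12|g_0(p;\gamma)|$. Finiteness of $\rho_\gamma$ follows because $R_\gamma/g_0$ is continuous and positive on $[1,\infty)$, hence bounded on compact subintervals, while as $p\to\infty$ one has, using $\sqrt{p^2+1}-\sqrt{p^2-1}=2/(\sqrt{p^2+1}+\sqrt{p^2-1})$, that $\bigl(\sqrt{p^2+1}-\sqrt{p^2-1}\bigr)^2\sim p^{-2}$ and $R_\gamma(p)\sim4p^2\ee^{-2\gamma p}$, so $g_0(p;\gamma)\sim p^{-2}$ and $R_\gamma(p)/g_0(p;\gamma)\sim4p^4\ee^{-2\gamma p}\to0$. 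Taking $\delta_\gamma\equiv\min\{0.4,(36\rho_\gamma^2)^{-1}\}$ then yields both lines of \eqref{eqn:g_lower_bound} at once.

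I expect the one genuinely delicate point to be the control of $R_\gamma(p)/g_0(p;\gamma)$ as $p\to\infty$: the numerator $R_\gamma(p)$ and the dominant summand of the denominator $g_0(p;\gamma)$ both vanish in the limit, so one must verify that $g_0$ decays only polynomially (like $p^{-2}$) whereas $R_\gamma$ is exponentially small — which is where the strict positivity of $g_0$ for $\gamma>\gamma_*$ coming from Lemma~\ref{lem:g_0_roots} is essential. The remaining steps are routine bookkeeping with the reverse triangle inequality and the bounds already established in Lemmas~\ref{lem:g_first_term_lower_bound} and \ref{lem:g_second_term_upper_bound}.
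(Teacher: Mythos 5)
Your proof is correct and follows essentially the same route as the paper's: the reverse triangle inequality combined with Lemmas~\ref{lem:g_first_term_lower_bound} and \ref{lem:g_second_term_upper_bound}, then positivity of $g_0$ on $[1,\infty)$ from Lemma~\ref{lem:g_0_roots}, and finally absorption of the $\mathcal{O}(\sqrt{\delta})$ error term into $\tfrac12 g_0$. The only difference is presentational: the paper splits $[1,\infty)$ at a threshold $\widetilde{p}_\gamma$ and treats the compact piece by a min/max argument and the tail by exponential decay, whereas you package both into a single finite supremum $\rho_\gamma=\sup_{p\ge1}R_\gamma(p)/g_0(p;\gamma)$, which is a cleaner way to arrive at the same $\delta_\gamma$.
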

\begin{proof}
    If $0 \le p \le 1$ and $0 < \delta \le 0.4$, then
    \eqref{eqn:g_reverse_triangle},
    Lemma~\ref{lem:g_first_term_lower_bound}, and
    Lemma~\ref{lem:g_second_term_upper_bound} imply that there is a
    constant $0 < C < 1$ such that
    \[
        |g_{\delta}(p;\gamma)| \ge 
        \left|\nu_s-(1+\ii\delta)\nu_m\right|^2 -
        \left|\nu_s+(1+\ii\delta)\nu_m\right|^2
        \ee^{-2\gamma\nu_s'}
        \ge 1 - C > 0.
    \]
    This gives us first part of \eqref{eqn:g_lower_bound}.

    We now assume $1 \le p < \infty$.  First, note that $g_0(p;\gamma) >
    0$ for all $p \ge 1$ because it has no zeros by
    Lemma~\ref{lem:g_0_roots} and $g_0(1;\gamma) > 0$.  Then
    \eqref{eqn:g_0},
    \eqref{eqn:g_reverse_triangle},
    Lemma~\ref{lem:g_first_term_lower_bound}, and
    Lemma~\ref{lem:g_second_term_upper_bound} imply that
    \begin{align*}
        |g_{\delta}(p;\gamma)| 
        &\ge \left(\sqrt{p^2+1}-\sqrt{p^2-1}\right)^2 - 
            \left(1+3\sqrt{\delta}\right)
            \left(\sqrt{p^2+1}+\sqrt{p^2-1}\right)^2
            \ee^{-2\gamma\sqrt{p^2+1}}\\
        &= |g_0(p;\gamma)| - 
            3\sqrt{\delta}\left(\sqrt{p^2+1}+\sqrt{p^2-1}\right)^2
            \ee^{-2\gamma\sqrt{p^2+1}}
    \end{align*}
    for all $0 < \delta \le 1$.  Therefore,
    \begin{align}
        |g_{\delta}| - \frac{|g_0|}{2}
        &\ge \frac{|g_0|}{2} - 
            3\sqrt{\delta}\left(\sqrt{p^2+1}+\sqrt{p^2-1}\right)^2
            \ee^{-2\gamma\sqrt{p^2+1}} \label{eqn:small_p} \\
        &= \frac{1}{2}\left(\sqrt{p^2+1}-\sqrt{p^2-1}\right)^2 -
            \left(\frac{1}{2}+3\sqrt{\delta}\right)
            \left(\sqrt{p^2+1}+\sqrt{p^2-1}\right)^2
            \ee^{-2\gamma\sqrt{p^2+1}} \label{eqn:large_p}
    \end{align}
    for all $0 < \delta \le 1$.  From \eqref{eqn:large_p}, for all $0 <
    \delta \le 0.4$ we have
    \begin{equation}\label{eqn:large_p_int}
        |g_{\delta}|-\frac{|g_0|}{2} \ge 
        \frac{1}{2}\left(\sqrt{p^2+1}-\sqrt{p^2-1}\right)^2 -
            \left(\frac{1}{2}+3\sqrt{0.4}\right)
            \left(\sqrt{p^2+1}+\sqrt{p^2-1}\right)^2
            \ee^{-2\gamma\sqrt{p^2+1}}.
    \end{equation}
    Thanks to the exponential decay in the second term on the right-hand
    side of \eqref{eqn:large_p_int}, there exists a
    $\widetilde{p}_{\gamma} \ge 1$ such that the expression on the
    right-hand side of \eqref{eqn:large_p_int} is strictly positive for
    all $p \ge \widetilde{p}_{\gamma}$.  Therefore,
    \begin{equation}\label{eqn:large_p_final_1}
        |g_{\delta}(p;\gamma)| \ge 
        \frac{|g_0(p;\gamma)|}{2} \eqtext{for\ all}
        p \ge \widetilde{p}_{\gamma} \eqtext{and\ all} 0 < \delta \le
        0.4.
    \end{equation}
    
    Finally, we consider $1 \le p \le \widetilde{p}_{\gamma}$.  Because
    both terms on the
    right-hand side of
    \eqref{eqn:small_p} are continuous, we may define
    \[
        m_{\gamma} \equiv \min_{1\le p \le \widetilde{p}_{\gamma}}
            \frac{|g_0(p;\gamma)|}{2}
        \eqtext{and}
        M_{\gamma} \equiv \max_{1\le p \le \widetilde{p}_{\gamma}}
            \left[3\left(\sqrt{p^2+1}+\sqrt{p^2-1}\right)^2
            \ee^{-2\gamma\sqrt{p^2+1}}\right];
    \]
    because $\gamma > \gamma_*$, $m_{\gamma} > 0$ by
    Lemma~\ref{lem:g_0_roots}.  Hence \eqref{eqn:small_p} becomes
    \[
        |g_{\delta}(p;\gamma)|-\frac{|g_0(p;\gamma)|}{2} \ge  
        m_{\gamma} - \sqrt{\delta}M_{\gamma},
    \]
    which is nonnegative if we take 
    \[
        \delta \le \left(\frac{m_{\gamma}}{M_{\gamma}}\right)^2.
    \]
    Therefore
    \begin{equation}\label{eqn:large_p_final_2}
        |g_{\delta}(p;\gamma)| \ge 
        \frac{|g_0(p;\gamma)|}{2} \eqtext{for\ all}
        1 \le p \le \widetilde{p}_{\gamma} \eqtext{and\ all} 
        0 < \delta \le \left(\frac{m_{\gamma}}{M_{\gamma}}\right)^2.
    \end{equation}
    We define $\delta_{\gamma} \equiv \min\{0.4,
    m_{\gamma}^2/M_{\gamma}^2\}$; then \eqref{eqn:large_p_final_1} and
    \eqref{eqn:large_p_final_2} imply that
    \[
        |g_{\delta}(p;\gamma)| \ge \frac{|g_0(p;\gamma)|}{2} 
        \eqtext{for\ all} 1 \le p < \infty \eqtext{and} 
        0 < \delta \le \delta_{\gamma}.
    \]
    This completes the proof.
\end{proof}

Recalling that our ultimate goal is to prove Theorem~\ref{thm:bounded},
in the next lemma we derive upper bounds on $M_{\delta}(p;\gamma)$,
defined in \eqref{eqn:M}.
\begin{lemma}\label{lem:M_upper_bound}
    Suppose $\gamma > \gamma_*$ and $0 < \delta \le \delta_{\gamma}$,
    where $\delta_{\gamma}$ is defined in
    Lemma~\ref{lem:abs_g_lower_bound}.  Then there exists a constant
    $C_{\gamma} > 0$ such that
    \begin{equation}\label{eqn:M_bound}
        \left|M_{\delta}(p;\gamma)\right| \le C_{\gamma}
        \left\{\left[(p^2+1)^2 + 1\right]^{1/4} +
        \sqrt{2}\sqrt{\left|p^2-1\right|}\right\}^2
        \left[\sqrt{(p^2+1)^2 + 1} + p^2 \right]
    \end{equation}
    for all $0 \le p < \infty$ and all $0 < \delta \le \delta_{\gamma}$.
\end{lemma}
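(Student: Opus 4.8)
The plan is to estimate $|M_{\delta}(p;\gamma)|$ directly from \eqref{eqn:M} by the triangle inequality, isolating the two ``size factors'' $|\nu_s-(1+\ii\delta)\nu_m|^2$ and $\bigl(|\nu_s|^2+p^2\bigr)$ — these will reproduce exactly the right-hand side of \eqref{eqn:M_bound} — while showing that the remaining bracketed quantity in \eqref{eqn:M} is bounded by a constant depending only on $\gamma$, uniformly in $0\le p<\infty$ and $0<\delta\le\delta_{\gamma}$. Before doing so I would collect the elementary estimates needed: $1+\delta^2\le 2$; $|\nu_s|=\bigl((p^2+1)^2+\delta^2\bigr)^{1/4}\le\bigl((p^2+1)^2+1\bigr)^{1/4}$ and $|\nu_m|=\sqrt{|p^2-1|}$; the lower bound $\nu_s'\ge\sqrt{p^2+1}\ge 1$ from \eqref{eqn:shell_equation_hat_prime_bound}, which gives $0\le\tfrac{1-\ee^{-2\gamma\nu_s'}}{2\nu_s'}\le\tfrac12$; the bound $\bigl|\tfrac{1-\ee^{-2\ii\gamma\nu_s''}}{2\nu_s''}\bigr|=\tfrac{|\sin(\gamma\nu_s'')|}{|\nu_s''|}\le\gamma$ (using $|\sin t|\le|t|$ and that $\nu_s''$ is real); and $\bigl|{-}|\nu_s|^2+p^2\bigr|\le|\nu_s|^2+p^2$. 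Pulling $\bigl(|\nu_s|^2+p^2\bigr)$ out of the braces, these bounds reduce the braces to at most $\bigl(|\nu_s|^2+p^2\bigr)\bigl[\tfrac12\bigl(1+|R|^2\ee^{-2\gamma\nu_s'}\bigr)+2\gamma\,|R|\,\ee^{-2\gamma\nu_s'}\bigr]$.

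The heart of the argument is to bound $|R|^2\ee^{-2\gamma\nu_s'}$ — and therefore also $|R|\,\ee^{-2\gamma\nu_s'}$ — by a constant depending only on $\gamma$. Since $|\ee^{-2\gamma\nu_s}|=\ee^{-2\gamma\nu_s'}$, the definition of $R$ in \eqref{eqn:alpha_R} gives
\[
    |R|^2\ee^{-2\gamma\nu_s'}
    =\frac{\bigl|\nu_s+(1+\ii\delta)\nu_m\bigr|^2\ee^{-2\gamma\nu_s'}}
          {\bigl|\nu_s-(1+\ii\delta)\nu_m\bigr|^2}.
\]
The numerator is exactly the quantity bounded above in Lemma~\ref{lem:g_second_term_upper_bound}, and the denominator is bounded below in Lemma~\ref{lem:g_first_term_lower_bound}. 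For $0\le p\le 1$ and $0<\delta\le\delta_{\gamma}\le 0.4$ these two lemmas immediately give $|R|^2\ee^{-2\gamma\nu_s'}\le C<1$; for $p\ge 1$ they give, after using $\bigl(\sqrt{p^2+1}-\sqrt{p^2-1}\bigr)^2=4\bigl(\sqrt{p^2+1}+\sqrt{p^2-1}\bigr)^{-2}$ and $\sqrt{p^2+1}+\sqrt{p^2-1}\le 2\sqrt{p^2+1}$,
\[
    |R|^2\ee^{-2\gamma\nu_s'}
    \le\tfrac14\bigl(1+3\sqrt{\delta}\bigr)
        \bigl(\sqrt{p^2+1}+\sqrt{p^2-1}\bigr)^4\ee^{-2\gamma\sqrt{p^2+1}}
    \le 16\,(p^2+1)^2\,\ee^{-2\gamma\sqrt{p^2+1}},
\]
and the last expression — a polynomial in $p$ times a rapidly decaying exponential — is bounded over $p\ge 1$ by a constant $C_{\gamma}$ (e.g.\ $u^4\ee^{-2\gamma u}$ is bounded on $[0,\infty)$). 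Enlarging $C_{\gamma}$ to cover both ranges of $p$, I then get $|R|\,\ee^{-2\gamma\nu_s'}=\bigl(|R|\,\ee^{-\gamma\nu_s'}\bigr)\ee^{-\gamma\nu_s'}\le\sqrt{C_{\gamma}}$, since $\ee^{-\gamma\nu_s'}\le 1$.

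Feeding these estimates back, the braces in \eqref{eqn:M} are at most $\bigl(|\nu_s|^2+p^2\bigr)\bigl[\tfrac12(1+C_{\gamma})+2\gamma\sqrt{C_{\gamma}}\bigr]$, so $|M_{\delta}(p;\gamma)|$ is bounded by a $\gamma$-dependent constant times $|\nu_s-(1+\ii\delta)\nu_m|^2\bigl(|\nu_s|^2+p^2\bigr)$. Finally I would bound the two size factors: $|\nu_s-(1+\ii\delta)\nu_m|^2\le\bigl(|\nu_s|+\sqrt{1+\delta^2}\,|\nu_m|\bigr)^2\le\bigl(\bigl((p^2+1)^2+1\bigr)^{1/4}+\sqrt2\,\sqrt{|p^2-1|}\bigr)^2$ and $|\nu_s|^2+p^2=\sqrt{(p^2+1)^2+\delta^2}+p^2\le\sqrt{(p^2+1)^2+1}+p^2$, which are precisely the two factors on the right-hand side of \eqref{eqn:M_bound}; absorbing all numerical constants into a single $C_{\gamma}$ completes the proof. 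The one genuine obstacle is the uniform control of the $R$-terms: $|R|$ by itself is \emph{not} bounded — it grows polynomially in $p$ as $p\to\infty$, and $\nu_s''\to 0$ as $\delta\to 0^+$ — so it is essential never to separate $|R|$ from its exponential factor and to invoke Lemmas~\ref{lem:g_second_term_upper_bound} and \ref{lem:g_first_term_lower_bound} (which is where the hypothesis $\gamma>\gamma_*$ enters) rather than trying to estimate $|R|$ on its own.
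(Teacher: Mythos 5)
Your proof is correct and takes essentially the same approach as the paper's: decompose \eqref{eqn:M} by the triangle inequality, control the $R$-dependent terms by combining Lemmas~\ref{lem:g_first_term_lower_bound} and \ref{lem:g_second_term_upper_bound} (never estimating $|R|$ in isolation), and let the two size factors $|\nu_s-(1+\ii\delta)\nu_m|^2$ and $|\nu_s|^2+p^2$ produce the right-hand side of \eqref{eqn:M_bound}. The only cosmetic difference is that you bound $\frac{1-\ee^{-2\gamma\nu_s'}}{2\nu_s'}\le\frac{1}{2}$ and $\left|\frac{1-\ee^{-2\ii\gamma\nu_s''}}{2\nu_s''}\right|\le\gamma$ by elementary inequalities, whereas the paper appeals to continuity and decay at infinity for those factors.
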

\begin{proof}
    By the triangle inequality, we may derive upper bounds on each term
    of $|M_{\delta}(p;\gamma)|$ individually.
    \begin{enumerate}
        \item For the term outside of the braces in \eqref{eqn:M}, we 
            have
            \begin{align*}
                \left|\nu_s-(1+\ii\delta)\nu_m\right|
                &\le |\nu_s| + |1+\ii\delta||\nu_m| \\
                &= \left[(p^2+1)^2 + \delta^2\right]^{1/4} +
                \sqrt{1+\delta^2}\sqrt{|p^2-1|} \\
                &\le \left[(p^2+1)^2 + \delta_{\gamma}^2\right]^{1/4} +
                \sqrt{1+\delta_{\gamma}^2}\sqrt{\left|p^2-1\right|}.
            \end{align*}
            Therefore, for all $p \ge 0$ and all $0 < \delta \le
            \delta_{\gamma} < 1$, we have
            \begin{align*}
                \left|\nu_s-(1+\ii\delta)\nu_m\right|^2 
                &\le
                \left\{\left[(p^2+1)^2 + \delta_{\gamma}^2\right]^{1/4}
                + \sqrt{1+\delta_{\gamma}^2}
                \sqrt{\left|p^2-1\right|}\right\}^2 \\
                &\le 
                \left\{\left[(p^2+1)^2 + 1\right]^{1/4}
                + \sqrt{2}\sqrt{\left|p^2-1\right|}\right\}^2.
                \numberthis\label{eqn:M_zeroth_term}
            \end{align*}
        \item For the first term in brackets in \eqref{eqn:M}, we have,
            thanks to \eqref{eqn:shell_equation_hat_prime_bound},
            \begin{align*}
                \left|\frac{1-\ee^{-2\gamma\nu_s'}}{2\nu_s'}\right|
                &= \frac{1-\ee^{-2\gamma\nu_s'}}{2\nu_s'} \\
                &\le \frac{1-\ee^{-2\gamma\sqrt{p^2+1}}}{2\sqrt{p^2+1}}.
            \end{align*}
            Because the above function is continuous for $p \in
            [0,\infty)$ and tends to $0$ as $p\rightarrow \infty$, it
            attains its maximum value on $[0,\infty)$.  Thus there is a
            constant $C_{\gamma} > 0$ such that 
            \begin{equation}\label{eqn:M_first_brackets}
                \left|\frac{1-\ee^{-2\gamma\nu_s'}}{2\nu_s'}\right| 
                \le C_{\gamma}.
            \end{equation}
        \item Using our result from item (2) and \eqref{eqn:alpha_R}, we
            find that the second term in brackets in \eqref{eqn:M}
            satisfies
            \begin{equation*}
                |R|^2\ee^{-2\gamma\nu_s'}
                \left|\frac{1-\ee^{-2\gamma\nu_s'}}{2\nu_s'}\right| 
                \le C_{\gamma}
                \frac{\left|\nu_s+(1+\ii\delta)\nu_m\right|^2
                \ee^{-2\gamma\nu_s'}}
                {\left|\nu_s-(1+\ii\delta)\nu_m\right|^2},
            \end{equation*}
            where $C_{\gamma}$ is the constant from
            \eqref{eqn:M_first_brackets}.  Applying the bounds from
            \eqref{eqn:a_lower_bound} and \eqref{eqn:b_upper_bound} as
            well as the bounds $\gamma > \gamma_*$ and $\delta \le
            \delta_{\gamma} < 1$ to the above expression gives
            \begin{equation}\label{eqn:R_int}
                |R|^2\ee^{-2\gamma\nu_s'}
                \left|\frac{1-\ee^{-2\gamma\nu_s'}}{2\nu_s'}\right| 
                \le  
                \begin{cases}
                    C_{\gamma} &\text{for } 0 \le p \le 1,\\
                    C_{\gamma}\dfrac{4
                    \left(\sqrt{p^2+1}+\sqrt{p^2-1}\right)^2
                    \ee^{-2\gamma_*\sqrt{p^2+1}}} 
                    {\left(\sqrt{p^2+1}-\sqrt{p^2-1}\right)^2}
                    &\text{for } 1 \le p < \infty.
                \end{cases}
            \end{equation}
            The function on the right-hand side of the above inequality
            is continuous as a function of $p$ for $p \in [1,\infty)$
            and decays to $0$ as $p\rightarrow \infty$.  Thus it
            attains its maximum value on $[1,\infty)$ (this maximum
            value is \emph{independent} of $\gamma$); this and
            \eqref{eqn:R_int} imply that there is a 
            constant $C_{\gamma} > 0$ such that
            \begin{equation}\label{eqn:M_second_brackets}
                |R|^2\ee^{-2\gamma\nu_s'}
                \left|\frac{1-\ee^{-2\gamma\nu_s'}}{2\nu_s'}\right| 
                \le C_{\gamma} 
            \end{equation}
            for all $p \ge 0$ and all $0 < \delta \le \delta_{\gamma}$.
        \item For the last term in \eqref{eqn:M}, we have
            \begin{equation}\label{eqn:M_last_int}
                \left|\ee^{-2\gamma\nu_s'}
                \imag\left[\overline{R}\ee^{2\ii\gamma\nu_s''}
                \left(\frac{1-\ee^{-2\ii\gamma\nu_s''}}{2\nu_s''}
                \right)\right]\right| \le 
                \ee^{-2\gamma\nu_s'}|R|
                \left|\frac{1-\ee^{-2\ii\gamma\nu_s''}}{2\nu_s''}
                \right|.
            \end{equation}
            Arguments similar to those in item (3) can be used to show
            that there is a positive constant $C$ such that
            \begin{equation}\label{eqn:M_last_int_1}
                |R|\ee^{-\gamma\nu_s'} \le C.
            \end{equation}
            Because 
            \[
                \nu_s'' = \frac{\delta}{2\nu_s'},
            \] 
            the function
            \[
                \ee^{-\gamma\nu_s'}
                \left|\frac{1-\ee^{-2\gamma\ii\nu_s''}}{2\nu_s''}\right|
            \]
            is continuous for $0 \le \delta \le \delta_{\gamma}$ and $0
            \le p < \infty$ (after modification at $\delta = 0$).
            Moreover, this function goes to $0$ as $p\rightarrow
            \infty$, so it attains its maximum value.  This implies that
            there is a constant $C_{\gamma} > 0$ such that
            \begin{equation}\label{eqn:M_last_int_2}
                \ee^{-\gamma\nu_s'}
                \left|\frac{1-\ee^{-2\gamma\ii\nu_s''}}{2\nu_s''}\right|
                \le C_{\gamma}.
            \end{equation}
            Inserting \eqref{eqn:M_last_int_1} and
            \eqref{eqn:M_last_int_2} into \eqref{eqn:M_last_int} implies
            that there is a constant $C_{\gamma} > 0$ such that
            \begin{equation}\label{eqn:M_last}
                \left|\ee^{-2\gamma\nu_s'}
                \imag\left[\overline{R}\ee^{2\ii\gamma\nu_s''}
                \left(\frac{1-\ee^{-2\ii\gamma\nu_s''}}{2\nu_s''}
                \right)\right]\right| \le 
                \ee^{-2\gamma\nu_s'}|R|
                \left|\frac{1-\ee^{-2\ii\gamma\nu_s''}}{2\nu_s''}
                \right| \le C_{\gamma}
            \end{equation}
            for all $0 \le p < \infty$ and all $0 < \delta \le
            \delta_{\gamma}$.
    \end{enumerate}
    Using the bound 
    \[
        \left|\pm\left|\nu_s\right|^2 + p^2\right| \le
        \left|\nu_s\right|^2 + p^2 \le \sqrt{(p^2+1)^2 + 1} +
        p^2,
    \]
    which holds for all $p \ge 1$ and all $\delta \le \delta_{\gamma} <
    1$, as well as the bounds \eqref{eqn:M_zeroth_term},
    \eqref{eqn:M_first_brackets}, \eqref{eqn:M_second_brackets}, and
    \eqref{eqn:M_last} in \eqref{eqn:M} gives 
    \[
        \left|M_{\delta}(p;\gamma)\right| \le C_{\gamma}
        \left\{\left[(p^2+1)^2 + 1\right]^{1/4} +
                \sqrt{2}\sqrt{\left|p^2-1\right|}\right\}^2
        \left[\sqrt{(p^2+1)^2 + 1} + p^2 \right],
    \]
    for some positive constant $C_{\gamma}$.
\end{proof}

We are finally ready to complete the proof of Theorem~\ref{thm:bounded}.
First, we split the integral in \eqref{eqn:pd_a} to obtain
\[
    \int_0^{\infty} L_{\delta}(p;\gamma)\di{p} = 
    \int_0^1 L_{\delta}(p;\gamma)\di{p} + 
    \int_1^{\infty} L_{\delta}(p;\gamma)\di{p}.
\]
We will focus on each integral separately.  For $0 \le p \le 1$, we use
the bounds from Lemma~\ref{lem:abs_g_lower_bound}, \eqref{eqn:M_bound},
and \eqref{eqn:I_p_bounds} in \eqref{eqn:L} to obtain the bound
\begin{equation}\label{eqn:L_small_p_bound}
    L_{\delta}(p;\gamma) \le 
        C_{\gamma}\left\{\left[(p^2+1)^2 + 1\right]^{1/4} +
        \sqrt{2}\sqrt{\left|p^2-1\right|}\right\}^2
        \left[\sqrt{(p^2+1)^2 + 1} + p^2 \right],
\end{equation}
which holds for all $0 \le p \le 1$ and $0 < \delta \le
\delta_{\gamma}$.  The function on the right-hand
side of the above inequality is continuous for $p \in [0,1]$, so it
attains its maximum value. Thus there
is a positive constant $C_{\gamma}$ such that
\begin{equation}\label{eqn:L_int_small_p_bound}
    \int_0^1 L_{\delta}(p;\gamma) \le C_{\gamma} 
    \eqtext{for\ all} 0 \le p \le 1
    \eqtext{and\ all} 0 < \delta \le \delta_{\gamma}.
\end{equation}

Similarly, for $p \ge 1$, the bounds from
Lemmas~\ref{lem:abs_g_lower_bound}, \eqref{eqn:M_bound}, and
\eqref{eqn:I_p_bounds} imply that there is a positive constant
$C_{\gamma}$ such that 
\begin{equation}\label{eqn:L_large_p_bound}
    L_{\delta}(p;\gamma) \le 
        C_{\gamma}\dfrac{\ee^{-2\gamma\left(\frac{d_0}{a}-1\right)
        \sqrt{p^2-1}}}
        {|g_0(p;\gamma)|^2}\left\{\left[(p^2+1)^2 + 1\right]^{1/4} +
        \sqrt{2}\sqrt{\left|p^2-1\right|}\right\}^2
        \left[\sqrt{(p^2+1)^2 + 1} + p^2 \right];
\end{equation}
this bound holds for all $1 \le p < \infty$ and all $0 < \delta \le
\delta_{\gamma}$.  Note from \eqref{eqn:g_0} that $|g_0| \rightarrow 0$
as $p\rightarrow \infty$ at an algebraic (i.e., nonexponential) rate.
Because $d_0/a > 1$, the exponential term in the numerator in
\eqref{eqn:L_large_p_bound} defeats the nonexponential terms in the
braces and in $|g_0|^2$; in other words, there is a positive constant
$C_{\gamma} > 0$ such that 
\begin{equation}\label{eqn:L_int_large_p_bound}
    \int_1^{\infty} L_{\delta}(p;\gamma) \le C_{\gamma} 
    \eqtext{for\ all} 1 \le p < \infty
    \eqtext{and\ all} 0 < \delta \le \delta_{\gamma}.
\end{equation}
Using \eqref{eqn:L_int_small_p_bound} and
\eqref{eqn:L_int_large_p_bound} in \eqref{eqn:pd_a} gives
\[
    E_{\delta}(a) \le C_{\gamma} \eqtext{for \ all} 0 < \delta \le
    \delta_{\gamma}.
\]
This completes the proof of the theorem.


\subsection{Proof of Theorem~\ref{thm:shielding}}
\label{subsec:proof_of_thm_shielding}

We begin by taking care of an important technicality.  Recall from
\S~\ref{subsec:proof_of_thm_bounded} that $\widetilde{p}_{\gamma}$ was
chosen so that the expression on the right-hand side of
\eqref{eqn:large_p_int} is strictly positive for $p \ge
\widetilde{p}_{\gamma}$.  Because of this choice, $\delta_{\gamma}$
depends on $\gamma$ in a nontrivial way --- see
\eqref{eqn:large_p_final_2}.  However, there is a $\widetilde{\gamma} >
0$ such that, if $\gamma \ge \widetilde{\gamma}$, the right-hand side of
\eqref{eqn:large_p_int} is positive for all $p \ge 1$ (thanks to the
exponential decay in $\gamma$ of the last term in
\eqref{eqn:large_p_int}).  Thus, for $\gamma \ge \widetilde{\gamma}$,
\eqref{eqn:large_p_int} immediately implies that there is a $\delta_0$,
\emph{independent} of $\gamma$, such that $0 < \delta_0 \le 0.4$ and
\[
    |g_{\delta}(p;\gamma)| \ge \frac{g_0(p;\gamma)}{2}
\]
for all $p \ge 1$, all $0 < \delta \le \delta_0$, and all $\gamma \ge
\widetilde{\gamma}$.  To visualize this, in Figure~\ref{fig:large_p_int}
we plot the expression on the right-hand side of \eqref{eqn:large_p_int}
as a function of $p$ for different values of $\gamma$.
Figure~\ref{fig:large_p_int}(a) is a plot over the interval $1 \le p \le
7$ while Figure~\ref{fig:large_p_int}(b) is a plot over the interval $7
\le p \le 10$; in addition the blue, dashed curve is for $\gamma =
\gamma_*$, the red, dotted curve is for $\gamma = \frac{3}{2}\gamma_*$,
and the yellow, solid curve is for $\gamma = 2\gamma_*$ (the $\gamma =
\frac{3}{2}\gamma_*$ and $\gamma = 2\gamma_*$ curves overlap in
Figure~\ref{fig:large_p_int}(b).  In Figure~\ref{fig:large_p_int}(a), we
note that the expression is negative for some values of $p$ if $\gamma =
\gamma_*$; however, if $\gamma = \frac{3}{2}\gamma_*$ or $\gamma =
2\gamma_*$, the curve is always positive.  Therefore, for the remainder
of this proof, we will assume $0 < \delta \le \delta_0$ and $\gamma \ge
2\gamma_*$.
\begin{figure}[!hbp]
    \begin{center}
        \begin{tabular}{c c}
            \includegraphics[width=0.45\textwidth]{./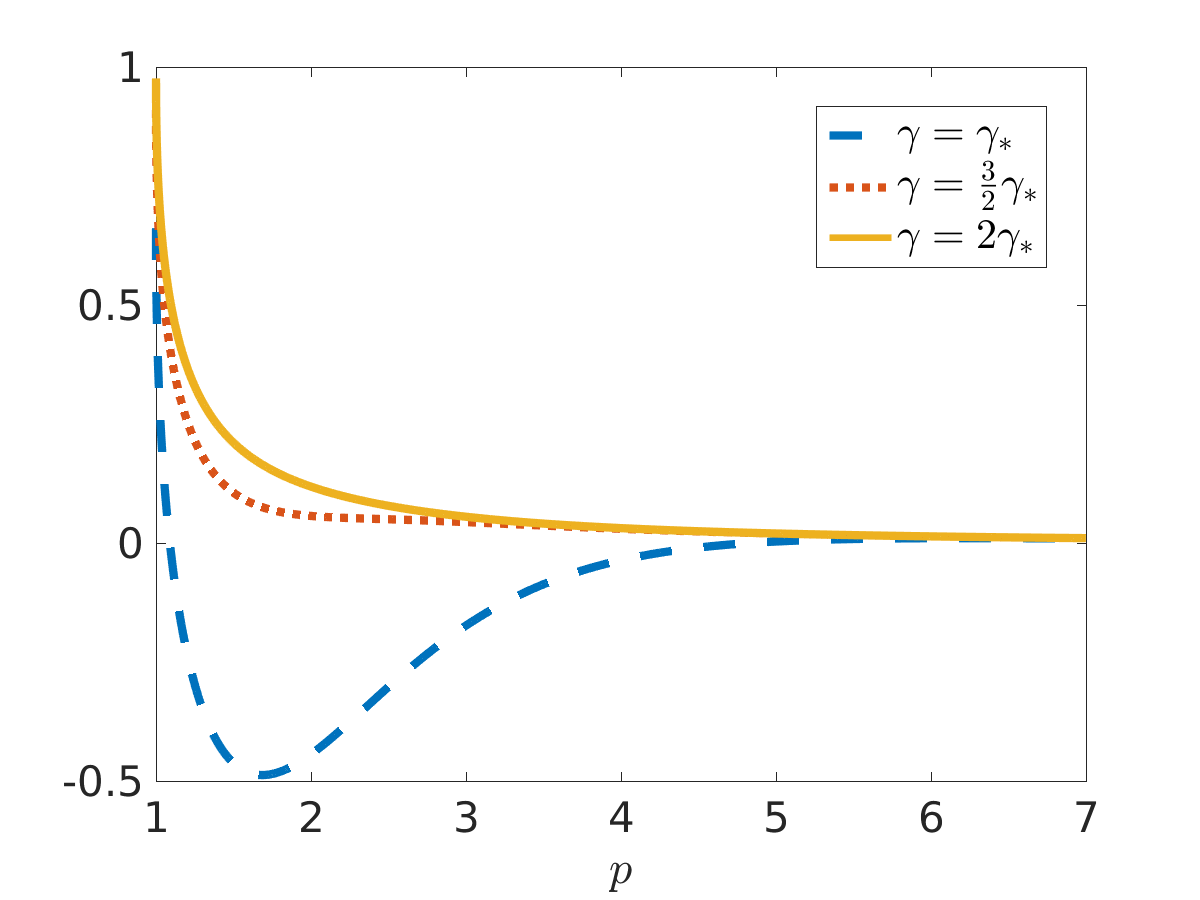} &
            \includegraphics[width=0.45\textwidth]{./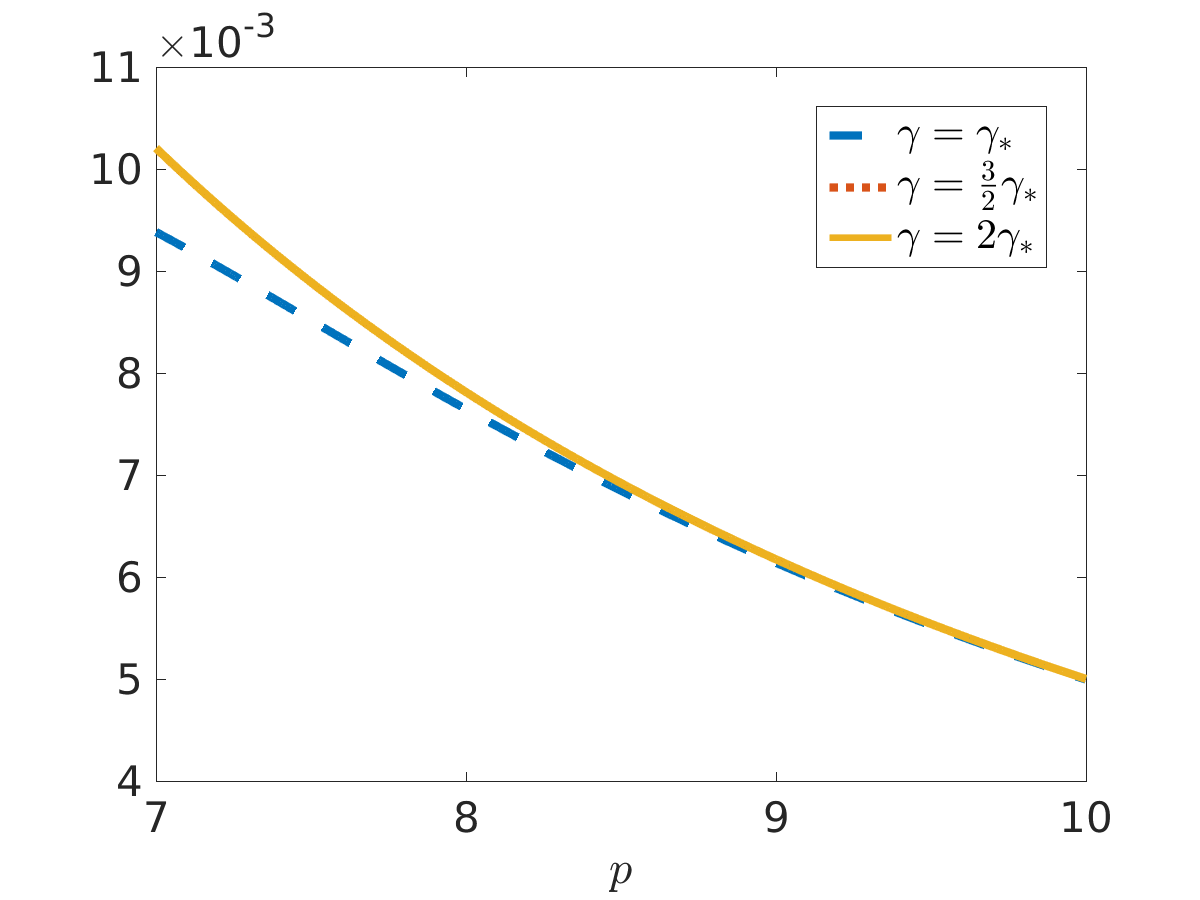}
            \\
            (a) & (b)
        \end{tabular}
        \caption{\emph{This figure contains a plot of the expression on
                the right-hand side of \eqref{eqn:large_p_int}.  In both
                plots, the blue, dashed curve corresponds to $\gamma =
                \gamma_*$, the red, dotted curve corresponds to $\gamma
                = \frac{3}{2}\gamma_*$, and the yellow, solid curve
                corresponds to $\gamma = 2\gamma_*$.  In particular, we
                have plotted the expression in \eqref{eqn:large_p_int}
                over the intervals (a) $1 \le p \le 7$ and (b) $7 \le p
                \le 10$.}}
        \label{fig:large_p_int}
    \end{center}
\end{figure}

The Fourier inversion theorem, the triangle inequality, and
\eqref{eqn:Vc_hat} imply that
\[
    \left|V_c(x,y)\right| = \left|\frac{1}{2\pi}\int_{-\infty}^{\infty}
    A_q \ee^{k_0\nu_c x} \ee^{\ii q y} \di{q}\right| \le 
    \frac{1}{2\pi} \int_{-\infty}^{\infty}
    \left|A_q\right|\ee^{k_0\nu_c' x} \di{q} = 
    \frac{1}{\pi}\int_{0}^{\infty}
    \left|A_q\right|\ee^{k_0\nu_c' x} \di{q}.
\]

Using \eqref{eqn:psi}--\eqref{eqn:Aq} and \eqref{eqn:g} and making the
change of variables $p = q/k_0$ in the above integral gives
\begin{equation}\label{eqn:Vc_bound}
    |V_c(x,y)| \le \frac{2|1+\ii\delta|}{\pi}\int_0^{\infty}
    \frac{|I_p||\nu_s| \ee^{\gamma\nu_m'}
    \ee^{-\gamma\nu_s'}\ee^{k_0\nu_c'x}} {|g_{\delta}(p;\gamma)|} 
    \di{p}.
\end{equation}
\emph{Case 1:} $0 \le p \le 1$.  In this case, the integral in
\eqref{eqn:Vc_bound} (restricted to $0 \le p \le 1$) is
\[
    \int_0^1 \frac{|I_p||\nu_s|\ee^{-\gamma\nu_s'}}
    {|g_{\delta}(p;\gamma)|} \di{p}.
\]
For $\gamma \ge 2\gamma_*$ and $0 < \delta \le \delta_0 \le 1$,
\eqref{eqn:shell_equation_hat_prime_bound} and Lemmas~\ref{lem:abs_g_lower_bound} and
\ref{lem:I_p_upper_bound} imply that there is a constant $C > 0$ such
that the above integral is less than or equal to
\begin{align*}
    C\int_0^1 |\nu_s|\ee^{-\gamma\nu_s'}\di{p} 
    &\le
    C\int_0^1
    \left[(p^2+1)^2+1\right]^{1/4}\ee^{-\gamma\sqrt{p^2+1}}\di{p} \\
    &= C\int_0^1  
    \left[(p^2+1)^2+1\right]^{1/4}
    \ee^{-\eta\gamma\sqrt{p^2+1}}
    \ee^{-(1-\eta)\gamma\sqrt{p^2+1}}\di{p} \\
    &\le C\ee^{-\eta\gamma} 
    \int_0^1  
    \left[(p^2+1)^2+1\right]^{1/4}
    \ee^{-(1-\eta)2\gamma*\sqrt{p^2+1}}\di{p}.
    \numberthis\label{eqn:shielding_small_int}
\end{align*}
Because the integrand in \eqref{eqn:shielding_small_int} is continuous,
it is bounded above by a constant (independent of $p$, $\delta$, and
$\gamma$).  Thus \eqref{eqn:shielding_small_int} implies that there is
$C_{\eta} > 0$ such that
\begin{equation}\label{eqn:shielding_small_p}
    \int_0^1 \frac{|I_p||\nu_s|\ee^{-\gamma\nu_s'}}{|g|} \di{p} 
    \le C_{\eta}\ee^{-\eta\gamma}.
\end{equation}

\emph{Case 2:} $1 \le p < \infty$.  In this case, the integral in
\eqref{eqn:Vc_bound} (restricted to $1 \le p < \infty$) is
\[
    \int_1^{\infty}
    \frac{|I_p||\nu_s| \ee^{\gamma\sqrt{p^2-1}}
    \ee^{-\gamma\nu_s'}\ee^{k_0\sqrt{p^2-1}x}} {|g|} \di{p}.
\]
For $\gamma \ge 2\gamma_*$ and $0 < \delta \le \delta_0 \le 1$,
\eqref{eqn:shell_equation_hat_prime_bound} and
Lemmas~\ref{lem:abs_g_lower_bound} and \ref{lem:I_p_upper_bound} imply
that there is a constant $C > 0$ such that the above integral is less
than or equal to
\[
    C\int_1^{\infty}\frac{\left[(p^2+1)^2+1\right]^{1/4}
        \ee^{-\gamma\left(\frac{d_0}{a}-1\right)
        \sqrt{p^2-1}}\ee^{-\gamma\sqrt{p^2+1}}\ee^{k_0\sqrt{p^2-1}x}}
        {|g_0(p;\gamma)|}\di{p}.
\]
From \eqref{eqn:g_0}, the denominator of the above integrand is an
increasing function of $\gamma$.  Together with the fact that all of the
exponential terms in the numerator attain their maximum values at $p = 1$,
this implies, for $\gamma \ge 2\gamma_*$, that the
above integral is bounded above by
\begin{align*}
    C\int_1^{\infty}\frac{\left[(p^2+1)^2+\delta^2\right]^{1/4}
        \ee^{-\gamma\sqrt{p^2+1}}}
        {|g_0(p;2\gamma_*)|}\di{p} 
    &\le C
        \int_1^{\infty}\frac{\left[(p^2+1)^2+1\right]^{1/4}
        \ee^{-\gamma\sqrt{p^2+1}}}
        {|g_0(p;2\gamma_*)|}\di{p} \\
    &= C\int_1^{\infty}\frac{\left[(p^2+1)^2+1\right]^{1/4}
        \ee^{-\eta\gamma\sqrt{p^2+1}}
        \ee^{-(1-\eta)\gamma\sqrt{p^2+1}}}
        {|g_0(p;2\gamma_*)|}\di{p} \\
    &\le C\ee^{-\sqrt{2}\eta\gamma}
        \int_1^{\infty}\frac{\left[(p^2+1)^2+1\right]^{1/4}
        \ee^{-(1-\eta)2\gamma_*\sqrt{p^2+1}}}
        {|g_0(p;2\gamma_*)|}\di{p}.
\end{align*}
Because $|g_0(p;2\gamma_*)|$ has no roots (by
Lemma~\ref{lem:g_0_roots}) and tends to $0$ as $p\rightarrow\infty$ only
algebraically, the above integral converges.  This implies that
\begin{equation}\label{eqn:shielding_large_p}
    \int_1^{\infty}
    \frac{|I_p||\nu_s| \ee^{\gamma\nu_m'}
    \ee^{-\gamma\nu_s'}\ee^{k_0\nu_c'x}} {|g|} \di{p} 
    \le C_{\eta}\ee^{-\sqrt{2}\eta\gamma}
\end{equation}
for some $C_{\eta} > 0$.  Using \eqref{eqn:shielding_small_p} and
\eqref{eqn:shielding_large_p} in \eqref{eqn:Vc_bound} gives 
\begin{equation}\label{eqn:V_c_bound_final}
    |V_c(x,y)| \le C_{\eta}\ee^{-\eta\gamma} = C_{\eta}\ee^{-\eta k_0a};
\end{equation}
this bound holds for all $0 < \eta < 1$, $\gamma \ge 2\gamma_*$, all $0
< \delta \le \delta_0$, and all $(x,y) \in \mathcal{C}$.  Thus $V_c(x,y)$
goes to $0$ exponentially as $k_0\rightarrow\infty$.  This completes the
proof of the theorem.


\subsection{Derivation of Helmholtz equation from Maxwell equations}
\label{subsec:Maxwell_to_Helmholtz}

When no charge source is present, the Maxwell equations are
\begin{equation}
    \left\{
    \begin{aligned}
        & \nabla\cdot\mathbf{D} = 0, \quad && \nabla\times\mathbf{E} =
            -\frac{\partial\mathbf{B}}{\partial t}, \\
        & \nabla\cdot\mathbf{B} = 0, \quad && \nabla\times\mathbf{H} =
        \frac{\partial\mathbf{D}}{\partial t} + \mathbf{J}.
    \end{aligned}
    \right.
\end{equation}
In linear media the relevant fields satisfy the constitutive relations
\begin{equation*}
    \mathbf{D} = \ve\mathbf{E} \eqtext{and} \mathbf{B} = \mu\mathbf{H}.
\end{equation*}
By taking the divergence of the Amp\'ere Law with
the Maxwell correction (the fourth equation) and utilizing the Gauss Law
(the first equation),
we find that the current $\mathbf{J}$ must satisfy the continuity
equation, namely
\begin{equation}\label{eqn:continuity}
    \nabla\cdot\mathbf{J} = 0.
\end{equation}
We then take the curl of the Amp\'ere Law with the Maxwell correction
and apply a vector identity to obtain
\[
    \nabla\left(\nabla\cdot\mathbf{H}\right) - \Delta\mathbf{H} =
    \frac{\partial \left(\nabla\times\mathbf{D}\right)}{\partial t} +
    \nabla \times \mathbf{J}.
\]
In an isotropic and homogeneous medium (where $\ve$ and $\mu$ are
constant scalars), the above equation becomes
\[
    \frac{1}{\mu}\nabla\left(\nabla\cdot\mathbf{B}\right) -
    \frac{1}{\mu}\Delta\mathbf{B} = 
    \ve\frac{\partial \left(\nabla\times\mathbf{E}\right)}{\partial t} +
    \nabla \times \mathbf{J}.
\]
Utilizing the Faraday Law in combination with the fact that $\mathbf{B}$
is divergence free, we obtain
\begin{equation}\label{eqn:general_B}
    -\frac{1}{\mu}\Delta\mathbf{B} =
    -\ve\frac{\partial^2\mathbf{B}}{\partial t^2} +
    \nabla\times\mathbf{J}.
\end{equation}
Finally, we assume that all fields have harmonic time-dependence of the
form $\ee^{\ii\omega t}$; in particular, we assume that $\mathbf{B} =
\widetilde{\mathbf{B}}(\mathbf{x})\ee^{\ii\omega t}$ and $\mathbf{J} =
\widetilde{\mathbf{J}}(\mathbf{x})\ee^{\ii\omega t}$.  We also define
\[
    k_0 = \frac{\omega}{c} = \omega\sqrt{\varepsilon_0\mu_0}.
\]
Then, thanks to \eqref{eqn:general_B}, we find that
$\widetilde{\mathbf{B}}$ satisfies
\begin{equation}\label{eqn:B_hat}
    \Delta\widetilde{\mathbf{B}} + k_0^2\ve_r\mu_r\widetilde{\mathbf{B}} =
    -\mu_0\mu_r\nabla\times\widetilde{\mathbf{J}}, 
\end{equation}
where $\ve_r = \ve/\ve_0$ and $\mu_r = \mu/\mu_0$.  Thus each component
of $\widetilde{\mathbf{B}}$ satisfies a 3D Helmholtz equation.


\subsubsection{2D Helmholtz equation}\label{subsubsec:2D_Helmholtz}

We now assume that the current source $\widetilde{\mathbf{J}}$ is a line
current of the form
\begin{equation}\label{eqn:special_J}
    \widetilde{\mathbf{J}} = \widetilde{J}_x(x,y) \mathbf{e}_x +
    \widetilde{J}_y(x,y)\mathbf{e}_y,
\end{equation}
where, because $\widetilde{\mathbf{J}}$ must satisfy
\eqref{eqn:continuity}, 
\begin{equation}\label{eqn:special_continuity}
    \frac{\partial \widetilde{J}_x}{\partial x} + \frac{\partial
    \widetilde{J}_y}{\partial y} = 0.
\end{equation}
By symmetry, none of the fields will depend on $z$.  In addition, we
assume that we are dealing with nonmagnetic materials for which $\mu_r =
1$.  Thus \eqref{eqn:B_hat} and \eqref{eqn:special_J}
imply that the $z$-component of $\widetilde{\mathbf{B}}$ satisfies
\begin{equation}\label{eqn:almost_B_z}
    \frac{\partial^2 \widetilde{B}_z}{\partial x^2} + \frac{\partial^2
    \widetilde{B}_z}{\partial y^2} + k_0^2\ve_r \widetilde{B}_z =
    -\mu_0\left(\frac{\partial \widetilde{J}_y}{\partial x} -
    \frac{\partial \widetilde{J}_x}{\partial y}\right).
\end{equation}
Finally, the Maxwell equations can be used to show that $\widetilde{B}_z$
and $\ve_r^{-1}\frac{\partial \widetilde{B}_z}{\partial x}$ must be
continuous across the boundaries of the slab at $x = 0$ and $x = a$
\cite{Griffiths:1999:ITE}.  Then \eqref{eqn:almost_B_z} can be written in
divergence form as
\begin{equation}\label{eqn:final_B_z}
    \nabla\cdot\left(\frac{1}{\ve_r}\nabla \widetilde{B}_z\right) +
    k_0^2\widetilde{B}_z = -f,
\end{equation}
where 
\begin{equation*}
    f \equiv \mu_0\left(\frac{\partial \widetilde{J}_y}{\partial x} -
    \frac{\partial \widetilde{J}_x}{\partial y}\right).
\end{equation*}

\section*{Conflicts of Interest}

The authors declare that there are no conflicts of interest regarding
the publication of this article.

\bibliographystyle{plain}
\bibliography{Helmholtz_paper}

\end{document}